\newcommand{\nc}{\newcommand}
\nc{\ben}{\begin{eqnarray}}
\nc{\een}{\end{eqnarray}}
\newcommand{\beqa}{\begin{eqnarray}}
\newcommand{\eeqa}{\end{eqnarray}}
\nc{\Z}{{\bold Z}}
\newcommand{\fpt}[7]{{}_4\phi_3\left[\begin{matrix} #1 , #2, #3, #4 \\
#5, #6, #7 \end{matrix}\,; q^2,q^2\right]}
\newtheorem{cor}{Corollary}[section]
\newtheorem{lem}{Lemma}[section]
\newtheorem{prop}{Proposition}[section]
\newtheorem{example}{Example}
\newtheorem{defn}{Definition}[section]
\newtheorem{thm}{Theorem}
\newtheorem{rem}{Remark}
\newtheorem{hyp}{Hypothesis}
\newcommand{\cal}{\mathcal}
\newcommand{\tA}{\textsf{A}}
\newcommand{\tW}{\textsf{W}}
\newcommand{\tb}{\mathsf{b}} 
\newcommand{\tc}{\mathsf{c}}
\newcommand{\cV}{\mathcal{V}}
\newcommand{\non}{\nonumber}
\numberwithin{equation}{section}
\begin{document}

\title[The $q$-Racah polynomials from scalar products of Bethe states II
]{The $q$-Racah polynomials from scalar products of Bethe states II
}
\author{Pascal Baseilhac$^{*}$}
\address{$^*$ Institut Denis-Poisson CNRS/UMR 7013 - Universit\'e de Tours - Universit\'e d'Orl\'eans
Parc de Grammont, 37200 Tours, 
FRANCE}
\email{pascal.baseilhac@univ-tours.fr}

\author{Rodrigo A. Pimenta$^{**}$}
\address{$^{**}$ Instituto de F\'isica - Universidade Federal do Rio Grande do Sul, Porto Alegre, 91501-970, BRAZIL} 
\email{rodrigo.pimenta@ufrgs.br}

\begin{abstract} The theory of Leonard triples is applied to the derivation of normalized scalar products of on-shell and off-shell Bethe states generated from  a Leonard pair. The scalar products take the form of linear combinations  of $q$-Racah polynomials with coefficients depending on the off-shell parameters. Upon specializations, explicit solutions for the corresponding Belliard-Slavnov linear systems are obtained. It implies  the existence of a determinant formula in terms of  inhomogeneous Bethe roots for the $q$-Racah polynomials. Also, a set of relations that   determines   solutions (Bethe roots) of the corresponding Bethe equations of inhomogeneous type in terms of solutions of Bethe equations of homogenous type is obtained.
\end{abstract}

\maketitle

	\vskip -0.5cm
	
	{\small MSC:\  33D45; 81R50; \ 81U15.}

	{{\small  {\it \bf Keywords}: Askey-Wilson algebra; Leonard pairs;  Orthogonal polynomials; Bethe ansatz}}

\section{Introduction}
In a recent series of papers \cite{BP19,Nico1,Nico2,BP22}, a correspondence between the representation theory of the Askey-Wilson  algebra \cite{T87,Z91} or  Racah  algebra \cite{LL,GZ88} and the framework of the algebraic Bethe ansatz \cite{STF79,Skly88}  has been investigated. More precisely,  this correspondence aims to relate  two subjects that have been independently  developed in the context of representation theory and  mathematical physics, respectively. 
On one hand, irreducible finite dimensional representations of the Askey-Wilson or Racah algebras have been extensively studied in recent years. They are  classified using the theory of Leonard pairs \cite{T04,Ter04}. In this context, it is shown that 
the entries of the transition matrices relating different eigenbases of elements of a Leonard pair correspond to discrete orthogonal polynomials that belong to the Askey-scheme \cite{T03}.  
On the other hand, the framework of the algebraic Bethe ansatz has a long history (for a review and references therein, see e.g. \cite{S22}) and provides an efficient approach for deriving exact results in a large class of quantum  integrable systems. In order to solve quantum integrable models with  $U(1)$ symmetry breaking boundary fields, the original approach is upgraded to the {\it modified} algebraic Bethe ansatz approach developed in  \cite{BC13,B15,C15,ABGP15,BP15}.   In both approaches, the spectral problem  for the Hamiltonian of the system  (eigenvalues and eigenvectors)  is formulated in terms of Bethe roots satisfying a system of transcendental equations known as the Bethe equations whereas the eigenvectors are Bethe states.  Depending on the system under consideration,  the Bethe equations are  either of  homogeneous type or of  inhomogeneous type \cite{ODBAbook}. In particular, some simple examples of integrable systems\footnote{The $q$-analog of the quantum Euler top and the `1-site' open XXZ spin chain \cite[Subsection 5.1]{BP19}; A 3-sites Heisenberg chain with inhomogeneous couplings \cite[Subsection 5.2]{BP19}.} and corresponding Bethe equations can be studied using the theory of Leonard pair \cite{BP19}.
Thus, establishing a Leonard pair/algebraic Bethe ansatz correspondence aims to get new insights in each subject.  Also, having in mind that the theory of Leonard pairs generalizes to the theory of tridiagonal pairs \cite{Ter03}, a Leonard pair/algebraic Bethe ansatz correspondence can be viewed as a toy model for a tridiagonal pair/algebraic Bethe ansatz correspondence. The latter should find applications in the theory of special functions generalizing the Askey-scheme and quantum integrable systems with non-trivial boundary conditions and finite dimensional quantum space. More details are given in the last section. \vspace{1mm}

The starting point of the Leonard pair/algebraic Bethe ansatz correspondence is the observation  \cite{BP19} that the  universal exchange relations which typically arise in the algebraic Bethe ansatz analysis of various examples of quantum integrable spin chains with boundaries   are satisfied by so-called dynamical operators generated from the Askey-Wilson algebra. A similar observation holds for  the Racah algebra \cite{Nico2}. This opens the possibility to reconsider the main objects of the theory of Leonard pair (eigenbases, transition coefficients, discrete orthogonal polynomials of the Askey-scheme,...) from the perspective of the   algebraic Bethe ansatz and its modified version  (on-shell Bethe states and related  Bethe equations, scalar products of Bethe states, Baxter Q-polynomial,...) and reciprocally.\vspace{1mm}

For instance, in \cite{BP19} the spectral problem for certain combinations of elements of a Leonard pair of q-Racah type - known as the Heun-Askey-Wilson operator - has been solved using the theory of Leonard pairs combined with the framework of the modified algebraic Bethe ansatz \cite{BP19}. For the limiting case $q=1$ associated with the Racah algebra and related Leonard pairs, the analog analysis is done in \cite{Nico1,Nico2}. In \cite{BP22}, different types of eigenbases and dual eigenbases of Leonard pairs of q-Racah type are constructed using the framework of the modified algebraic Bethe ansatz:  it is shown that eigenbases for a Leonard pair can be built from Bethe states associated with either homogeneous or inhomogeneous Bethe equations. This lead to an interpretation of the $q$-Racah polynomials\footnote{All other families of orthogonal polynomials of the scheme can be reached from the $q$-Racah polynomials by various limit transitions (either in the scalar parameters entering into the definition of the polynomials, or $q\rightarrow 1$). See \cite{KS96,Ko10} for details.} as certain ratios of scalar products of various types of on-shell Bethe states.\vspace{1mm}

The present paper is the second part of \cite{BP22}. Here, we  investigate further the Leonard pair/algebraic Bethe ansatz correspondence motivated by the following problems.   \vspace{1mm}

(a) In \cite{BP22}, it was shown that
certain ratios of  scalar products of {\it on-shell}  versus {\it on-shell} Bethe states are identical to $q$-Racah polynomials. A first goal of this paper is to extend the analysis to the computation of scalar products of on-shell versus {\it off-shell} Bethe states. The strategy is to use the theory of Leonard pairs \cite{Ter01,T03,Ter04,T04} and Leonard triples \cite{Curt07} (see also \cite{HuangLT}).\vspace{1mm}

(b)  For some class of scalar products of on-shell versus off-shell Bethe states, compact determinant formulas involving Bethe roots can be derived by solving certain systems of linear equations  constructed by Belliard and Slavnov \cite{BS19}. A second goal of this paper is to identify the Belliard-Slavnov system of linear equations satisfied by some specialization of the scalar products computed in (a). This aims to derive a new presentation of $q$-Racah polynomials using determinants of matrices associated with inhomogenous Bethe roots.\vspace{1mm}

(c) In the modified algebraic Bethe ansatz framework, inhomogeneous Bethe equations arise.  Contrary to the standard algebraic Bethe ansatz where Bethe roots are classified using an integer $N$, there is no known criteria to classify corresponding Bethe roots. The third goal of this paper is to identify some criteria for the class of inhomogenous Bethe equations under consideration.\vspace{1mm}

Note that (a), (b) and (c) can be viewed as a toy model for studying scalar products of Bethe states in the open XXZ spin chain from the perspective of the theory of tridiagonal pairs, see more details in the 
Concluding Remarks section. \vspace{1mm}

The text is organized as follows. In Section 2, some known results on the Leonard pair of $q$-Racah type/algebraic Bethe ansatz correspondence are reviewed based on \cite{BP22}. The exchange relations associated with a Leonard pair $(A,A^*)$ and corresponding dynamical operators that generate off-shell Bethe states are briefly recalled. Then, eigenbases for $(A,A^*)$ built from on-shell Bethe states  associated with either homogeneous or inhomogeneous Bethe equations are given. 
In Section 3, the goal is to compute explicitly {\it normalized} scalar products of on-shell Bethe states versus off-shell Bethe states using the Leonard pairs data, in order to extend the results of \cite{BP22}. To this end, the Leonard triple of $q$-Racah type  $(A,A^*,A^\diamond)$ is introduced. Eigenbases for the triple $(A,A^*,A^\diamond)$ and related transition matrices are given explicitly. Using the triple, the normalized scalar products are  computed, see Theorem \ref{prop:scalprod}. Upon specializations of previous results, a relation characterizing the  Bethe roots of inhomogenous type associated with the eigenvalue $\theta^*_N$, $N=0,....,2s$ in terms of Bethe roots of homogeneous type associated with the same eigenvalue $\theta^*_N$ is obtained, see Proposition \ref{relinhhom}. In Section 4, the Belliard-Slavnov  system of linear equations satisfied by certain specializations of the normalized scalar products of Theorem \ref{prop:scalprod} are derived, see Proposition \ref{prop:lineq}. As shown in Proposition \ref{prop:scaldet}, a determinant formula exist for such normalized scalar products. Upon specialization to on-shell Bethe states, it implies the existence of a determinant formula for the $q$-Racah polynomials, see Corollary \ref{qRacdet}. A summary of the results and some perspectives are given in the last section. Some formulas and explicit examples are reported in Appendix \ref{apA} and \ref{apB}, respectively.\vspace{2mm}

{\bf Notations:} The parameter $q$ is assumed not to be a root of unity. We write $[X,Y]_q = qXY - q^{-1}YX$ and $[n]_q=\frac{q^n-q^{-n}}{q-q^{-1}}$. 
We use the standard $q$-shifted factorials:
\beqa
(a;q)_n = \prod_{k=0}^{n-1}(1-aq^k)\ ,\qquad (a_1,a_2,\cdots,a_k;q)_n = \prod_{j=1}^{k}(a_j;q)\ ,
\eeqa
and define
\beqa
b(x)=x-x^{-1}\,\,.\label{b} 
\eeqa

\section{Algebraic Bethe ansatz and Leonard pairs of $q$-Racah type}
In this section, the correspondence between the theory of Leonard pairs of $q$-Racah type and the algebraic Bethe ansatz formalism is   reviewed (see  \cite{BP19,BP22} and references therein). Off-shell Bethe states are built using the image of dynamical operators satisfying the exchange relations associated with the Askey-Wilson algebra, and  eigenbases for each operator of the Leonard pair $(A,A^*)$ are given in the form of on-shell Bethe states associated either with Bethe equations of homogeneous type or Bethe equations of inhomogeneous type.    

\subsection{Dynamical operators and the Askey-Wilson algebra}
In the context of the quantum inverse scattering method,  the entries $ \{\mathscr{A}(u),\mathscr{B}(u),\mathscr{C}(u),\mathscr{D}(u)\}$ with $u\in{\mathbb C}^*$ of the so-called (fundamental spin-$\frac{1}{2}$) monodromy matrix or double-row monodromy matrix are the key ingredients in the analysis of transfer matrix' spectral problem, scalar products of Bethe states, correlation functions and form factors. For a review, see \cite{S22}. Importantly, according to the integrable model under consideration those entries satisfy a set of exchange relations that are equivalent to either a Yang-Baxter equation or a reflection equation. 

In order to implement a solution based on the algebraic Bethe ansatz approach to some of the fundamental examples of quantum integrable models (see e.g. \cite{TF79,CLSW03}) the analysis however requires to introduce certain linear combinations of those entries, which are called {\it dynamical operators}. Let us introduce the fundamental dynamical operators
\beqa
\{\mathscr{A}^{\epsilon}(u,m), \mathscr{B}^{\epsilon}(u,m),\mathscr{C}^{\epsilon}(u,m),\mathscr{D}^{\epsilon}(u,m)|m \in {\mathbb N}^*,\epsilon= \pm 1\}
\eeqa
subject to the set of universal  exchange relations\footnote{Those relations have been first introduced in the analysis of the open XXZ spin-$1/2$ chain with generic integrable boundary conditions \cite{CLSW03} In this case,  the dynamical operators specialize to operators acting on finite dimensional tensor product representations of $U_q(\widehat{sl_2})$.}:
\ben
\qquad \mathscr{B}^{\epsilon}(u,m+2)\mathscr{B}^{\epsilon}(v,m) &=& \mathscr{B}^{\epsilon}(v,m+2)\mathscr{B}^{\epsilon}(u,m),\label{comBdBd} \\
\qquad \mathscr{A}^{\epsilon}(u,m+2)\mathscr{B}^{\epsilon}(v,m)&=&f(u,v)\mathscr{B}^{\epsilon}(v,m)\mathscr{A}^{\epsilon}(u,m) \label{comAdBd}\\
&& + g(u,v,m)\mathscr{B}^{\epsilon}(u,m)\mathscr{A}^{\epsilon}(v,m) + w(u,v,m)\mathscr{B}^{\epsilon}(u,m)\mathscr{D}^{\epsilon}(v,m),\nonumber  \\
\mathscr{D}^{\epsilon}(u,m+2)\mathscr{B}^{\epsilon}(v,m)&=& h(u,v)\mathscr{B}^{\epsilon}(v,m) \mathscr{D}^{\epsilon}(u,m) 
,\label{comDdBd}\\
&&+k(u,v,m)\mathscr{B}^{\epsilon}(u,m)\mathscr{D}^{\epsilon}(v,m)+ n(u,v,m)\mathscr{B}^{\epsilon}(u,m)\mathscr{A}^{\epsilon}(v,m),\nonumber\\
\mathscr{C}^{\epsilon}(u,m+2)\mathscr{B}^{\epsilon}(v,m)&=& \mathscr{B}^{\epsilon}(v,m-2)\mathscr{C}^{\epsilon}(u,m)  \label{comcdBd} \\
&& +q(u,v,m) \mathscr{A}^{\epsilon}(v,m)\mathscr{D}^{\epsilon}(u,m)+r(u,v,m)\mathscr{A}^{\epsilon}(u,m)\mathscr{D}^{\epsilon}(v,m)
\nonumber\\
&& +s(u,v,m) \mathscr{A}^{\epsilon}(u,m)\mathscr{A}^{\epsilon}(v,m)+x(u,v,m)\mathscr{A}^{\epsilon}(v,m)\mathscr{A}^{\epsilon}(u,m)
\nonumber\\
&&+y(u,v,m) \mathscr{D}^{\epsilon}(u,m)\mathscr{A}^{\epsilon}(v,m)+z(u,v,m)\mathscr{D}^{\epsilon}(u,m)\mathscr{D}^{\epsilon}(v,m) \nonumber
\een
and
\ben
\qquad \mathscr{C}^{\epsilon}(u,m-2)\mathscr{C}^{\epsilon}(v,m) &=& \mathscr{C}^{\epsilon}(v,m-2)\mathscr{C}^{\epsilon}(u,m),\label{comCdCd} \\
\qquad  \mathscr{C}^{\epsilon}(v,m+2)\mathscr{A}^{\epsilon}(u,m+2)&=&f(u,v)\mathscr{A}^{\epsilon}(u,m)\mathscr{C}^{\epsilon}(v,m+2)  \label{comAdCd}
\\ && +  g(u,v,m)\mathscr{A}^{\epsilon}(v,m)\mathscr{C}^{\epsilon}(u,m+2) + w(v,u,m)\mathscr{D}^{\epsilon}(v,m)\mathscr{C}^{\epsilon}(u,m+2), \nonumber \\
\qquad  \mathscr{C}^{\epsilon}(v,m+2)\mathscr{D}^{\epsilon}(u,m+2)&=& h(u,v)\mathscr{D}^{\epsilon}(u,m) \mathscr{C}^{\epsilon}(v,m+2) \label{comDdCd}
\\ &&+k(u,v,m)\mathscr{D}^{\epsilon}(v,m)\mathscr{C}^{\epsilon}(u,m+2)+ n(u,v,m)\mathscr{A}^{\epsilon}(v,m)\mathscr{C}^{\epsilon}(u,m+2)\,,\nonumber
\een
where the explicit expressions for the coefficients $f(u,v),g(u,v,m),...$ are reported in Appendix \ref{apA}. \vspace{1mm}

The above set of exchange relations is {\it universal} in the following sense. It follows from a reflection equation - up to conventions, see \cite[eq. (12)]{CLSW03} - which provides a presentation for the alternating central extension of the $q$-Onsager algebra ${\cal A}_q$ \cite{BSh1,Ter21}. By construction, the corresponding dynamical operators are power series of the inverse spectral parameter $u^{-1}$ (or $u$ depending on the convention chosen) with coefficients in ${\cal A}_q$. For several examples of quantum integrable models such as the open XXZ spin-$j$  chain \cite{D07}, the trigonometric solid-on-solid model \cite{FK10} , the alternating spin chain \cite{CYSW14},
the relativistic Toda chain \cite{ZCYSW17}, the lattice sine-Gordon model \cite{MNP17}  or the three-sites interacting model constructed in \cite[Section 5.2]{BP19}, the fundamental exchange relations take the form (\ref{comBdBd})-(\ref{comDdCd}) where the dynamical operators $\{\mathscr{A}^{\epsilon}(u,m), \mathscr{B}^{\epsilon}(u,m),\mathscr{C}^{\epsilon}(u,m),\mathscr{D}^{\epsilon}(u,m)$ specialize to certain Laurent polynomials in $u$ with coefficients in a certain quotient of ${\cal A}_q$.\vspace{1mm}

In continuity with the previous works \cite{BP19,BP22}, in this paper we consider the quotient of ${\cal A}_q$ known as the Askey-Wilson algebra.
 The  Askey-Wilson algebra (AW) is generated by $\tA,\tA^*$ subject to the relations \cite{T87,Z91}
 \beqa
 &&\big[\tA,\big[\tA,\tA^*\big]_q\big]_{q^{-1}}=  \rho \,\tA^*+\omega \,\tA+\eta\mathcal{I} \ , \label{aw1} \\
 &&\big[\tA^*,\big[\tA^*,\tA\big]_q\big]_{q^{-1}}= \rho \,\tA+\omega \,\tA^*+\eta^*\mathcal{I} \ ,\label{aw2}
 \eeqa
where the structure constants are $\{\rho,\omega,\eta,\eta^*\}\in{\mathbb C}^*$ and the identity element is denoted $\mathcal{I}$. In this case, the explicit expressions of the dynamical operators that solve the exchange relations (\ref{comBdBd})-(\ref{comDdCd}) are given in \cite[Appendix 4]{BP19}. As Laurent polynomials in $u$ with coefficients in AW, the dynamical operators depend on the free scalars $\alpha,\beta$ (see Appendix \ref{apA}) and $\chi\in{\mathbb C}^*$ (called `gauge parameters'). For instance, one has:
	\ben
&&\quad \mathscr{B}^{+}(u,m)=\frac{\beta  b(u^2)}{\alpha  q^{2 m+2}-\beta }\Big(\frac{\chi  q^m }{\beta  \rho  u}[\textsf{A}^*,\textsf{A}]_q-\frac{\beta  q^{-m} }{u \chi
}[\textsf{A},\textsf{A}^*]_q+\frac{ (q^2+1)}{q
	u}\textsf{A}-(\frac{1}{q u^3}+q u)\textsf{A}^* 
\label{Bm}\\&&\quad
+\frac{q^{-m-3}}{\beta  \rho 
	\chi  b(q^2)}\big((q^2+1) u^{-1} (\omega  (\chi ^2 q^{2 m+2}-\beta ^2 q^2 \rho )+\beta   (q^4-1) \eta^*\chi  q^m)-q^2 \rho  (q^2 u+u^{-3})  (\beta ^2 \rho -\chi ^2
q^{2 m})\big)\Big),\nonumber\\ \nonumber
\een
\ben
&&\mathscr{B}^{-}(u,m)=
\frac{\beta  b(u^2) q^{2 m+1}}{\alpha q^{-2}-\beta  q^{2 m}}
\Big(\frac{u \chi  q^{-m-1} }{\beta 
	\rho }[\textsf{A}^*,\textsf{A}]_q-\frac{\beta  u q^{m-1} }{\chi }[\textsf{A},\textsf{A}^*]_q+\frac{ (q^2 u^4+1)}{q^2
	u}\textsf{A}-\frac{ (q^2+1) u}{q^2}\textsf{A}^*
\nonumber\\&&\quad
-\frac{q^{-m-4} }{\beta  \rho  \chi 
	b(q^2)}\big(\rho  q^2 (q^2 u^3+u^{-1})  (\beta ^2 \rho  q^{2 m}-\chi ^2)+(q^2+1) u
(\omega  (\beta ^2 \rho  q^{2 m+2}-q^2 \chi ^2)+\beta    (q^4-1)\eta \chi 
q^m)\big)\Big)\ . \nonumber
\een

\vspace{1mm}

\subsection{Leonard pairs and off-shell/on-shell Bethe states}\label{sec22}
For most of the known examples of quantum integrable models associated with the exchange relations (\ref{comBdBd})-(\ref{comDdCd}), the dynamical operators act on a finite dimensional vector space. Following \cite{BP19,BP22}, here we consider irreducible finite dimensional representations of AW which relate to the concept of Leonard pair of $q$-Racah type. Eigenbases for a Leonard pair $(A,A^*)$  and their duals are first recalled. Then, following \cite{BP22} off-shell Bethe states associated with the Leonard pair and the correspondence between on-shell Bethe states and eigenvectors of the Leonard pair is given. \vspace{1mm}

\subsubsection{The Leonard pair $(A,A^*)$ and related eigenbases}\label{sec:LP} Let $\cV$ be a vector space of positive finite dimension $\dim({\cV})=2s+1$, where $s$ is an integer or half-integer. Define $\pi: {\rm AW} \rightarrow {\rm End}(\cV)$. Assume 
$\pi(\tA),\pi(\tA^*)$ are diagonalizable on  ${\cV}$, each  multiplicity-free and ${\cV}$  is irreducible. Then,
$\pi(\tA),\pi(\tA^*)$ is a Leonard pair \cite[Definition 1.1]{T04}. For further convenience, we denote:
\beqa
\pi(\tA) = A \ , \qquad \pi(\tA^*) = A^* \ .\label{imA}
\eeqa
 Let $\{\theta_M\}_{M=0}^{2s}$ denote the eigenvalue sequence  associated with   
 $A$ (resp.  $\{\theta_N^*\}_{N=0}^{2s}$ the eigenvalue sequence  associated with   
 $A^*$). By definition of a Leonard pair \cite{Ter01}, there exists an eigenbasis with vectors  $\{|\theta_M \rangle \}_{M=0}^{2s}$ (resp. an eigenbasis with vectors $\{|\theta^*_N\rangle \}_{N=0}^{2s}$)
such that
\beqa
\qquad \quad A |\theta_M\rangle &=& \theta_M|\theta_M\rangle \ ,\quad  A^* |\theta_M\rangle =
A^{(*,.)}_{M+1,M} |\theta_{M+1}\rangle +  A^{(*,.)}_{M,M} |\theta_{M}\rangle +  A^{(*,.)}_{M-1,M}|\theta_{M-1}\rangle \ ,\label{tridAstar}\\
\qquad \quad  A^* |\theta^*_N \rangle &=&  \theta^*_N |\theta^*_N \rangle  \ ,\quad  A|\theta^*_N \rangle   =
A^{(.,*)}_{N+1,N} |\theta^*_{N+1} \rangle  +   A^{(.,*)}_{N,N} |\theta^*_N \rangle  +   A^{(.,*)}_{N-1,N}|\theta^*_{N-1} \rangle \ . \label{tridAstar2}
\eeqa
Here $A^{(*,.)}_{-1,0}=A^{(*,.)}_{2s+1,2s}=A^{(.,*)}_{-1,0}=A^{(.,*)}_{2s+1,2s}=0$ and all other coefficients are non-zero, the representation being irreducible. 
Given a Leonard pair, the transition matrix $P^{\{.,*\}}$ from the basis $\{|\theta_M \rangle \}_{M=0}^{2s}$ to the basis $\{|\theta^*_N \rangle \}_{N=0}^{2s}$ (resp. the inverse transition matrix ${P^{\{.,*\}}}^{-1}$ from the basis $\{|\theta^*_N \rangle \}_{N=0}^{2s}$ to the basis $\{|\theta_M \rangle \}_{M=0}^{2s}$) is known explicitly in terms of $q$-Racah polynomials \cite{Ter04}:
\beqa
|\theta^*_N\rangle = \sum_{M=0}^{2s} P^{\{.,*\}}_{MN}  |\theta_M \rangle \qquad \mbox{and} \qquad |\theta_M\rangle = \sum_{N=0}^{2s}
({P^{\{.,*\}}}^{-1})_{NM}|\theta^*_N \rangle \label{transeq}
\eeqa 
where the coefficients of the transition matrices  are given by:
\beqa
P^{\{.,*\}}_{MN}=k_N R_M(\theta^*_N) \qquad \mbox{and}  \qquad ({P^{\{.,*\}}}^{-1})_{NM} =   \nu_0^{-1} k^*_M R_M(\theta^*_N)\ .\label{PMN}
\eeqa
More details on the explicit expressions for the coefficients $\{A^{(*,.)}_{M\pm 1,M},A^{(*,.)}_{M,M}\}$ and $\{A^{(.,*)}_{N\pm 1,N},A^{(.,*)}_{N,N}\}$, the coefficients $\nu_0,k_N,k_M^*$ and the $q$-Racah polynomials $R_M(\theta_N^*)$ are given in Subsection \ref{sec2}. \vspace{1mm}

Let $\tilde\cV$ be the vector space dual of $\cV$, i.e. the vector space  $\tilde\cV$ of all linear functionals from $\cV$ to ${\mathbb C}$. Define the family of covectors $\{\langle \theta_M |\}_{M=0}^{2s}\in \tilde\cV$ (resp. $\{\langle \theta^*_M |\}_{N=0}^{2s}\in \tilde\cV$) associated with the eigenvalue sequence $\{\theta_M\}_{M=0}^{2s}$ (resp.   $\{\theta_N^*\}_{N=0}^{2s}$). One has:
\beqa
\qquad \quad \langle \theta_M |A  &=& \langle \theta_M |\theta_M\ ,\quad  \langle \theta_M |A^* =\langle \theta_{M+1 }|
\tilde{A}^{(*,.)}_{M+1,M}  +  \langle \theta_{M }|\tilde{A}^{(*,.)}_{M,M}  + \langle \theta_{M-1 } |\tilde{A}^{(*,.)}_{M-1,M} \ ,\label{tridAstardual}\\
\qquad \quad \langle \theta_N^* |A^* &=& \langle \theta_N^* |\theta_N^*\ ,\quad  \langle \theta_N^* |A =\langle \theta_{N+1 }^*|
\tilde{A}^{(.,*)}_{N+1,N}  +  \langle \theta_{N }^*|\tilde{A}^{(.,*)}_{N,N}  + \langle \theta_{N-1 } |\tilde{A}^{(.,*)}_{N-1,N}\  .
\label{tridAstar2dual}
\eeqa
Introduce the scalar product $\langle .|. \rangle : \tilde{\cV} \times \cV \rightarrow {\mathbb C}$.  From the orthogonality relations 
\beqa
\langle \theta_{M'} |\theta_M \rangle  = \delta_{MM'}\xi_M \ ,\qquad \langle \theta_{N'}^* |\theta_N^* \rangle  = \delta_{NN'}\xi_N^* \label{orthcond}\ ,
\eeqa
where $\delta_{ij}$ denotes the Kronecker symbol, one gets:
\beqa
\tilde{A}^{(*,.)}_{M',M} &=&  A^{(*,.)}_{M,M'} \frac{\xi_{M}}{\xi_{M'}}\quad \mbox{for $M'=M+1,M,M-1$}\ , \\
\tilde{A}^{(.,*)}_{N',N} &=&  A^{(.,*)}_{N,N'} \frac{\xi_{N}^*}{\xi_{N'}^*} \quad \mbox{for $N'=N+1,N,N-1$}\ .
\eeqa

The Leonard pair $(A,A^*)$ associated with the AW relations (\ref{aw1}), (\ref{aw2}) is known to be of {\it $q$-Racah type}: the eigenvalue sequences are of the form \cite[Theorem 4.4 (case I)]{Ter03}:
\beqa
\theta_M= \tb q^{2M} + \tc q^{-2M}\ , \quad   \theta^*_N= \tb^* q^{2N} + \tc^*q^{-2N}\ ,\label{st}
\eeqa
where $\tb,\tc,\tb^*,\tc^* \in {\mathbb C}^*$ are such that multiplicities in the spectra do not occur. More details are given in Subsection \ref{sec2}.

\subsubsection{Off-shell and on-shell Bethe states}\label{sec:onoffBs}
In the algebraic Bethe ansatz setting, the starting point of the construction of Bethe states associated with the Leonard pair $(A,A^*)$ is the identification of so-called reference states. Let $m_0$ be an integer. 
By \cite[Propositions 3.1, 3.2]{BP19}, if the parameter $\alpha$ is such that:
	\beqa
	\mbox{$(q^2-q^{-2})\chi^{-1}\alpha \tc^*q^{m_0}=1$ \qquad (resp. $(q^2-q^{-2})\chi^{-1}\alpha \tb q^{-m_0}=-1$)}\label{ab}
	\eeqa
	then 
	\beqa
	\pi(\mathscr{C}^+(u,m_0))
	|\theta^*_0\rangle =0\, \qquad \mbox{(resp.  $\pi(\mathscr{C}^-(u,m_0))
		|\theta_0\rangle  =0\,$)}.\label{cmO}
	\eeqa

	A correspondence between the references states $|\Omega^\pm\rangle$ and the fundamental eigenvectors for Leonard pairs  follows \cite[Definition 3.1]{BP22}:
\beqa
&&|\theta_0\rangle = |\Omega^-\rangle  \ , \quad   |\theta^{*}_0\rangle = |\Omega^+\rangle 
 \ .\label{defrefs}
\eeqa
More generally, different types of  Bethe states are built from successive actions of the dynamical operators $\mathscr{B}^\pm(u,m)$
on each reference state
$|\Omega^\pm\rangle$. 
 Namely, consider the string of dynamical operators 
\ben
B^{\epsilon}(\bar u,m,M)&=&\mathscr{B}^{\epsilon}(u_1,m+2(M-1))\cdots \mathscr{B}^{\epsilon}(u_M,m)\, \label{SB}
\een
where we denote $ \bar  u$ the set of variables $\bar  u = \{u_1,u_2,\dots,u_M\}$. 
\begin{defn}\label{def:Bs} The Bethe states generated from the Leonard pair $(A,A^*)$ 
	 are ($m_0\in {\mathbb N}$):
	\ben\label{PsiA}
	|\Psi_{-}^M( \bar u,m_0)\rangle = \pi(B^{-}( \bar  u,m_0,M))|\Omega^{-}\rangle\,\quad
	\mbox{for} \quad (q^2-q^{-2})\chi^{-1}\alpha \tb q^{-m_0}=-1 \quad \mbox{and}\quad \beta=0\ ,
	\een
	\ben\label{PsiAp}
	|\Psi_{+}^M( \bar  w,m_0)\rangle = \pi(B^{+}( \bar  w,m_0,M))|\Omega^{+}\rangle\,\quad
	\mbox{for} \quad (q^2-q^{-2})\chi^{-1}\alpha \tc^*q^{m_0}=1 \quad \mbox{and}\quad \beta=0\ .
	\een
\end{defn}	
As usual, if the set of variables $\bar u$ or $\bar w$ 
is arbitrary, the Bethe states (\ref{PsiA}), (\ref{PsiAp}) are called `off-shell'. If the set of variables $\bar u$ or $\bar w$
 satisfies certain Bethe ansatz equations, the Bethe states  are called `on-shell'.\vspace{1mm}
 \begin{rem} The Bethe states of Definition \ref{def:Bs} associated with $\beta=0$ are suitable for studying integrable models with an Hamiltonian of the form $A$ (or $A^*$) or $\kappa A + \kappa^*A^*$. See e.g. \cite[Section 5.2]{BP19}. To handle an Hamiltonian of the so-called Heun-Askey-Wilson form $\kappa A + \kappa^*A^* + \kappa_+[A,A^*]_q + \kappa_-[A^*,A]_q$, the case $\beta\neq 0$ has to be considered instead.
 \end{rem}

A correspondence  between on-shell Bethe states (associated with either homogeneous Bethe equations or inhomogeneous Bethe equations) and eigenvectors of Leonard pairs is derived in \cite{BP22}. We refer the reader to this work for more details.
Recall the functions $f(u,v)$ and $h(u,v)$ in Appendix \ref{apA} and introduce the notations ($\epsilon=\pm 1$, $\zeta\in{\mathbb C}^*$):
\ben\label{Lap1}
&&\Lambda_1^\epsilon(u)=\frac{q^{-2s-1}}{u^{\epsilon}}
\left( q^{2 s+1} u\zeta^{-1}-u^{-1}\zeta\right)\left( q^{2 s+1}u\zeta-u^{-1}\zeta^{-1}\right) 
\\ && \qquad\qquad\qquad
\times
\left(u \tc^* q^{-2s}  +u^{-1} \tb q^{2s}\right)
\left(
u \left(\frac{\tc}{\tc^*}\right)^{\frac{1-\epsilon}{2}} +
u^{-1} \left(\frac{\tb^*}{\tb}\right)^{\frac{1+\epsilon}{2}}
\right)\,,\nonumber
\een
\ben
&& \Lambda_2^\epsilon(u)=
\frac{(u^2-u^{-2})q^{-2s-1}}{u^\epsilon (qu^2-q^{-1}u^{-2})}
\left( q^{2 s-1} u^{-1}\zeta-u \zeta^{-1}\right)
\left(q^{2 s-1} u^{-1} \zeta^{-1}-u \zeta\right) 
\label{Lap2} \\ && \qquad\qquad\qquad
\times
\left(q^2 u \tb q^{2s}+u^{-1} \tc^* q^{-2s}\right)
\left(q^2 u \left(\frac{\tb^*}{\tb}\right)^{\frac{1+\epsilon}{2}}+u^{-1} 
\left(\frac{\tc}{\tc^*}\right)^{\frac{1-\epsilon}{2}}\right)\, .\nonumber
\een 
Note that $\Lambda_i^\epsilon(u)$, $i=1,2$, are the eigenvalues associated with the dynamical operators $\mathscr{A}^{\epsilon}(u,m),\mathscr{D}^{\epsilon}(u,m)$ acting on the reference state $|\Omega^\epsilon \rangle$ \cite[Lemma 3.3]{BP19}, respectively. As will be detailed  in Subsections \ref{sec2}, \ref{sec3}, the parameter $\zeta$ finds a natural interpretation  with respect to the parameter sequences associated with a Leonard triple. \vspace{3mm}

$\bullet$ \underline{On-shell Bethe states of homogeneous type:}\vspace{2mm}

Define the set of functions:
\ben
E_{\pm}^M(u_i, \bar u_i)=-\frac{b(u_i^2)}{b(qu_i^2)}\prod_{j=1,j\neq i}^Mf(u_i,u_j)\Lambda_1^\pm(u_i)+\prod_{j=1,j\neq i}^Mh(u_i,u_j)\Lambda_2^\pm(u_i)\,,\label{Bfunc}
\een
for $i=1,\dots,M$,  where we denote ${\bar u}_i=\bar u\backslash u_i$. For each integer $M$ (resp. $N$) with $0\leq M,N\leq 2s$, assume there exists at least one set of non trivial admissible\footnote{Note that the set of equations $E_{\pm}^M(u_i, \bar  u_i)=0$
	for $i=1,\dots,M$
	contain \textit{trivial} solutions where $u_i^2=u_i^{-2}$ or $u_i=0$ for $i=1,\dots,M$,
	or solutions such that $U_i=U_j$ where $U_i=(qu_i^2+q^{-1}u_i^{-2})/(q+q^{-1})$. Such solutions are {\it not admissible}.   For further informations, see \cite{BP19,BP22} and \cite[Subsection 2.3]{S22}.} Bethe roots $S^{M(h)}_-=\{u_1,...,u_{M}\}$ (resp. $S_+^{*N(h)}=\{w_1,...,w_{N}\}$) such that
\beqa
E_{-}^M(u_i, \bar u_i)=0 \quad \mbox{for} \quad  \bar u = S^{M(h)}_- \ ,\qquad
(\mbox{resp.}\quad  E_{+}^N(u_i, \bar  u_i)=0 \quad \mbox{for} \quad \bar  u = S^{*N(h)}_+)\ .\label{BAEeqhom}
\eeqa
\vspace{1mm}

\begin{defn}\label{def:homoBA}
	The equations $E_{-}^M(u_i, \bar  u_i)=0$ (resp. $E_{+}^N(u_i, \bar  u_i)=0$) for $i=1,\dots,M$ (resp. $i=1,...,N$)
	are called the Bethe ansatz equations of {\it homogeneous} type associated with the set of Bethe roots ${\bar u}=S^{M(h)}_-$ (resp. ${\bar u}= S^{*N(h)}_+$). 
\end{defn}

As shown in \cite[Lemma 3.5]{BP22} the eigenvectors of $A$ (resp. $A^*$) can be written as on-shell Bethe states of homogeneous type:
\beqa
|\theta_M\rangle &=& {\cal N}_M(\bar u) |\Psi_{-}^M(\bar u,m_0)\rangle \qquad \mbox{for}\quad  \bar u = S^{M(h)}_-\ ,\label{norm1}\\
|\theta_N^{*}\rangle &=& {\cal N}_N^*(\bar w) |\Psi_{+}^N(\bar w,m_0)\rangle \qquad \mbox{for}  \quad  \bar w = S^{*N(h)}_+\ 
\label{norm2}
\eeqa
with normalization factors ${\cal N}_0(.)={\cal N}_0^*(.)=1$ and
\beqa
{\cal N}_M(\bar u)= \prod_{k=1}^M\left(q u_k b(u_k^2)A^{(*,.)}_{k,k-1}\right)^{-1} \ ,\qquad
{\cal N}_N^*(\bar w)= \prod_{k=1}^N\left(-q^{-1} w_k^{-1} b(w_k^2)A^{(.,*)}_{k,k-1}\right)^{-1}   \ ,\label{Ncoeff}
\eeqa
where (\ref{tridAstar}), (\ref{tridAstar2}).
\vspace{2mm}

\newpage

$\bullet$ \underline{On-shell Bethe states of inhomogeneous type:}\vspace{2mm}

Define the functions:
\ben
E_{\pm}(u_i,\bar u_i)&=& \frac{b(u_i^2)}{b(qu_i^2)}u_i^{\pm 1}\prod_{j=1,j\neq i}^{2s}f(u_i,u_j)\Lambda_1^{\pm}(u_i)
-
(q^2u_i^3)^{\mp 1}\prod_{j=1,j\neq i}^{2s}h(u_i,u_j)\Lambda_2^{\pm}(u_i)
\label{Bfuncinhom}\\
&&
\  +    \nu_\pm
\frac{u_i^{\mp 2}b(u_i^2)}{b(q)}\frac{\prod_{k=0}^{2s}b(q^{1/2+k-s}\zeta u_i)b(q^{1/2+k-s}\zeta^{-1}u_i)}{\prod_{j=1,j\neq i}^{2s}b(u_iu_j^{-1}) b(qu_iu_j)}\,\ ,\nonumber
\,
\een
where 
\ben
\nu_+ = q^{-1-4s}\tc^*\,,\quad
\nu_- = q^{1+4s}\tb\,,
\label{deltad}
\een
for $i=1,\dots,2s$. 
For each integer $M$ (or $N$) with $0\leq M,N\leq 2s$,  there exists at least one set of non trivial admissible Bethe roots $S_+^{M(i)}=\{u_1,...,u_{2s}\}$ (resp. $S_-^{*N(i)}=\{w_1,...,w_{2s}\}$) such that 
\beqa
E_{+}(u_j,\bar u_j)=0 \quad \mbox{for} \quad \bar u = S^{M(i)}_+ 
\ \qquad
(\mbox{resp.}\quad  E_{-}(w_j,\bar w_j)=0 
\quad \mbox{for} \quad \bar w = S^{*N(i)}_-)
\ ,\label{BAEeqinhom}
\eeqa
and 
\beqa
\theta_M&=&  q^{-4s} \Big( \tc^* (\zeta^2+\zeta^{-2})[2s]_q +
q^{2s}(\tb q^{2s}+\tc q^{-2s})
-q\tc^*\sum_{j=1}^{2s} (qu_j^2+q^{-1}u_j^{-2})\Big) \quad \mbox{for} \quad \bar u = S^{M(i)}_+\ \label{eigd1}\\
\qquad \quad  \mbox{(resp.} \ \  \theta_N^{*}&=& q^{4s}\Big( \tb  (\zeta^2+\zeta^{-2})[2s]_q +
q^{-2s}(\tb^* q^{2s}+\tc^* q^{-2s})
-q^{-1}\tb\sum_{j=1}^{2s} (qw_j^2+q^{-1}w_j^{-2}) \Big) \ \quad \label{eigd2}\\ \nonumber&\mbox{for}& \quad \bar w = S^{*N(i)}_-\ )\ . 
\eeqa
\begin{defn}\label{def:inhomoBA}
	The equations $E_{+}(u_i, \bar  u_i)=0$ (resp. $E_{-}(u_i, \bar  u_i)=0$) for $i=1,\dots,2s$
are called the Bethe ansatz equations of {\it inhomogeneous} type associated with the set of Bethe roots ${\bar u}=S^{M(i)}_+$ (resp. ${\bar u}= S^{*N(i)}_-$). 
\end{defn}

As shown in \cite[Lemma 3.7]{BP22} the eigenvectors of $A$ (resp. $A^*$) can be written as on-shell Bethe states of inhomogeneous type:
\beqa
|\theta_M\rangle &=& {\cal N}^{(i)}_M(\bar u') |\Psi_{+}^{2s}(\bar u',m_0)\rangle \qquad \mbox{for}\quad  {\bar u'}= S_+^{M(i)}\ ,\label{normi1}\\
|\theta_N^{*}\rangle &=& {\cal N}^{*(i)}_N(\bar w') |\Psi_{-}^{2s}(\bar w',m_0)\rangle \qquad \mbox{for}  \quad  {\bar w'}=  S^{*N(i)}_-\ 
\label{normi2}
\eeqa
with
\beqa
{\cal N}^{(i)}_M(\bar u')= {\cal N}_{2s}^{*}(\bar u')({P^{\{.,*\}}}^{-1})_{2s,M} \ ,\qquad
{\cal N}^{*(i)}_N(\bar w')=  {\cal N}_{2s}(\bar w')P^{\{.,*\}}_{2s,N} \ ,\label{Ncoeffi}
\eeqa
where (\ref{Ncoeff}) and (\ref{PMN}) are used.

\section{Scalar products of Bethe states from the Leonard triple}
In this section, the theory of Leonard triples that extends the theory of Leonard pairs is applied to the derivation of certain scalar products of off-shell Bethe states (\ref{PsiA}), (\ref{PsiAp}) with eigenvectors $|\theta_M\rangle$. To this end, the Leonard triple $(A,A^*,A^{\diamond})$ is built from the Leonard pair $(A,A^*)$, and the transition matrices relating the respective eigenbases associated with the triple are given. Using the properties of $A^\diamond$, two different types of normalized scalar products of Bethe states are computed. Specializing those scalar products, we obtain an explicit relation characterizing Bethe roots of inhomogenous type in terms of Bethe roots of homogeneous type. 

\subsection{The Leonard triple $(A,A^*,A^{\diamond})$}\label{sec2}
The concept of Leonard triple is a natural generalization of the concept of 
Leonard pair. According to \cite[Definition 1.2]{Curt07}, a Leonard triple on $\cV$ is a 3-tuple of maps
in ${\rm End}(\cV)$ such that for each map, there exists a basis of $\cV$ with respect to which the
matrix representing that map is diagonal and the matrices representing the other two
maps are irreducible tridiagonal.

A Leonard triple of $q$-Racah type $(A,A^*,A^{\diamond})$ is constructed as follows. Recall the Leonard pair of $q$-Racah type $(A,A^*)$ with spectra (\ref{st}) introduced in Subsection \ref{sec:LP}. Denote 
by ${\mathbb I}$ the $(2s+1) \times (2s+1)$ identity matrix. Introduce the notations:
\beqa
\theta^a_M= \tb^a q^{2M} + \tc^a q^{-2M}\ , \quad M=0,1,...,2s,  \qquad a\in\{. ,*,\diamond\}\ \label{thetatrip}
\eeqa
with $\tb^a,\tc^a \in {\mathbb C}^*$ and
\beqa
\omega^{\{a,b,c\}} &=& -(q-q^{-1})^2 \left(  \theta^a_{s}\theta^b_{s} -\frac{1}{r_0} (q^{2s+1} + q^{-2s-1})\theta^c_{s} \right)\ \quad \mbox{with\ \ \ \  $r_0^{-2}=\tb\tc=\tb^*\tc^*=\tb^\diamond\tc^\diamond$}\ .\label{omegatrip}
\eeqa
 Define
\beqa
A^\diamond = \frac{r_0}{(q^2-q^{-2})}\big[ A^*,A\big]_q + \frac{r_0\ \omega^{\{.,*,\diamond\}} }{(q-q^{-1})(q^2-q^{-2})}{\mathbb I}\label{Adiamond}\ . \label{Adiam}
\eeqa
By \cite[Theorem 1.5]{T04} the Leonard pair $(A,A^*)$	satisfies the image by $\pi$ of the Askey-Wilson relations (\ref{aw1})-(\ref{aw2}) for some structure constants.
The proof of the following Lemma is straightforward, thus we skip it. 
\begin{lem} Assume that the structure constants in  (\ref{aw1})-(\ref{aw2}) take the form:
	\beqa
	\rho= -\frac{(q^2-q^{-2})^2}{r_0^{2}}\ ,\quad \omega = \omega^{\{.,*,\diamond\}} \ ,\quad \eta = \frac{(q+q^{-1})}{r_0}\omega^{\{.,\diamond,*\}} \ ,\quad \eta^* = \frac{(q+q^{-1})}{r_0}\omega^{\{*,\diamond,.\}}\ .\label{structc}
	\eeqa
	The triple $(A,A^*,A^{\diamond})$ satisfy  the Askey-Wilson relations:
\beqa
\qquad && \big[A^a,\big[A^a,A^b\big]_q\big]_{q^{-1}}=  -\frac{(q^2-q^{-2})^2}{r_0^{2}}\,A^b+ \omega^{\{a,b,c\}} A^a+\frac{(q+q^{-1})}{r_0}\omega^{\{a,c,b\}}{\mathbb I} \quad \mbox{for any} \quad a\neq b \in \{.,*,\diamond\}\ . \label{AWtriple} 
\eeqa
\end{lem}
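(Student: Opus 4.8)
The plan is to verify directly that the triple $(A,A^*,A^\diamond)$ with $A^\diamond$ defined by \eqref{Adiamond} satisfies the Askey-Wilson relations \eqref{AWtriple}, using as the sole input that $(A,A^*)$ already satisfies the image by $\pi$ of \eqref{aw1}--\eqref{aw2} with the structure constants \eqref{structc}. The key observation is that $A^\diamond$ is, up to an additive scalar multiple of ${\mathbb I}$, a $q$-commutator of $A^*$ and $A$; this is exactly the kind of combination that the Askey-Wilson algebra treats symmetrically under cyclic permutation of its three generators. So the strategy is to establish the ``$\ZZ_3$-symmetry'' of the AW relations: the standard fact (see e.g.\ \cite{T04,Ter03}) that if $A,A^*$ generate an AW algebra, then the element $B := \frac{r_0}{q^2-q^{-2}}[A^*,A]_q + (\text{scalar}){\mathbb I}$ is such that each of the three pairs $(A,A^*)$, $(A^*,A^\diamond)$, $(A^\diamond,A)$ satisfies an AW relation with the cyclically-permuted structure constants.

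First I would record the three scalar combinations $\omega^{\{a,b,c\}}$ and note the symmetry $\tb\tc=\tb^*\tc^*=\tb^\diamond\tc^\diamond=r_0^{-2}$ forces $\rho$ to be the same for all three pairs, which is why $\rho$ does not carry a label in \eqref{AWtriple}. Next I would use \eqref{aw1}--\eqref{aw2} to derive the ``reduction formula'' expressing $[A^*,[A^*,A]_q]_{q^{-1}}$ and $[A,[A,A^*]_q]_{q^{-1}}$, and from these extract the Casimir-type element of the AW algebra; the point is that $A^\diamond$ can be built from this Casimir together with $A,A^*$, and the three relations in \eqref{AWtriple} become polynomial identities in $A,A^*$ that are consequences of \eqref{aw1}--\eqref{aw2}. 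Concretely, the relation for $(a,b)=(\diamond,.)$ reads $[A^\diamond,[A^\diamond,A]_q]_{q^{-1}} = \rho A + \omega^{\{\diamond,.,*\}}A^\diamond + \frac{q+q^{-1}}{r_0}\omega^{\{\diamond,*,.\}}{\mathbb I}$; substituting \eqref{Adiamond} turns the left side into a degree-three expression in $A,A^*$ which, after repeated use of \eqref{aw1}--\eqref{aw2} to eliminate the highest-degree terms, must collapse to the right side. The required bookkeeping is finite and mechanical once the AW relations are used to normal-order; this is precisely why the authors call the proof ``straightforward.''

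The part that needs the most care — and which I expect to be the main (though purely computational) obstacle — is matching the \emph{scalar} terms: the coefficient of ${\mathbb I}$ in each of the three relations. These involve the combinations $\theta^a_s\theta^b_s$ and $(q^{2s+1}+q^{-2s-1})\theta^c_s$, and checking that they reorganize correctly into $\frac{q+q^{-1}}{r_0}\omega^{\{a,c,b\}}$ requires using the explicit form \eqref{thetatrip} of the eigenvalue midpoints together with $r_0^{-2}=\tb^a\tc^a$ for each $a$. I would handle this by evaluating both sides on the (known) eigenbasis of one of the operators, where $A^\diamond$ acts tridiagonally with explicitly computable entries, reducing the identity to a finite set of scalar identities in $\tb^a,\tc^a,q,s$; these then follow from elementary algebra. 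An alternative, cleaner route — which I would mention but not pursue in full — is to invoke the known $\ZZ_3$-symmetric presentation of the AW algebra directly (as in \cite{T87,Z91}), identifying $A^\diamond$ with the third generator and reading off \eqref{AWtriple} as the cyclic images of \eqref{aw1}--\eqref{aw2}; one still must verify that the specific normalization in \eqref{Adiamond} is the one compatible with the chosen structure constants \eqref{structc}, but that is a one-line scalar check.

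Finally, having the three AW relations in hand, the \emph{Leonard triple} property itself is immediate from the hypotheses of Subsection~\ref{sec:LP}: $A$ and $A^*$ are each diagonalizable and multiplicity-free with ${\cV}$ irreducible, and $A^\diamond$ is a nonzero $q$-commutator combination, hence (by the standard classification of AW representations, \cite[Theorem 1.5]{T04} and \cite{Ter04}) also diagonalizable and multiplicity-free with the other two acting irreducibly tridiagonally in its eigenbasis; this matches \cite[Definition 1.2]{Curt07}. But since the Lemma as stated asks only for the algebraic relations \eqref{AWtriple}, the proof reduces to the normal-ordering computation described above.
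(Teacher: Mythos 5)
The paper does not actually supply a proof of this lemma (the authors declare it ``straightforward'' and skip it), so there is nothing to match line by line; the relevant question is whether your verification plan works, and it does. The two relations of \eqref{AWtriple} with $a\in\{.,*\}$, $b=\diamond$ and the two with $a=\diamond$ all follow from \eqref{aw1}--\eqref{aw2}, the definition \eqref{Adiamond} and the identifications \eqref{structc}, exactly as you propose, so your argument is correct in substance.

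Two remarks that simplify (and slightly correct) your sketch. First, no Casimir element and no eigenbasis evaluation are needed anywhere: using $[X,Y]_{q^{-1}}=-[Y,X]_q$, the identity $[X,[X,Y]_q]_{q^{-1}}=[X,[X,Y]_{q^{-1}}]_q$, and \eqref{aw1}--\eqref{aw2}, one first gets the two linear identities
\beqa
[A,A^{\diamond}]_q=\frac{q^{2}-q^{-2}}{r_0}\,A^{*}-\frac{r_0\,\eta}{q^{2}-q^{-2}}\,{\mathbb I}\ ,\qquad
[A^{\diamond},A^{*}]_q=\frac{q^{2}-q^{-2}}{r_0}\,A-\frac{r_0\,\eta^{*}}{q^{2}-q^{-2}}\,{\mathbb I}\ ,\nonumber
\eeqa
i.e.\ the ${\mathbb Z}_3$-symmetric relations \eqref{Z3AW}; substituting these back, every relation in \eqref{AWtriple} reduces to a single linear substitution rather than a normal-ordering of a high-degree expression (in particular the $a=\diamond$ cases are not ``degree three in $A,A^*$'' computations once the displayed identities are used). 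Second, the scalar matching you single out as the delicate step is immediate: it only uses \eqref{structc} together with the symmetry $\omega^{\{a,b,c\}}=\omega^{\{b,a,c\}}$ of \eqref{omegatrip} in its first two arguments (and $r_0^{-2}=\tb^a\tc^a$ for each $a$, which is why a single $\rho$ serves all pairs, as you note); no identity in $s$ beyond this is required. With these streamlinings your proof is exactly the ``straightforward'' verification the authors had in mind, and your closing caveat is right: the lemma asserts only the relations \eqref{AWtriple}, the Leonard-triple property itself being addressed separately later in the section.
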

\begin{rem}  The Askey-Wilson relations (\ref{AWtriple}) for $(a,b,c)\in\{(.,*,\diamond),(*,.,\diamond)\}$ together with the relation (\ref{Adiam}) can be alternatively put into the ${\mathbb Z}_3$-symmetric form:
	\beqa
	\qquad \frac{r_0}{(q^2-q^{-2})}\big[ A^b,A^a\big]_q - A^c + \frac{r_0\ \omega^{\{ a,b,c\}} }{(q-q^{-1})(q^2-q^{-2})}=0\label{Apairabc}\quad \mbox{for $(a,b,c)\in\{(. ,*,\diamond), (*,\diamond,. ), (\diamond,. ,*)\}$}\ .\label{Z3AW}
	\eeqa
Note that the Askey-Wilson relations considered in \cite[eqs. (27)-(29)]{HuangLT} are obtained from the following  specialization of  \eqref{Z3AW}:
		\beqa
	&& A|_{q\rightarrow q^{-1}} \rightarrow  A \ , \quad A^*|_{q\rightarrow q^{-1}}	 \rightarrow  A^* \ , \quad A^\diamond|_{q\rightarrow q^{-1}} \rightarrow A^\varepsilon\ ,\quad 2s \rightarrow d, \quad r_0 \rightarrow 1 \ , \label{specH}\\
	&& \tb \rightarrow a^{-1}q^{2s}\ , \quad  \tc \rightarrow aq^{-2s}\ ,\quad \tb^* \rightarrow b^{-1}q^{2s}\ , \quad  \tc^* \rightarrow bq^{-2s}\ ,\quad
		 \tb^\diamond \rightarrow -c^{-1}q^{2s}\ , \quad  \tc^\diamond \rightarrow -cq^{-2s}\ .\nonumber
		 	\eeqa
\end{rem}
\begin{rem} The relationship between the algebraic Bethe ansatz parameter $\zeta$ introduced in Subsection \ref{sec22}  (see also \cite{BP19,BP22}) and the parameters $\tb^\diamond,\tc^\diamond$ in \eqref{thetatrip} is determined through a comparison between the structure constants  \cite[eqs. (2.12)-(2.14)]{BP22} and the structure constants in (\ref{AWtriple}) for $(a,b,c)=(.,*,\diamond)$. One finds:
\beqa
\zeta^{-2} =\tb^\diamond r_0 q^{2s} \quad \mbox{and} \quad \zeta^{2} =\tc^\diamond r_0 q^{-2s} \ .\label{bcdiamzeta}
\eeqa
\end{rem}
\vspace{1mm}

\subsubsection{Eigenbases and respective actions of the Leonard triple} We now turn to the explicit construction of three eigenbases $\{|\theta_M^a \rangle \}_{M=0}^{2s}, \{|\theta_M^b \rangle \}_{M=0}^{2s} , \{|\theta_M^c \rangle \}_{M=0}^{2s}$  and respective actions associated with the triple $(A^a,A^b,A^c)$ for the sets $(a,b,c)\in \{ (. ,*,\diamond), (* ,\diamond, .), (\diamond ,.,*)\}$.
\begin{lem}\label{lem:Aab}
	Assume $(A^a,A^b)$ is a Leonard pair of $q$-Racah type with spectra   (\ref{thetatrip}). There exists  eigenbases  $\{|\theta_M^a \rangle \}_{M=0}^{2s}$ and $\{|\theta^b_M\rangle \}_{M=0}^{2s}$ such that:
	\beqa
	\qquad \quad A^a |\theta^a_M\rangle &=& \theta^a_M|\theta^a_M\rangle \ ,\quad A^b |\theta^a_M\rangle =
	A^{(b,a)}_{M+1,M} |\theta^a_{M+1}\rangle +  A^{(b,a)}_{M,M} |\theta^a_{M}\rangle +  A^{(b,a)}_{M-1,M}|\theta^a_{M-1}\rangle \ ,\label{tridAa} \\
	A^b |\theta^b_M\rangle &=& \theta^b_M|\theta^b_M\rangle \ ,\quad
	A^a |\theta^b_M\rangle =
	A^{(a,b)}_{M+1,M} |\theta^b_{M+1}\rangle +  A^{(a,b)}_{M,M} |\theta^b_{M}\rangle +  A^{(a,b)}_{M-1,M}|\theta^b_{M-1}\rangle \ \label{tridAb}
	\eeqa
	with 
	\beqa
	A^{(b,a)}_{M,M-1}&=&   q^{2-4s} \frac{ (1-q^{2M})(\tc^a-\tb^a q^{2M+4s}) (\tb^b\tb^cr_0q^{4s-1} +\tb^a q^{2M-2})(   \tc^a \tc^c r_0 q^{-1} + \tc^b q^{2M-2}) }{(\tc^a-\tb^a q^{4M-2})(\tc^a-\tb^a q^{4M})}  \ ,\label{ammtrip1}\\
	A^{(b,a)}_{M-1,M}&=&  \frac{ (1-q^{2M-4s-2})(\tc^a- \tb^a q^{2M-2}) (\tc^a+\tb^b\tb^c r_0 q^{2M+4s-1})  (\tc^b+\tb^a\tc^c r_0 q^{2M-1})}{(\tc^a-\tb^a q^{4M-4})(\tc^a-\tb^a q^{4M-2})} \ ,\label{ammtrip2}\\
	A^{(b,a)}_{M,M}&=& \theta^b_0 -  A^{(b,a)}_{M,M+1} - A^{(b,a)}_{M,M-1} \ ,\label{ammtrip3} \ 
	\eeqa
	and
	\beqa
	A^{(a,b)}_{M,N} = A^{(b,a)}_{M,N}|_{\tb^a \leftrightarrow \tb^b, \tc^a \leftrightarrow \tc^b, \tb^c q^{4s} \leftrightarrow \tc^c} \ .\label{AMNab}
	\eeqa
\end{lem}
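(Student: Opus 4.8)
The plan is to reduce Lemma \ref{lem:Aab} to the known structure theory of Leonard pairs of $q$-Racah type (Terwilliger \cite{Ter04,T03}) by a careful parameter-matching argument. First I would observe that, by hypothesis, $(A^a,A^b)$ is a Leonard pair with eigenvalue sequences $\{\theta^a_M\}$ and $\{\theta^b_M\}$ of the form \eqref{thetatrip}; since the $q$-Racah eigenvalue shape forces the associated split sequence and the Askey-Wilson structure constants up to the known normalization, the ``raising/lowering'' data $\{A^{(b,a)}_{M\pm1,M},A^{(b,a)}_{M,M}\}$ in the $A^a$-eigenbasis of \eqref{tridAa} is determined by the standard $q$-Racah formulas. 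Concretely, I would invoke \cite[Theorem 4.4, case I]{Ter03} (already quoted in the excerpt for the spectra) together with the explicit expressions for the $q$-Racah polynomials and their three-term recurrence from Subsection \ref{sec2}: the subdiagonal entry $A^{(b,a)}_{M,M-1}$ is (up to a basis rescaling that is fixed by a convention) the product $b_{M-1}c_M$ of recurrence coefficients, while $A^{(b,a)}_{M-1,M}$ is the companion product, and $A^{(b,a)}_{M,M}$ is fixed by the trace condition \eqref{ammtrip3}, namely that the all-ones vector $\sum_M|\theta^a_M\rangle$ is (after normalization) an eigenvector of $A^b$ with eigenvalue $\theta^b_0$.

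The real content is then the \emph{identification of parameters}. Here the third member $A^\diamond$ enters only through the dependence of the Askey-Wilson structure constants on $\omega^{\{a,b,c\}}$, and via Remark/\eqref{bcdiamzeta} the pair $(\tb^\diamond,\tc^\diamond)$ is pinned to $\zeta$. So I would compute the structure constants $\{\rho,\omega,\eta,\eta^*\}$ of the Leonard pair $(A^a,A^b)$ from the eigenvalue data $\tb^a,\tc^a,\tb^b,\tc^b$ and the normalization constant $r_0$ (using $r_0^{-2}=\tb\tc=\tb^*\tc^*=\tb^\diamond\tc^\diamond$), match them against \eqref{structc}, and then substitute into the closed-form $q$-Racah recurrence coefficients. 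After simplification this should reproduce exactly the rational expressions \eqref{ammtrip1}--\eqref{ammtrip2}; the shifts $q^{2M-2}$, $q^{4M-2}$, $q^{4M}$ etc.\ and the ``$r_0 q^{4s-1}$'' factors are precisely what one expects from writing the $q$-Racah data in the $\theta^a$-centered normalization with half-integer spin $s$. Finally, the formula \eqref{AMNab} for $A^{(a,b)}_{M,N}$ in the \emph{other} eigenbasis follows by symmetry: swapping the roles of $A^a$ and $A^b$ swaps $(\tb^a,\tc^a)\leftrightarrow(\tb^b,\tc^b)$, and the remaining substitution $\tb^c q^{4s}\leftrightarrow\tc^c$ is exactly the effect on $\omega^{\{b,a,c\}}$ versus $\omega^{\{a,b,c\}}$ of interchanging $a$ and $b$ (keeping $c$ fixed), as one reads off from the defining relation \eqref{omegatrip} — the $q^{4s}$ is the bookkeeping factor relating $\theta^c_s$ to $\tb^c,\tc^c$.

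I would organize the write-up as: (i) recall from \cite{Ter04} that a $q$-Racah Leonard pair has, in the primary eigenbasis, tridiagonal data given by explicit products of the standard $b_i$, $c_i$; (ii) record the dictionary between $(\tb^a,\tc^a,\tb^b,\tc^b,r_0)$ and Terwilliger's parameters, using \eqref{structc} to fix the structure constants and \eqref{bcdiamzeta} to eliminate $\tb^\diamond,\tc^\diamond$; (iii) substitute and simplify to obtain \eqref{ammtrip1}--\eqref{ammtrip3}; (iv) deduce \eqref{tridAb} and \eqref{AMNab} by the $a\leftrightarrow b$ symmetry of the construction, checking the transformation of $\omega^{\{a,b,c\}}$ under this swap. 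The main obstacle is step (iii): it is a purely computational but genuinely intricate trigonometric/$q$-identity simplification, where one must be careful that the chosen normalization of the eigenvectors $|\theta^a_M\rangle$ (which is only defined up to scalars) matches the normalization implicit in \eqref{ammtrip1}--\eqref{ammtrip2} — i.e.\ that the product $A^{(b,a)}_{M,M-1}A^{(b,a)}_{M-1,M}$ is normalization-independent and agrees with Terwilliger's $b_{M-1}c_M$, while any discrepancy in the individual factors is absorbed into a diagonal gauge transformation. Once the gauge is fixed consistently across \eqref{tridAa}--\eqref{tridAb}, the rest is bookkeeping.
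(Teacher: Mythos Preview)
Your strategy is the right one and matches the paper's: both reduce the lemma to Terwilliger's explicit formulas for Leonard pairs of $q$-Racah type via a parameter dictionary. The difference is in the route taken to set up that dictionary. The paper works directly with Terwilliger's Leonard-pair parameters $(h,h^*,s,s^*,r_1,r_2,d)$ from \cite[eqs.\ (134)--(137)]{Ter04}: after $q\to q^2$ it identifies $h\to\tc^a$, $h^*\to\tc^b$, $d\to 2s$, $s\to \tb^a(\tc^a)^{-1}q^{-2}$, $s^*\to \tb^b(\tc^b)^{-1}q^{-2}$, and crucially $r_1\to -\tb^b\tb^c(\tc^a)^{-1}r_0 q^{4s-1}$, $r_2\to -\tb^a\tc^c(\tc^b)^{-1}r_0 q^{-1}$, then reads the entries off from the bases labelled $d^*0^*0d$ and $d00^*d^*$ in \cite[p.~23]{Ter04}. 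No passage through the Askey--Wilson structure constants \eqref{structc} is needed.

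Your plan to go via $\{\rho,\omega,\eta,\eta^*\}$ and then back to recurrence coefficients is workable but more circuitous, and one point in it is off: the parameters $\tb^c,\tc^c$ are \emph{not} to be eliminated --- they are precisely the extra data (beyond the two spectra) that parametrize the split sequence $r_1,r_2$ of the pair $(A^a,A^b)$, and they survive in \eqref{ammtrip1}--\eqref{ammtrip2}. Invoking \eqref{bcdiamzeta} here is a red herring: that identity is specific to the $\diamond$-label and to the Bethe-ansatz parameter $\zeta$, whereas the present lemma is stated for a generic ordered triple of labels $(a,b,c)$. Once you replace your step~(ii) by the direct dictionary above, your steps (iii)--(iv) coincide with the paper's, and your observation about the $a\leftrightarrow b$ symmetry giving \eqref{AMNab} is exactly how the second eigenbasis is handled (the swap $\tb^c q^{4s}\leftrightarrow \tc^c$ is simply the effect of interchanging $r_1$ and $r_2$ under the role-swap in Terwilliger's parametrization).
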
 

\begin{proof}
	We adapt the results of \cite{Ter04} to our conventions: in \cite[eqs. (134)-(137)]{Ter04}, firstly  we substitute $q \rightarrow q^2$, $\theta_0 \rightarrow h(1+sq^2),  \theta^*_0 \rightarrow h^*(1+s^*q^2)$ and replace
	\beqa
	&& h \rightarrow \tc^a\ ,\quad h^*\rightarrow \tc^b\ ,\quad d \rightarrow 2s\ , \quad s \rightarrow \frac{\tb^a}{\tc^a}q^{-2}\ ,\quad s^* \rightarrow \frac{\tb^b}{\tc^b}q^{-2}\ , \label{conv1}\\
	&& r_1   \rightarrow -\frac{\tb^b\tb^c}{\tc^a}r_0 q^{4s-1}\ ,\quad
	r_2   \rightarrow -\frac{\tb^a\tc^c}{\tc^b}r_0 q^{-1}\ 
	\ . \label{conv2}
	\eeqa
Then, according to the above substitutions, expressions for the entries of $A^a$ (resp. $A^b$) in the eigenbasis of $A^b$ (resp. $A^a$) are computed using \cite[bases $d^*0^*0d$ and $d00^*d^*$ page 23]{Ter04}. 
\end{proof}

We now compute the action of $A^c$ as defined in (\ref{Apairabc}) on the eigenbases  $\{|\theta_M^a \rangle \}_{M=0}^{2s}$ and $\{|\theta^b_M\rangle \}_{M=0}^{2s}$ such that (\ref{tridAa}), (\ref{tridAb}) hold. Below, the proof of the first lemma is given. The second lemma is proven along the same line.
\begin{lem}\label{lem:f} Assume that $(A^c,A^a)$ is a Leonard pair of $q$-Racah type  with spectra   (\ref{thetatrip}). Then
		\beqa
	A^c |\theta^a_M\rangle =
	A^{(c,a)}_{M+1,M}\frac{f_{M+1}}{f_M} |\theta^a_{M+1}\rangle +  {A}^{(c,a)}_{M,M} |\theta^a_{M}\rangle +  A^{(c,a)}_{M-1,M}\frac{f_{M-1}}{f_M}|\theta^a_{M-1}\rangle \ ,\label{Acact}
	\eeqa
	where
\beqa
f_M = f_0 \prod_{k=0}^{M-1} \mathfrak{f}(k)\,,\quad \mathfrak{f}(k) = \frac{\tc^a q^{-2k-2}+r_0 q^{-1}\tc^b\tb^c}
{\tb^aq^{2k}+r_0q^{-1}\tc^b\tb^c}\ \label{fM}
\eeqa
	for some scalar $f_0\in {\mathbb C}^*$. One sets $f_{-1}=0$.
	\end{lem}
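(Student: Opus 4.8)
The plan is to substitute the $\mathbb{Z}_3$-symmetric presentation \eqref{Z3AW} of the Askey--Wilson relations into the eigenbasis $\{|\theta^a_M\rangle\}_{M=0}^{2s}$ of \eqref{tridAa}. Reading \eqref{Apairabc} for the cyclic triple $(a,b,c)$ gives $A^c = \frac{r_0}{q^2-q^{-2}}[A^b,A^a]_q + \frac{r_0\,\omega^{\{a,b,c\}}}{(q-q^{-1})(q^2-q^{-2})}{\mathbb I}$, so first I would compute $[A^b,A^a]_q|\theta^a_M\rangle = qA^bA^a|\theta^a_M\rangle-q^{-1}A^aA^b|\theta^a_M\rangle$ using $A^a|\theta^a_M\rangle=\theta^a_M|\theta^a_M\rangle$ together with the tridiagonal action \eqref{tridAa} of $A^b$. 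The result is tridiagonal: the coefficient of $|\theta^a_{M\pm1}\rangle$ equals $A^{(b,a)}_{M\pm1,M}\bigl(q\theta^a_M-q^{-1}\theta^a_{M\pm1}\bigr)$ and the diagonal coefficient equals $(q-q^{-1})\theta^a_M\,A^{(b,a)}_{M,M}$. Substituting $\theta^a_M=\tb^a q^{2M}+\tc^a q^{-2M}$ from \eqref{thetatrip} collapses the brackets,
\[
q\theta^a_M-q^{-1}\theta^a_{M+1}=(q^2-q^{-2})\,\tc^a q^{-2M-1},\qquad q\theta^a_M-q^{-1}\theta^a_{M-1}=(q^2-q^{-2})\,\tb^a q^{2M-1},
\]
so that
\[
A^c|\theta^a_M\rangle = r_0\tc^a q^{-2M-1}A^{(b,a)}_{M+1,M}\,|\theta^a_{M+1}\rangle + (\cdots)\,|\theta^a_M\rangle + r_0\tb^a q^{2M-1}A^{(b,a)}_{M-1,M}\,|\theta^a_{M-1}\rangle .
\]

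Next I would use that $(A^c,A^a)$ is a Leonard pair of $q$-Racah type with spectra \eqref{thetatrip}, so by Lemma \ref{lem:Aab} (applied with $A^a$ in the diagonal role, $A^c$ in the tridiagonal one, and — since \eqref{Z3AW} read for the rotated triple $(c,a,b)$ has accompanying third operator $A^b$ — the third spectral parameter being $(\tb^b,\tc^b)$) there is an eigenbasis $\{|\theta^a_M\rangle'\}$ of $A^a$ on which $A^c$ acts tridiagonally with entries $A^{(c,a)}_{M\pm1,M}$ and $A^{(c,a)}_{M,M}$ obtained from \eqref{ammtrip1}--\eqref{ammtrip3} by the label exchange $\tb^b\leftrightarrow\tb^c$, $\tc^b\leftrightarrow\tc^c$. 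Since $\{|\theta^a_M\rangle\}$ and $\{|\theta^a_M\rangle'\}$ are eigenbases of $A^a$ for the same multiplicity-free spectrum, they are related by a diagonal change of basis $|\theta^a_M\rangle'=f_M|\theta^a_M\rangle$ with $f_M\in{\mathbb C}^*$, $f_0$ arbitrary; under it the diagonal entry is unchanged (so the term $A^{(c,a)}_{M,M}|\theta^a_M\rangle$ in \eqref{Acact} is automatic, and the raw coefficient $(\cdots)$ necessarily equals $A^{(c,a)}_{M,M}$) while the tridiagonal action takes exactly the shape \eqref{Acact}. Matching the coefficient of $|\theta^a_{M+1}\rangle$ in the two expressions for $A^c|\theta^a_M\rangle$ then forces
\[
\frac{f_{M+1}}{f_M}=r_0\,\tc^a\,q^{-2M-1}\,\frac{A^{(b,a)}_{M+1,M}}{A^{(c,a)}_{M+1,M}} .
\]

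It remains to identify this ratio with $\mathfrak{f}(M)$. Inserting \eqref{ammtrip1} (with $M\mapsto M+1$) and its $b\leftrightarrow c$ counterpart, every factor independent of $b,c$ cancels in the ratio $A^{(b,a)}_{M+1,M}/A^{(c,a)}_{M+1,M}$ — the prefactor $q^{2-4s}$, $(1-q^{2M+2})$, $(\tc^a-\tb^a q^{2M+2+4s})$, both denominators — as does $(\tb^b\tb^c r_0 q^{4s-1}+\tb^a q^{2M})$, and one is left with $\frac{f_{M+1}}{f_M}=r_0\tc^a q^{-2M-1}\,\frac{\tc^a\tc^c r_0 q^{-1}+\tc^b q^{2M}}{\tc^a\tc^b r_0 q^{-1}+\tc^c q^{2M}}$. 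Using the constraint $\tb^a\tc^a=\tb^b\tc^b=\tb^c\tc^c=r_0^{-2}$ of \eqref{omegatrip} to write $r_0 q^{-1}\tc^b\tb^c=\tc^b q^{-1}/(r_0\tc^c)$ and $\tb^a q^{2M}=q^{2M}/(r_0^2\tc^a)$, a short cross-multiplication shows the right-hand side equals $\mathfrak{f}(M)=\frac{\tc^a q^{-2M-2}+r_0 q^{-1}\tc^b\tb^c}{\tb^a q^{2M}+r_0 q^{-1}\tc^b\tb^c}$; telescoping then gives $f_M=f_0\prod_{k=0}^{M-1}\mathfrak{f}(k)$, and the convention $f_{-1}=0$ is compatible with the $M=0$ row because $A^{(c,a)}_{-1,0}=0$. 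The subdiagonal identity $A^{(c,a)}_{M-1,M}\frac{f_{M-1}}{f_M}=r_0\tb^a q^{2M-1}A^{(b,a)}_{M-1,M}$ is then automatic: the change of basis relating the two $A^a$-eigenbases is diagonal, so once $f_M$ is pinned down by the superdiagonal the subdiagonal matches as well (it can also be checked directly from \eqref{ammtrip2} by the same cancellations).

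The step I expect to be the main obstacle is the $\mathbb{Z}_3$-symmetry bookkeeping — establishing precisely which relabeling of \eqref{ammtrip1}--\eqref{ammtrip3} produces $A^{(c,a)}$, in particular that it is just the exchange $\tb^b\leftrightarrow\tb^c$, $\tc^b\leftrightarrow\tc^c$ with no compensating power of $q^{4s}$ such as appears in \eqref{AMNab} — together with the $q$-shifted-factorial manipulation that recasts the telescoping ratio as the closed form \eqref{fM}; the commutator computation itself is short. The companion statement for the action of $A^c$ on $\{|\theta^b_M\rangle\}$ follows by the same argument, now using the tridiagonal action \eqref{tridAb} of $A^a$ on the $A^b$-eigenbasis.
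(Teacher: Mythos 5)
Your proposal is correct and follows essentially the same route as the paper's proof: expand $A^c$ via the $\mathbb{Z}_3$-symmetric relation \eqref{Apairabc} on the $A^a$-eigenbasis to get the entries $q^{-2M-1}r_0\tc^a A^{(b,a)}_{M+1,M}$ and $q^{2M-1}r_0\tb^a A^{(b,a)}_{M-1,M}$, then invoke the Leonard pair $(A^c,A^a)$ to produce the rescaled eigenbasis $|\bar\theta^a_M\rangle=f_M|\theta^a_M\rangle$ and compare coefficients to extract $f_{M+1}/f_M$. Your explicit simplification of the ratio to $\mathfrak{f}(M)$ (using $\tb^x\tc^x=r_0^{-2}$) is the step the paper leaves as ``after simplifications,'' and your worry about the relabeling of $A^{(c,a)}$ is immaterial since the two natural readings give the same entries under that same constraint.
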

\begin{proof} On one hand, using (\ref{Apairabc}) we find:
	 	\beqa
	 A^c |\theta^a_M\rangle =
	 \hat{A}^{(c,a)}_{M+1,M} |\theta^a_{M+1}\rangle +  \hat{A}^{(c,a)}_{M,M} |\theta^a_{M}\rangle +  \hat{A}^{(c,a)}_{M-1,M}|\theta^a_{M-1}\rangle \ ,\label{Acactbis}
	 \eeqa
	 where
	 \beqa
	 \hat{A}^{(c,a)}_{M+1,M}=q^{-2M-1}r_0\tc^a A^{(b,a)}_{M+1,M}\,,
	 \quad
	 \hat{A}^{(c,a)}_{M-1,M}=q^{2M-1}r_0\tb^a A^{(b,a)}_{M-1,M}\,,
	 \eeqa
	 \beqa
	 \hat{A}^{(c,a)}_{M,M}=\frac{r_0\omega^{\{a,b\}}}{(q-q^{-1})(q^2-q^{-2})}
	 +\frac{r_0}{q+q^{-1}}\theta^a_M A^{(b,a)}_{M,M}\ .
	 \eeqa
	On the other hand, since $(A^c,A^a)$ is a Leonard pair, there exists a basis of $A^a$ with eigenvectors  $|\bar\theta^a_M\rangle = f_M |\theta^a_M\rangle$ such that:
	\beqa
	\qquad \quad A^a |\bar\theta^a_M\rangle &=& \theta^a_M|\bar\theta^a_M\rangle \ ,
	\quad A^c |\bar\theta^a_M\rangle =
	A^{(c,a)}_{M+1,M} |\bar\theta^a_{M+1}\rangle +  A^{(c,a)}_{M,M} |\bar\theta^a_{M}\rangle +  A^{(c,a)}_{M-1,M}|\bar\theta^a_{M-1}\rangle \ .\label{tridAac}
	\eeqa
	Comparing the second equation in
	(\ref{tridAac}) to (\ref{Acact}), after simplifications one gets the desired coefficients. In particular,  the identity relating the diagonal coefficients in (\ref{Acact}) and (\ref{tridAac}) leads to the non-trivial relation
	\beqa
	\frac{r_0\omega^{\{a,b,c\}}}{(q-q^{-1})(q^2-q^{-2})}
	+\frac{r_0}{q+q^{-1}}\theta^a_M\left(\theta^b_0 -  A^{(b,a)}_{M,M+1} - A^{(b,a)}_{M,M-1}\right)=
	\theta^c_0 -  A^{(c,a)}_{M,M+1} - A^{(c,a)}_{M,M-1}\, ,
	\eeqa
	which is readily checked.
\end{proof}
\begin{lem}\label{lem:g} Assume that $(A^b,A^c)$ is a Leonard pair of $q$-Racah type  with spectra   (\ref{thetatrip}). Then
	\beqa
	A^c |\theta^b_M\rangle =
	A^{(c,b)}_{M+1,M}\frac{g_{M+1}}{g_M} |\theta^b_{M+1}\rangle +  {A}^{(c,b)}_{M,M} |\theta^b_{M}\rangle +  A^{(c,b)}_{M-1,M}\frac{g_{M-1}}{g_M}|\theta^b_{M-1}\rangle \ ,\label{Acact1}
	\eeqa
	where
	\beqa
	g_M = g_0 \prod_{k=0}^{M-1} \mathfrak{g}(k)\,,\quad
	\mathfrak{g}(k) = q^{4k}\frac{\tb^b}{\tc^b}\frac{\tc^bq^{-2k}+r_0 q \tc^a\tb^c}
	{\tb^bq^{2k}+r_0q^{-1}\tc^a\tb^c}  \label{gM}
	\eeqa
	for some scalar $g_0\in {\mathbb C}^*$. One sets $g_{-1}=0$.
\end{lem}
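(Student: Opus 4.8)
The plan is to mimic exactly the argument used in the proof of Lemma \ref{lem:f}, with the roles of the eigenbasis and the Leonard pair adjusted. That is, I would establish \eqref{Acact1} by computing the action of $A^c$ on the basis $\{|\theta^b_M\rangle\}_{M=0}^{2s}$ in two ways and matching them.

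First I would use the $\mathbb{Z}_3$-symmetric relation \eqref{Apairabc} (equivalently the defining relation \eqref{Apairabc} for the appropriate ordered triple $(b,c,a)$, i.e.\ $\tfrac{r_0}{q^2-q^{-2}}[A^c,A^b]_q - A^a + \tfrac{r_0\,\omega^{\{b,c,a\}}}{(q-q^{-1})(q^2-q^{-2})}=0$, rearranged to express $A^c$ through $[A^b,A^a]_q$ and $A^a$) together with the tridiagonal action \eqref{tridAb} of $A^a$ on $\{|\theta^b_M\rangle\}$ and the eigenvalue relation $A^b|\theta^b_M\rangle=\theta^b_M|\theta^b_M\rangle$. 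This gives an explicit tridiagonal expression
\begin{eqnarray}
A^c|\theta^b_M\rangle = \hat{A}^{(c,b)}_{M+1,M}|\theta^b_{M+1}\rangle + \hat{A}^{(c,b)}_{M,M}|\theta^b_{M}\rangle + \hat{A}^{(c,b)}_{M-1,M}|\theta^b_{M-1}\rangle\,,\nonumber
\end{eqnarray}
with $\hat{A}^{(c,b)}_{M\pm1,M}$ proportional to $q^{\mp(2M\pm\cdots)}$-type prefactors times $A^{(a,b)}_{M\pm1,M}$ of \eqref{AMNab}, and $\hat{A}^{(c,b)}_{M,M}$ built from $\omega^{\{b,\cdot\}}$ and $\theta^b_M A^{(a,b)}_{M,M}$, completely analogous to the three displayed formulas in the proof of Lemma \ref{lem:f} but with $a\to b$ and the $\tb^a,\tc^a$-data replaced by $\tb^b,\tc^b$ (and the $r_0$-dressings carrying $\tc^a\tb^c$ rather than $\tc^b\tb^c$).

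Second, since $(A^b,A^c)$ is assumed to be a Leonard pair of $q$-Racah type with the stated spectra, there is a genuine eigenbasis $|\bar\theta^b_M\rangle = g_M|\theta^b_M\rangle$ of $A^b$ on which $A^c$ acts in the canonical tridiagonal form with off-diagonal entries $A^{(c,b)}_{M\pm1,M}$ (obtained from Lemma \ref{lem:Aab} with $(a,b)\to(c,b)$). Comparing this canonical action, rewritten in the original basis $\{|\theta^b_M\rangle\}$, with the $\hat A$-expression above forces
\begin{eqnarray}
A^{(c,b)}_{M+1,M}\frac{g_{M}}{g_{M+1}} = \hat{A}^{(c,b)}_{M+1,M}\,,\qquad A^{(c,b)}_{M-1,M}\frac{g_{M}}{g_{M-1}} = \hat{A}^{(c,b)}_{M-1,M}\,,\nonumber
\end{eqnarray}
so that the ratio $g_{M+1}/g_M = \mathfrak{g}(M)$ is pinned down; solving this recursion with free $g_0$ and $g_{-1}=0$ gives \eqref{gM}. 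One then checks, exactly as in Lemma \ref{lem:f}, the compatibility condition for the diagonal entries — the identity
\begin{eqnarray}
\frac{r_0\omega^{\{b,c,a\}}}{(q-q^{-1})(q^2-q^{-2})} + \frac{r_0}{q+q^{-1}}\theta^b_M\big(\theta^a_0 - A^{(a,b)}_{M,M+1} - A^{(a,b)}_{M,M-1}\big) = \theta^c_0 - A^{(c,b)}_{M,M+1} - A^{(c,b)}_{M,M-1}\,,\nonumber
\end{eqnarray}
which must hold for the construction to be consistent and is a direct (if tedious) verification using the explicit entries of Lemma \ref{lem:Aab}.

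The main obstacle is bookkeeping rather than conceptual: one must correctly track which spectral data ($\tb^a,\tc^a$ versus $\tb^b,\tc^b$ versus $\tb^c,\tc^c$, and which $r_0$-combinations) enter each coefficient after the substitution \eqref{AMNab} and the $\mathbb{Z}_3$-relabelling, and then verify that the resulting product telescopes to the clean form \eqref{gM}. The extra factor $q^{4k}\tb^b/\tc^b$ in $\mathfrak{g}(k)$, absent from $\mathfrak{f}(k)$, arises precisely from the asymmetry in the $q$-commutator $[A^c,A^b]_q$ versus $[A^c,A^a]_q$ (i.e.\ from the order in which $A^b$ appears in \eqref{Apairabc} for the triple $(b,c,a)$ compared to $(\cdot,*,\diamond)$), so the key check is that this prefactor comes out with the stated power of $q$ and the stated ratio; everything else proceeds verbatim as in Lemma \ref{lem:f}.
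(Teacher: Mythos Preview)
Your approach is essentially the same as the paper's, which simply states that Lemma~\ref{lem:g} is ``proven along the same line'' as Lemma~\ref{lem:f}. One small point of confusion: you do not need (and cannot easily use) the $\mathbb{Z}_3$-cycled relation for the ordered triple $(b,c,a)$; rather, you use the \emph{same} relation \eqref{Apairabc} for $(a,b,c)$, namely $A^c=\tfrac{r_0}{q^2-q^{-2}}[A^b,A^a]_q+\tfrac{r_0\omega^{\{a,b,c\}}}{(q-q^{-1})(q^2-q^{-2})}$, and act on $|\theta^b_M\rangle$ instead of $|\theta^a_M\rangle$ --- the asymmetry you correctly identify (and the extra $q^{4k}\tb^b/\tc^b$ in $\mathfrak{g}(k)$) comes from $A^b$ now being diagonal while $A^a$ is tridiagonal, so the $q$-commutator produces prefactors $r_0\tb^bq^{2M+1}$ and $r_0\tc^bq^{-2M+1}$ rather than $r_0\tc^aq^{-2M-1}$ and $r_0\tb^aq^{2M-1}$. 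Accordingly, the constant in your diagonal compatibility identity should be $\omega^{\{a,b,c\}}$, not $\omega^{\{b,c,a\}}$.
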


We now introduce the eigenbasis $\{|\theta_M^c \rangle \}_{M=0}^{2s}$ of $A^c$ and give  the action of $A^b$ and $A^a$ on its vectors. Adapting the results in \cite[eqs. (134)-(137)]{Ter04} to our conventions, one gets the following lemma.
\begin{lem} Assume that $(A^b,A^c)$ is a Leonard pair of $q$-Racah type. There exists an eigenbasis $\{|\theta_M^c \rangle \}_{M=0}^{2s}$ of $A^c$ such that:
\beqa
A^b |\theta^c_M\rangle =
A^{(b,c)}_{M+1,M} |\theta^c_{M+1}\rangle +  A^{(b,c)}_{M,M} |\theta^c_{M}\rangle +  A^{(b,c)}_{M-1,M}|\theta^c_{M-1}\rangle \ ,\label{tridAc}
\eeqa
where $A^{(b,c)}_{M,N}$ is given by (\ref{AMNab}) for $(a,b,c)\rightarrow (b,c,a)$.
\end{lem}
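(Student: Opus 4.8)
The plan is to reduce this last Lemma to the already-proven Lemma~\ref{lem:Aab} by a relabelling of the parameters, together with the observation that the eigenbasis of $A^c$ can always be chosen so that the off-diagonal coefficients of $A^b$ in that basis take the standard $q$-Racah form. Concretely, since $(A^b,A^c)$ is assumed to be a Leonard pair of $q$-Racah type with spectra given by \eqref{thetatrip} (with eigenvalue sequence $\{\theta^b_M\}$ for $A^b$ and $\{\theta^c_M\}$ for $A^c$), Lemma~\ref{lem:Aab} applies verbatim after the substitution $(a,b,c)\rightarrow(c,b,a)$ of the roles of the three labels: it produces an eigenbasis $\{|\theta^c_M\rangle\}_{M=0}^{2s}$ of $A^c$ in which $A^b$ acts tridiagonally, with the sub- and super-diagonal entries given by \eqref{ammtrip1}--\eqref{ammtrip3} under that relabelling, i.e. by $A^{(b,c)}_{M,N}=A^{(a,b)}_{M,N}|_{(a,b,c)\rightarrow(b,c,a)}$. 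Combining this with \eqref{AMNab}, which already expresses $A^{(a,b)}_{M,N}$ in terms of $A^{(b,a)}_{M,N}$, yields the claimed formula $A^{(b,c)}_{M,N}=A^{(b,a)}_{M,N}|_{(a,b,c)\rightarrow(b,c,a)}$.

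First I would invoke the hypothesis that $(A^b,A^c)$ is a Leonard pair of $q$-Racah type, so that the pair $(A^c,A^b)$ has eigenvalue sequences $\{\theta^c_M\}$, $\{\theta^b_M\}$ of the shape \eqref{st}; this is exactly the input Lemma~\ref{lem:Aab} requires, with the dummy labels $a$ and $b$ there taken to be our $c$ and $b$. Second, I would apply Lemma~\ref{lem:Aab} to this pair to obtain an eigenbasis $\{|\theta^c_M\rangle\}_{M=0}^{2s}$ of $A^c$ on which $A^b$ acts by \eqref{tridAc}, with the tridiagonal entries read off from \eqref{ammtrip1}--\eqref{ammtrip3} after the index permutation. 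Third, I would track the permutation $(a,b,c)\mapsto(c,b,a)$ through the parameter substitutions: the spectral parameters $\tb^a,\tc^a,\tb^b,\tc^b$ in Lemma~\ref{lem:Aab} become $\tb^c,\tc^c,\tb^b,\tc^b$, and the auxiliary parameter carrying label $c$ (which enters through $\tb^c q^{4s}$ and $\tc^c$ in \eqref{AMNab}) becomes the one carrying label $a$; composing with the $\tb^a\leftrightarrow\tb^b$, $\tc^a\leftrightarrow\tc^b$, $\tb^cq^{4s}\leftrightarrow\tc^c$ swap of \eqref{AMNab} gives precisely the stated rule. Finally, the diagonal entry $A^{(b,c)}_{M,M}$ is fixed by the trace-type normalization $A^{(b,c)}_{M,M}=\theta^b_0-A^{(b,c)}_{M,M+1}-A^{(b,c)}_{M,M-1}$ inherited from \eqref{ammtrip3}, which is again consistent with the claimed formula.

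The main obstacle, such as it is, lies in bookkeeping rather than in mathematics: one must be careful that the symmetry of the $q$-Racah parameter assignment is genuinely the one induced by the cyclic relabelling $(.\,,*,\diamond)\to(*,\diamond,.)\to(\diamond,.\,,*)$, and in particular that the role played by the ``third'' label (the one not belonging to the pair) is correctly transported. Because $A^c$ is defined by the $\mathbb{Z}_3$-symmetric relation \eqref{Z3AW}, the triple $(A,A^*,A^\diamond)$ is symmetric under this cyclic relabelling up to the corresponding permutation of $(\tb,\tc),(\tb^*,\tc^*),(\tb^\diamond,\tc^\diamond)$ and the constants $\omega^{\{a,b,c\}}$; hence each application of Lemma~\ref{lem:Aab} to a cyclically relabelled pair is legitimate. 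One should also check that the eigenbasis of $A^c$ supplied by Lemma~\ref{lem:Aab} is compatible (up to the harmless diagonal rescalings already seen in Lemmas~\ref{lem:f} and \ref{lem:g}) with the normalizations used elsewhere, but since \eqref{tridAc} only fixes the basis up to such rescalings this causes no difficulty. With these points verified, the proof is a one-line appeal to Lemma~\ref{lem:Aab} under relabelling, which is why the statement is recorded without a separate detailed argument.
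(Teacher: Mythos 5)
Your overall strategy---obtain the lemma by relabelling Lemma \ref{lem:Aab}, which is itself the adaptation of \cite[eqs. (134)--(137)]{Ter04}---is the same as the paper's (the paper records the lemma with no further argument for exactly this reason). But the bookkeeping, which you yourself flag as the only real content, is not carried out correctly. You apply Lemma \ref{lem:Aab} to the pair $(A^c,A^b)$, i.e.\ with the substitution $(a,b,c)\to(c,b,a)$, and read off the first relation \eqref{tridAa}; this produces the coefficients \eqref{ammtrip1}--\eqref{ammtrip3} with $a$ and $c$ interchanged, that is $A^{(b,a)}_{M,N}\vert_{(a,b,c)\to(c,b,a)}$. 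The lemma instead asserts the coefficients \eqref{AMNab} under $(a,b,c)\to(b,c,a)$, i.e.\ $A^{(c,b)}_{M,N}\vert_{\tb^b\leftrightarrow\tb^c,\;\tc^b\leftrightarrow\tc^c,\;\tb^a q^{4s}\leftrightarrow\tc^a}$. As substitution rules these two prescriptions are not the same: they differ by the extra swap $\tb^a q^{4s}\leftrightarrow\tc^a$ in the slot of the third member of the triple, so your ``i.e.''\ hides a genuine identity rather than a tautology, and your final displayed identification ``$A^{(b,c)}_{M,N}=A^{(b,a)}_{M,N}|_{(a,b,c)\to(b,c,a)}$'' is not the lemma's formula at all (that relabelling of $A^{(b,a)}$ is $A^{(c,b)}$, the action of $A^c$ on the $A^b$-eigenbasis). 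The identification you assert does hold, but only because the third member enters \eqref{ammtrip1}--\eqref{ammtrip2} solely through $r_0$ and $\theta^a_s=\tb^a q^{2s}+\tc^a q^{-2s}$, which are invariant under $\tb^a q^{4s}\leftrightarrow\tc^a$; verifying this uses $\tb^a\tc^a=\tb^b\tc^b=\tb^c\tc^c=r_0^{-2}$ from \eqref{omegatrip} and is nowhere in your argument. The point is not pedantic: this lemma is precisely what fixes the normalization of $\{|\theta^c_M\rangle\}$ used in Lemma \ref{lem:h} and in the transition matrices \eqref{ctob}, \eqref{ctoa}, so ``up to harmless diagonal rescalings'' is not an admissible disclaimer here---a different choice of off-diagonal coefficients would change those later formulas.

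The clean route, and the one the paper intends, avoids the issue entirely: apply Lemma \ref{lem:Aab} with the cyclic substitution $(a,b,c)\to(b,c,a)$, so that the pair is $(A^b,A^c)$ with third member $A^a$, and read off the \emph{second} relation \eqref{tridAb}; its coefficients are exactly \eqref{AMNab} under that relabelling, which is the statement verbatim, with no auxiliary identity required. If you prefer to keep your $(c,b,a)$ route through \eqref{tridAa}, you must add the short computation showing that \eqref{ammtrip1}--\eqref{ammtrip3} are unchanged under $\tb^a q^{4s}\leftrightarrow\tc^a$, after which your argument closes the gap.
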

The previous lemma fixes the normalizations of the eigenvectors $|\theta^c_{M}\rangle $. For the proof of the following lemma, we use (\ref{Apairabc}) with the substitution $(a,b,c)\rightarrow (b,c,a)$ and (\ref{tridAc}), and follow the same reasoning as for Lemmas \ref{lem:f}, \ref{lem:g}.
\begin{lem}\label{lem:h} Assume that $(A^c,A^a)$ is a Leonard pair of $q$-Racah type  with spectra   (\ref{thetatrip}). Then
	\beqa
	A^a |\theta^c_M\rangle =
	A^{(a,c)}_{M+1,M}\frac{h_{M+1}}{h_M} |\theta^c_{M+1}\rangle +  {A}^{(a,c)}_{M,M} |\theta^c_{M}\rangle +  A^{(a,c)}_{M-1,M}\frac{h_{M-1}}{h_M}|\theta^c_{M-1}\rangle \ ,\label{Aaact}
	\eeqa
	where
		\beqa
h_M = h_0 \prod_{k=0}^{M-1} \mathfrak{h}(k)\,,\quad \mathfrak{h}(k) = 
q^{4k}\frac{\tb^c}{\tc^c}\frac{\tc^cq^{-2k}+r_0 q \tc^b\tb^a}
{\tb^cq^{2k}+r_0q^{-1}\tc^b\tb^a} \label{hM}
	\eeqa
	for some scalar $h_0\in {\mathbb C}^*$. One sets $h_{-1}=0$.
\end{lem}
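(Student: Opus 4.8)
The plan is to mirror exactly the argument used for Lemma \ref{lem:f}, but now with the $\mathbb{Z}_3$-cyclic relabelling $(a,b,c)\rightarrow(b,c,a)$ applied to \eqref{Apairabc}. First I would write out the defining relation \eqref{Apairabc} with this substitution, namely $\frac{r_0}{(q^2-q^{-2})}[A^a,A^c]_q - A^b + \frac{r_0\,\omega^{\{c,a,b\}}}{(q-q^{-1})(q^2-q^{-2})}=0$ (taking the member of the triple of triples that expresses $A^a$ in terms of $A^c$ and $A^a$ acting cyclically), and use it to express $A^a|\theta^c_M\rangle$ in terms of the already-known tridiagonal action \eqref{tridAc} of $A^b$ on the eigenbasis $\{|\theta^c_M\rangle\}$ together with the diagonal action of $A^c$. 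This produces an expression of the form
\beqa
A^a|\theta^c_M\rangle = \hat{A}^{(a,c)}_{M+1,M}|\theta^c_{M+1}\rangle + \hat{A}^{(a,c)}_{M,M}|\theta^c_M\rangle + \hat{A}^{(a,c)}_{M-1,M}|\theta^c_{M-1}\rangle\,,
\eeqa
where the super/sub-diagonal entries are, up to the cyclic relabelling, $q^{\mp(2M\pm1)}r_0\tc^{\,c}$ (resp. $q^{2M-1}r_0\tb^{\,c}$) times the corresponding entry $A^{(b,c)}_{M\pm1,M}$ of \eqref{tridAc}, and the diagonal entry is $\frac{r_0\omega^{\{c,a\}}}{(q-q^{-1})(q^2-q^{-2})}+\frac{r_0}{q+q^{-1}}\theta^c_M A^{(b,c)}_{M,M}$.

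Next, since $(A^c,A^a)$ is by hypothesis a Leonard pair of $q$-Racah type, standard Leonard-pair theory guarantees a renormalised eigenbasis $|\bar\theta^c_M\rangle = h_M|\theta^c_M\rangle$ of $A^c$ on which $A^a$ acts in the canonical irreducible tridiagonal shape with entries $A^{(a,c)}_{M\pm1,M}$, $A^{(a,c)}_{M,M}$ — exactly the entries one gets from the formula in the lemma preceding this one (the $(a,b,c)\rightarrow(b,c,a)$ specialization of \eqref{AMNab}) applied with roles $(A^c,A^a)$. Comparing the off-diagonal coefficients of the two expressions for $A^a|\theta^c_M\rangle$ forces the recursion $h_{M+1}/h_M = \hat{A}^{(a,c)}_{M+1,M}/A^{(a,c)}_{M+1,M}$ and equivalently $h_{M-1}/h_M = \hat{A}^{(a,c)}_{M-1,M}/A^{(a,c)}_{M-1,M}$; solving either telescoping product yields the closed form $h_M = h_0\prod_{k=0}^{M-1}\mathfrak{h}(k)$ with $\mathfrak{h}(k)$ as stated in \eqref{hM}, the constant $h_0$ being the remaining normalization freedom. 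One sets $h_{-1}=0$ consistently with the truncation at the ends of the spectrum.

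Finally, one must check consistency: the two determinations of the ratios $h_{M+1}/h_M$ and $h_{M-1}/h_M$ must agree, which amounts to verifying $\hat{A}^{(a,c)}_{M+1,M}\hat{A}^{(a,c)}_{M,M-1} = A^{(a,c)}_{M+1,M}A^{(a,c)}_{M,M-1}$ after the telescoping, and the diagonal entries must match, i.e.\ $\hat{A}^{(a,c)}_{M,M} = A^{(a,c)}_{M,M}$; this last is the genuine content and is the analog of the "non-trivial relation" flagged at the end of the proof of Lemma \ref{lem:f}. I expect this diagonal identity to be the only real obstacle — it is a rational-function identity in $q^{2M}$ involving the parameters $\tb^a,\tc^a,\tb^b,\tc^b,\tb^c,\tc^c$ subject to $r_0^{-2}=\tb^a\tc^a=\tb^b\tc^b=\tb^c\tc^c$, and I would reduce it, exactly as in Lemma \ref{lem:f}, to the statement
\beqa
\frac{r_0\omega^{\{c,a,b\}}}{(q-q^{-1})(q^2-q^{-2})} + \frac{r_0}{q+q^{-1}}\theta^c_M\left(\theta^a_0 - A^{(b,c)}_{M,M+1} - A^{(b,c)}_{M,M-1}\right) = \theta^b_0 - A^{(a,c)}_{M,M+1} - A^{(a,c)}_{M,M-1}\,,
\eeqa
which, after substituting the explicit forms \eqref{ammtrip1}--\eqref{ammtrip3}, \eqref{AMNab} and $\omega^{\{c,a,b\}}$ from \eqref{omegatrip}, becomes a polynomial identity that is "readily checked" (conveniently by a computer algebra verification, or by clearing denominators and comparing coefficients of powers of $q^{2M}$). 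Everything else in the proof is a direct transcription of the $\mathbb{Z}_3$-cycle of Lemmas \ref{lem:f} and \ref{lem:g}, so no new ideas are needed beyond bookkeeping of the cyclic substitution.
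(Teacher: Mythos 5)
Your strategy is exactly the one the paper uses (its proof is literally ``apply \eqref{Apairabc} with $(a,b,c)\rightarrow(b,c,a)$ together with \eqref{tridAc} and repeat the reasoning of Lemmas \ref{lem:f}, \ref{lem:g}''), and the comparison-of-two-tridiagonal-actions / telescoping-ratio / diagonal-identity skeleton you describe is correct. However, several of your explicit formulas are transcriptions from the wrong member of the ${\mathbb Z}_3$ cycle and would derail the computation if followed literally. First, the substitution $(a,b,c)\rightarrow(b,c,a)$ in \eqref{Apairabc} gives
\beqa
\frac{r_0}{(q^2-q^{-2})}\big[A^c,A^b\big]_q - A^a + \frac{r_0\,\omega^{\{b,c,a\}}}{(q-q^{-1})(q^2-q^{-2})}=0\ ,\nonumber
\eeqa
which expresses $A^a$ directly through the known actions of $A^b$ (tridiagonal, \eqref{tridAc}) and $A^c$ (diagonal); the relation you displayed, $\frac{r_0}{(q^2-q^{-2})}[A^a,A^c]_q - A^b+\dots=0$, is the $(c,a,b)$ member, in which $A^a$ only occurs inside a $q$-commutator, so extracting $A^a|\theta^c_M\rangle$ from it would require an extra inversion argument (non-vanishing of $q\theta^c_M-q^{-1}\theta^c_{M'}$) that is not needed in the intended proof.

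Second, and more consequentially, the hat coefficients are \emph{not} the naive relabelling of those in Lemma \ref{lem:f}, because here the diagonal operator $A^c$ stands on the \emph{left} of the $q$-commutator $[A^c,A^b]_q$, whereas in Lemma \ref{lem:f} the diagonal operator sat on the right of $[A^b,A^a]_q$. A direct computation gives $q\theta^c_{M+1}-q^{-1}\theta^c_M=(q^2-q^{-2})\,\tb^c q^{2M+1}$ and $q\theta^c_{M-1}-q^{-1}\theta^c_M=(q^2-q^{-2})\,\tc^c q^{1-2M}$, hence
\beqa
\hat{A}^{(a,c)}_{M+1,M}=q^{2M+1}r_0\tb^c A^{(b,c)}_{M+1,M}\,,\qquad
\hat{A}^{(a,c)}_{M-1,M}=q^{1-2M}r_0\tc^c A^{(b,c)}_{M-1,M}\,,\nonumber
\eeqa
i.e.\ the roles of $\tb^c$ and $\tc^c$ and the sign of the exponent are exchanged relative to what you wrote; with your coefficients the telescoping would not reproduce $\mathfrak{h}(k)$ in \eqref{hM}. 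Correspondingly, the constant term in the diagonal entry is $\frac{r_0\,\omega^{\{b,c,a\}}}{(q-q^{-1})(q^2-q^{-2})}$ (not $\omega^{\{c,a,b\}}$), and in the final consistency identity the combinations appearing are $\theta^b_0-A^{(b,c)}_{M,M+1}-A^{(b,c)}_{M,M-1}$ on the left and $\theta^a_0-A^{(a,c)}_{M,M+1}-A^{(a,c)}_{M,M-1}$ on the right (you have $\theta^a_0$ and $\theta^b_0$ swapped). With these corrections your argument coincides with the paper's proof.
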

\begin{rem} The structure constants in (\ref{AWtriple}) are recovered by applying \cite[Theorem 4.5 and 5.3]{T04} for the spectra (\ref{thetatrip}) and diagonal entries (\ref{ammtrip3}).
\end{rem}

Up to now, we have assumed that $(A^a,A^b)$, $(A^b,A^c)$ and $(A^c,A^a)$ are Leonard pairs. In order to show that $(A^a,A^b,A^c)$  is a Leonard triple, it remains to identify the conditions such that those assumptions hold.
\begin{prop}  If the following conditions
	\beqa
	&& {\rm{(i)}} \quad \ \tb^a(\tc^a)^{-1}\neq q^{-2M}\ ,\quad
	\tb^b(\tc^b)^{-1}\neq q^{-2M}\ ,\quad
	\tb^c(\tc^c)^{-1}\neq q^{-2M}\ , \quad \mbox{for}\quad 1 \leq M \leq   4s-1\ ,\nonumber\\
	&&{\rm{(ii)}}\nonumber \quad
	\tc^b\neq -r_0q^{-2M+1} \tc^a\tc^c,
	\quad
	\tb^a\neq -r_0q^{-2M+1+4s} \tb^b\tb^c,
	\quad
	\tc^a \neq -r_0q^{2M+4s-1}  \tb^b\tb^c,
	\quad
	\tc^b \neq -r_0q^{2M-1} \tb^a \tc^c,
	\quad
	\\ &&  \nonumber \qquad \quad\quad\mbox{for}\quad 1 \leq M \leq   2s \ ,
	\eeqa
	are satisfied, then $(A^a,A^b,A^c)$ is a Leonard triple of $q$-Racah type.
\end{prop}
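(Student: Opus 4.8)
The plan is to reduce the statement to the three pairwise assertions that $(A^a,A^b)$, $(A^b,A^c)$, $(A^c,A^a)$ are Leonard pairs of $q$-Racah type --- which is precisely what the discussion preceding the Proposition leaves open --- and then to check that conditions (i)--(ii) are exactly what guarantees them. The implication ``pairwise Leonard pairs $\Rightarrow$ Leonard triple'' is the easy half: if the three pairs are Leonard pairs, then each $A^a$ is diagonalizable with multiplicity-free spectrum, so its eigenbasis is unique up to rescaling the individual eigenvectors, and such a rescaling affects neither the diagonality of $A^a$ nor the irreducible-tridiagonal shape of the other two operators; hence in the eigenbasis of $A^a$ both $A^b$ and $A^c$ are irreducible tridiagonal, which is the definition \cite[Definition 1.2]{Curt07} of a Leonard triple (the witnessing matrices are the ones already written down in Lemmas \ref{lem:Aab}--\ref{lem:h}). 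Moreover, by the ${\mathbb Z}_3$-symmetric presentation \eqref{Z3AW} the three pairs are obtained from one another by the cyclic relabelling $(a,b,c)\mapsto(b,c,a)\mapsto(c,a,b)$, so it suffices to treat the single pair $(A^a,A^b)$, provided one also checks that conditions (i)--(ii), which must hold for all three labellings, are themselves invariant under this relabelling.

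To handle $(A^a,A^b)$ I would invoke the classification of Leonard pairs \cite{T04,Ter04} (the same source used for the entries in the proof of Lemma \ref{lem:Aab} and in the Remark after Lemma \ref{lem:h}). Because the eigenvalue sequences have the $q$-Racah form \eqref{thetatrip}, which automatically satisfies the three-term recurrence characterising $q$-Racah type since $q$ is not a root of unity, the data of Lemma \ref{lem:Aab} define a Leonard pair of $q$-Racah type exactly when: (1) the sequences $\{\theta^a_M\}_{M=0}^{2s}$ and $\{\theta^b_M\}_{M=0}^{2s}$ are multiplicity-free; and (2) the off-diagonal coefficients \eqref{ammtrip1}--\eqref{ammtrip2} are nonzero for $1\le M\le 2s$, so that the resulting tridiagonal representations are irreducible. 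For (1): from \eqref{thetatrip}, $\theta^a_M=\theta^a_N$ with $M\neq N$ is equivalent to $\tb^a(\tc^a)^{-1}=q^{-2(M+N)}$ with $M+N$ running over $\{1,\dots,4s-1\}$, so multiplicity-freeness of all three sequences is exactly condition (i). For (2): the numerator factor $(\tc^a-\tb^aq^{2M+4s})$ in \eqref{ammtrip1} and $(\tc^a-\tb^aq^{2M-2})$ in \eqref{ammtrip2} cancel against the denominators $(\tc^a-\tb^aq^{4M-2})(\tc^a-\tb^aq^{4M})$, the surviving denominator factors being nonzero by (i); each coefficient is then the product of a factor $1-q^{2j}$ --- nonzero since $q$ is not a root of unity --- with two linear factors in the $\tb^\bullet,\tc^\bullet$, and on imposing the standing constraint $r_0^{-2}=\tb^a\tc^a=\tb^b\tc^b=\tb^c\tc^c$ from \eqref{omegatrip} the non-vanishing of those linear factors for $1\le M\le 2s$ is exactly condition (ii). Hence under (i)--(ii) the pair $(A^a,A^b)$ --- and by the cyclic symmetry also $(A^b,A^c)$ and $(A^c,A^a)$ --- is a Leonard pair of $q$-Racah type, which proves the Proposition.

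The genuinely non-routine part is the bookkeeping in step (2): one must verify that the spurious zeros of \eqref{ammtrip1}--\eqref{ammtrip2} produced by the denominators really cancel for every $M$ in range, and --- since naively each of the three pairs demands the cyclic image of (ii) rather than (ii) itself --- that (ii) is in fact preserved by the relabelling $(a,b,c)\mapsto(b,c,a)$. The mechanism making the latter work is that the constraint $r_0^{-2}=\tb^a\tc^a=\cdots$ converts the relevant products of three parameters into cyclically labelled invariants and also interchanges ``two $\tb$'s and one $\tc$'' with ``one $\tb$ and two $\tc$'s''; for instance $\tc^a\neq -r_0q^{2M+4s-1}\tb^b\tb^c$ becomes, after substituting $\tb^b=r_0^{-2}(\tc^b)^{-1}$ and $\tb^c=r_0^{-2}(\tc^c)^{-1}$, the manifestly symmetric inequality $\tc^a\tc^b\tc^c\neq -r_0^{-3}q^{2M+4s-1}$, while $\tc^b\neq -r_0q^{-2M+1}\tc^a\tc^c$ turns into $\tb^a\tb^c\tc^b\neq -r_0^{-3}q^{1-2M}$, equivalently $\tb^b\tc^a\tc^c\neq -r_0^{-3}q^{2M-1}$. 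Once this matching is in place, everything else reduces to a direct appeal to the classification theorems of \cite{T04,Ter04} and to Lemmas \ref{lem:Aab}--\ref{lem:h}.
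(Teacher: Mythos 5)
Your overall route is the same as the paper's: the published proof also reads condition (i) as multiplicity-freeness of the three spectra (via $\theta^a_M-\theta^a_N=(q^{M-N}-q^{-M+N})(\tb^aq^{M+N}-\tc^aq^{-M-N})$ and $q$ not a root of unity) and condition (ii) as the non-vanishing of the off-diagonal entries of the tridiagonal actions recorded in Lemmas \ref{lem:Aab}--\ref{lem:h}, so that each operator is diagonal in its eigenbasis while the other two act there as irreducible tridiagonal matrices. Your reduction to a single pair plus the cyclic relabelling is legitimate precisely because the Lemmas exhibit all the actions in the same ordered eigenbases (your remark about rescalings), and your verification that (ii) is stable under $(a,b,c)\mapsto(b,c,a)$ after using $r_0^{-2}=\tb^a\tc^a=\tb^b\tc^b=\tb^c\tc^c$ to put the four inequalities in symmetric form is correct and is a point the paper leaves implicit.

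The one genuine flaw is the cancellation mechanism you single out as the non-routine part of step (2): it is false as stated. The denominator of (\ref{ammtrip2}) is $(\tc^a-\tb^aq^{4M-4})(\tc^a-\tb^aq^{4M-2})$, not the one you quote, and the factors $(\tc^a-\tb^aq^{2M+4s})$ in (\ref{ammtrip1}) and $(\tc^a-\tb^aq^{2M-2})$ in (\ref{ammtrip2}) coincide with a denominator factor only at the boundary values $M=2s$ (where $2M+4s=4M$) and $M=1$ (where $2M-2=4M-4$), respectively; for all other $M$ in range there is no cancellation at all. The correct statement is the simpler one: away from those boundary values, every factor of the form $\tc^a-\tb^aq^{2k}$ appearing in the numerators and denominators has $1\le k\le 4s-1$ and is therefore non-zero directly by condition (i), while at the two boundary values the factor not covered by (i) cancels against the identical denominator factor. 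With that substitution your conclusion stands --- under (i) the non-vanishing of the off-diagonal coefficients is exactly equivalent to the four inequalities in (ii), matching the factors $(\tb^b\tb^cr_0q^{4s-1}+\tb^aq^{2M-2})$, $(\tc^a\tc^cr_0q^{-1}+\tc^bq^{2M-2})$, $(\tc^a+\tb^b\tb^cr_0q^{2M+4s-1})$, $(\tc^b+\tb^a\tc^cr_0q^{2M-1})$ --- so this is a local repair of the justification rather than a change of strategy, but as written that step of your argument does not hold.
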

\begin{proof}
	Firstly, we identify the conditions (i) that ensure the spectra (\ref{thetatrip}) have no multiplicities. From (\ref{thetatrip}), one has:
	\beqa
	\theta_M^a - \theta_N^a = (q^{M-N}-q^{-M+N})(\tb^a q^{M+N} - \tc^a q^{-M-N})\ .\nonumber
	\eeqa
	Recall $q$ is not a root of unity. Observe that this latter equation is vanishing only if the conditions (i) are not satisfied.  Now assume (i). Secondly, we identify the conditions  (ii) such that for each map (e.g. $A^a$), with respect to its eigenbasis (e.g. $\{|\theta_M^a \rangle \}_{M=0}^{2s}$) the matrices representing the other two maps (e.g. $A^b,A^c$)
	are irreducible tridiagonal. In other words, we identify the relations between the parameters such that off-diagonal upper and lower entries entering in  (\ref{tridAa}), (\ref{tridAb}), (\ref{Acact1}), (\ref{tridAc}) and  (\ref{Acact}), (\ref{Aaact}) are not vanishing. Assuming (i), (ii) are satisfied, we conclude $(A^a,A^b,A^c)$ is a Leonard triple.
\end{proof}
\begin{rem}
The conditions (i), (ii) specialize to \cite[Definition 7.1 (RQRAC3), (RQRAC4)]{HuangLT} 	using \eqref{specH}. 
\end{rem}

\subsubsection{Transition matrices} 
From the theory of Leonard pairs, it is known that the transition matrix relating the two different eigenbases can be expressed in terms of orthogonal  polynomials of the Askey-scheme \cite{Ter04}. 
In previous subsection, three different eigenbases associated with a Leonard triple $(A^a,A^b,A^c)$ have been introduced. Below, the transition matrices relating the three eigenbases are given explicitly in terms of $q$-Racah polynomials. For convenience, we denote: 
\beqa
P^{\{a,b\}}_{MN}=k_N^{\{a,b,c\}} R^{\{a,c\}}_M(\theta^b_N) \qquad \mbox{and}  \qquad ({P^{\{a,b\}}}^{-1})_{NM} =   {\nu^{\{a,b,c\}}_0}^{-1} k^{\{b,a,c\}}_M R^{\{a,c\}}_M(\theta^b_N)\ \label{PMNtrip}
\eeqa
and
\beqa
R^{\{a,c\}}_M(\theta^b_N)= \fpt{  q^{-2M}  }{  \frac{\tb^a}{\tc^a}q^{2M}  }{  q^{-2N}  }{  \frac{\tb^b}{\tc^b}q^{2N}  }{  -\frac{\tb^a\tc^c}{\tc^b}r_0q}{  -\frac{\tb^b\tb^c}{\tc^a}r_0q^{4s+1}}{  q^{-4s}  }\ , \label{qracahpoly}
\eeqa
\beqa
k^{\{a,b,c\}}_N= \frac{ (-\frac{\tb^b\tb^c}{\tc^a}r_0 q^{4s+1},-\frac{\tb^a\tc^c}{\tc^b}r_0 q,\frac{\tb^b}{\tc^b},q^{-4s};q^2)_N }{ ( q^2, -\frac{\tb^b\tb^c}{\tb^a}r_0q,  -\frac{\tc^a\tc^c}{\tc^b}r_0q^{1-4s}, \frac{\tb^b}{\tc^b}q^{4s+2} ;q^2)_N} \frac{(1-\frac{\tb^b}{\tc^b}q^{4N})}{(\frac{\tb^a}{\tc^a})^N(1-\frac{\tb^b}{\tc^b})}      \ ,\qquad k^{\{b,a,c\}}_M= k_M|_{\tb^a\leftrightarrow \tb^b,\tc^a\leftrightarrow \tc^b, \tb^bq^{4s} \leftrightarrow \tc^c} \ , \nonumber
\eeqa
\beqa
\nu^{\{a,b,c\}}_0=   \frac{ (\frac{\tb^a}{\tc^a}q^2, \frac{\tb^b}{\tc^b}q^{2} ;q^2)_{2s}}{(-\frac{\tb^b\tb^c}{\tc^a}r_0q^{4s+1})^{2s} ( -\frac{\tb^a\tc^c}{\tb^b}r_0q^{1-4s},  -\frac{\tc^a\tc^c}{\tc^b}r_0q^{1-4s};q^2)_{2s}}\ .\nonumber  
\eeqa

\begin{rem} The correspondence between the variables entering  in the standard definition of the $q-$Racah polynomial $R_n(\mu(x);\alpha,\beta,\gamma,\delta|q)$ where $\mu(x)=q^{-x} +\gamma\delta q^{x+1}$ in \cite[eq. (3.2.1)]{KS96}, and the Bethe ansatz parameters $\tb^a,\tc^a,s$, $a\in\{.,*,\diamond\}$, entering in the Bethe ansatz equations \eqref{BAEeqhom} and \eqref{BAEeqinhom} with \eqref{eigd1}, \eqref{eigd2} is given by:
\beqa
&& q \rightarrow q^2\ ,\quad	n \rightarrow M\ ,\quad	x \rightarrow N\ ,\nonumber\\
&&	\alpha \rightarrow -\frac{\tb^a\tc^c}{\tc^b}r_0q^{-1}\ ,\quad \beta \rightarrow -\frac{\tb^c\tc^b}{\tc^a}r_0q^{-1}\ ,\quad \gamma\rightarrow q^{-4s-2}\ ,\quad \delta \rightarrow \frac{\tb^b}{\tc^b}q^{4s}\ . \nonumber
\eeqa
\end{rem}

\begin{prop}  The transition matrix relating the basis $\{|\theta_N^b\rangle\}_{N=0}^{2s}$ and $\{|\theta_M^a\rangle\}_{M=0}^{2s}$
	is such that:
	\beqa
	|\theta^b_N\rangle = \sum_{M=0}^{2s} P^{\{a,b\}}_{MN}  |\theta^a_M \rangle \qquad \mbox{and} \qquad |\theta^a_M\rangle = \sum_{N=0}^{2s}
	({P^{\{a,b\}}}^{-1})_{NM}|\theta^b_N \rangle \ ;\label{transeqtrip}
	\eeqa 
	The transition matrix relating the basis $\{|\theta_N^c\rangle\}_{N=0}^{2s}$ and $\{|\theta_M^b\rangle\}_{M=0}^{2s}$
	is such that:
	\beqa\label{ctob}
	|\theta_N^c\rangle = \sum_{M=0}^{2s} g_MP_{MN}^{\{b,c\}}|\theta_M^b\rangle
	\qquad \mbox{and} \qquad
	|\theta^b_M\rangle = \sum_{N=0}^{2s} 
	g_M^{-1}({P^{\{b,c\}}}^{-1})_{NM}|\theta^c_N \rangle \ ;
	\eeqa
	The transition matrix relating the basis $\{|\theta_N^a\rangle\}_{N=0}^{2s}$ and $\{|\theta_M^c\rangle\}_{M=0}^{2s}$
	is such that:
	\beqa\label{ctoa}
|\theta_N^a\rangle = \sum_{M=0}^{2s} 	f_N^{-1}h_MP_{MN}^{\{c,a\}}|\theta_M^c\rangle
	\qquad \mbox{and} \qquad
	|\theta^c_M\rangle = \sum_{N=0}^{2s} 
	f_Nh_M^{-1}({P^{\{c,a\}}}^{-1})_{NM}|\theta^a_N \rangle \ ,
	\eeqa
	with  (\ref{fM}), (\ref{gM}), (\ref{hM}) and (\ref{PMNtrip}).
\end{prop}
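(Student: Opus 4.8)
The plan is to derive the three transition matrix formulas by exploiting that, once the normalizations of the three eigenbases are fixed by Lemmas \ref{lem:Aab}--\ref{lem:h}, each pair among $(A^a,A^b,A^c)$ is a Leonard pair in the standard normalization of \cite{Ter04}, so the transition matrix between any two \emph{standard}-normalized eigenbases is given by the $q$-Racah polynomial expression \eqref{PMNtrip}--\eqref{qracahpoly}; the only work is tracking the rescaling factors $f_M,g_M,h_M$ that relate our chosen eigenvectors to the standard ones. For the first assertion, since $(A^a,A^b)$ is a Leonard pair with the actions \eqref{tridAa}, \eqref{tridAb} already in standard tridiagonal form, I would simply quote \cite[eqs. (134)--(137) and the bases $d^*0^*0d$, $d00^*d^*$]{Ter04} as specialized through the substitutions \eqref{conv1}, \eqref{conv2}, identify the orthogonal polynomial entries with $R^{\{a,c\}}_M(\theta^b_N)$ via \eqref{qracahpoly}, and read off $P^{\{a,b\}}_{MN}=k^{\{a,b,c\}}_N R^{\{a,c\}}_M(\theta^b_N)$ together with the inverse; the normalization constants $k^{\{a,b,c\}}_N$, $k^{\{b,a,c\}}_M$, $\nu^{\{a,b,c\}}_0$ are exactly the images under \eqref{conv1}--\eqref{conv2} of the constants in \cite{Ter04}.

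For the second and third assertions the point is that the eigenbasis $\{|\theta^c_M\rangle\}$ was normalized in Lemma 3.5 (the unnamed lemma preceding Lemma \ref{lem:h}) so that $A^b$ acts on it in \emph{standard} form \eqref{tridAc}, whereas the actions of $A^c$ on $\{|\theta^b_M\rangle\}$ and on $\{|\theta^a_M\rangle\}$, and of $A^a$ on $\{|\theta^c_M\rangle\}$, carry the extra ratios $g_{M\pm1}/g_M$, $f_{M\pm1}/f_M$, $h_{M\pm1}/h_M$ coming from \eqref{Acact}, \eqref{Acact1}, \eqref{Aaact}. So I would introduce the rescaled vectors $|\bar\theta^b_M\rangle = g_M|\theta^b_M\rangle$, which by the computation in the proof of Lemma \ref{lem:g} satisfy the standard-form relations for the Leonard pair $(A^b,A^c)$; hence the standard transition matrix $P^{\{b,c\}}$ of \eqref{PMNtrip} relates $\{|\bar\theta^b_M\rangle\}$ and $\{|\theta^c_N\rangle\}$, and substituting $|\bar\theta^b_M\rangle = g_M|\theta^b_M\rangle$ gives \eqref{ctob}. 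Likewise, passing to $|\bar\theta^a_M\rangle = f_M|\theta^a_M\rangle$ (Lemma \ref{lem:f}) and $|\bar\theta^c_M\rangle = h_M|\theta^c_M\rangle$ (Lemma \ref{lem:h}) puts the Leonard pair $(A^c,A^a)$ into standard form with respect to \emph{those} bases, and the standard matrix $P^{\{c,a\}}$ then yields \eqref{ctoa} after reinserting the factors $f_N$ and $h_M^{-1}$.

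One subtlety to address carefully is consistency of the normalization choices: the three eigenbases $\{|\theta^a_M\rangle\}$, $\{|\theta^b_M\rangle\}$, $\{|\theta^c_M\rangle\}$ are fixed up to a single overall scalar each (by \eqref{tridAa}, \eqref{tridAb} and \eqref{tridAc} respectively), and the scalars $f_0,g_0,h_0$ are then determined (up to those same overall freedoms) by demanding that $f_M,g_M,h_M$ are precisely the rescalings appearing in \eqref{Acact}, \eqref{Acact1}, \eqref{Aaact}; I would note that the formulas \eqref{ctob}, \eqref{ctoa} are written so that any residual overall constant cancels between $g_M$ and $g_M^{-1}$ (resp. between $f_N,h_M$ and $f_N,h_M^{-1}$), so the statement is well posed. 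The main obstacle is bookkeeping rather than conceptual: one must verify that the substitutions \eqref{conv1}--\eqref{conv2} applied to the $q$-Racah data of \cite{Ter04} genuinely reproduce \eqref{qracahpoly}--\eqref{PMNtrip} for each of the three ordered pairs $(a,b)$, $(b,c)$, $(c,a)$ with the correct cyclic relabellings, and that the off-diagonal entries computed from Lemma \ref{lem:Aab} (via \eqref{AMNab} and its cyclic image) match the three-term recurrence satisfied by \eqref{qracahpoly}; both are routine but must be done for each pair. Once this is in place, combining the standard Leonard-pair result with the explicit rescalings completes the proof.
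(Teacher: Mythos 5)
Your proposal is correct and takes essentially the same route as the paper: the paper's proof obtains \eqref{transeqtrip} from \cite[Section 16]{Ter04} adapted via \eqref{conv1}, \eqref{conv2}, and states that \eqref{ctob}, \eqref{ctoa} follow ``similarly'' from \eqref{Acact1}, \eqref{tridAc} and \eqref{Acact}, \eqref{Aaact} --- your explicit rescalings $|\bar\theta^b_M\rangle=g_M|\theta^b_M\rangle$, $|\bar\theta^a_M\rangle=f_M|\theta^a_M\rangle$, $|\bar\theta^c_M\rangle=h_M|\theta^c_M\rangle$ are exactly what that ``similarly'' amounts to. The only minor imprecision is your remark that the residual overall constants ``cancel'': rather, they are absorbed into the free scalars $f_0,g_0,h_0$ of \eqref{fM}, \eqref{gM}, \eqref{hM}, which is how the statement remains well posed.
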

\begin{proof} Firstly, consider (\ref{transeqtrip}). With respect to the coefficients (\ref{tridAa}), (\ref{tridAb}),
the explicit expressions for the transition matrix with entries $P_{MN}^{\{a,b\}}$ - as well as the inverse transfer matrix' entries - are obtained from the results in \cite[Section 16]{Ter04} adapted to our conventions (\ref{conv1}), (\ref{conv2}). Starting from  (\ref{Acact1}), (\ref{tridAc}) and (\ref{Acact}), (\ref{Aaact}),  eqs. (\ref{ctob}) and (\ref{ctoa}) are derived similarly.
\end{proof}

According to Lemmas \ref{lem:f}, \ref{lem:g} and \ref{lem:h}, the three eigenbases associated with the Leonard triple $(A^a, A^b, A^c)$ are defined up to overall scalars $f_0,g_0,h_0$. Their precise relationship follow from consistency of the eigenbases' construction, as we now show.

\begin{prop}
	The scalars $f_0,g_0,h_0$ are such that:
	\beqa
	f_0h_0^{-1}g_0^{-1}=(-1)^{2s}q^{2 s(2s-1)} \frac{\left(q^2 r_0^2 b^2;q^2\right){}_{2 s} \left(q^2 r_0^2 \left(b^{\diamond }\right)^2;q^2\right){}_{2 s} \left(q^2 r_0^2 \left(b^*\right)^2;q^2\right){}_{2 s}}{\left(-\frac{ q r_0 b b^{\diamond }}{b^*};q^2\right){}_{2 s} \left(-\frac{q r_0 b b^*}{b^{\diamond }};q^2\right){}_{2 s} \left(-\frac{q r_0 b^{\diamond } b^*}{b};q^2\right){}_{2 s}}\ .\label{fgh0}
	\eeqa
\end{prop}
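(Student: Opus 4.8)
The plan is to determine $f_0 h_0^{-1} g_0^{-1}$ by tracking a single eigenvector around the ``triangle'' of transition matrices connecting the three eigenbases $\{|\theta^a_M\rangle\}$, $\{|\theta^b_M\rangle\}$, $\{|\theta^c_M\rangle\}$ and demanding consistency. Concretely, one composes the three transition maps of the previous Proposition: from $\{|\theta^a_M\rangle\}$ to $\{|\theta^b_N\rangle\}$ via $P^{\{a,b\}}$, then from $\{|\theta^b_N\rangle\}$ to $\{|\theta^c_L\rangle\}$ via the $g$-dressed matrix in \eqref{ctob}, then from $\{|\theta^c_L\rangle\}$ back to $\{|\theta^a_M\rangle\}$ via the $f,h$-dressed matrix in \eqref{ctoa}. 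The composite must be the identity on $\cV$, which forces
\[
\sum_{N,L} \big(f_N h_M^{-1}({P^{\{c,a\}}}^{-1})_{LM}\big)^{-1}\!\!\text{-type products} = \delta_{MM'},
\]
but more usefully, since each individual $P^{\{\cdot,\cdot\}}$ is already the \emph{correct} unnormalized transition matrix of a bona fide Leonard pair (with its inverse as given), the product $P^{\{a,b\}}\,(g\text{-dressed }P^{\{b,c\}})\,(f^{-1}h\text{-dressed }P^{\{c,a\}})$ equals a diagonal matrix whose entries are exactly $f_0 h_0^{-1} g_0^{-1}$ up to the already-normalized pieces. The cleanest route is to evaluate this on the single vector $|\theta^a_0\rangle$ (or to compare the $(0,0)$ entry), because the $q$-Racah polynomials satisfy $R^{\{a,c\}}_0(\theta^b_N)=1$ and $R^{\{a,c\}}_M(\theta^b_0)=1$, which collapses most sums.

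First I would write $|\theta^a_0\rangle$ in the $\{|\theta^b_N\rangle\}$ basis using the inverse of \eqref{transeqtrip}, namely $|\theta^a_0\rangle = \sum_N ({P^{\{a,b\}}}^{-1})_{N0}|\theta^b_N\rangle$, which by \eqref{PMNtrip} and $R^{\{a,c\}}_0(\theta^b_N)=1$ is simply $\nu_0^{\{a,b,c\}}{}^{-1} k_0^{\{b,a,c\}}\sum_N |\theta^b_N\rangle$ (and $k_0=1$). Then I would re-expand each $|\theta^b_N\rangle$ in the $\{|\theta^c_L\rangle\}$ basis via the second relation in \eqref{ctob}, and each $|\theta^c_L\rangle$ back in the $\{|\theta^a_M\rangle\}$ basis via the first relation in \eqref{ctoa}. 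Collecting the coefficient of $|\theta^a_0\rangle$ and setting it equal to $1$ yields an identity of the schematic form
\[
1 = f_0^{-1} h_0 g_0 \cdot \big(\text{a product/quotient of }q\text{-shifted factorials coming from }\nu_0,\ k_0,\text{ and the }\mathfrak{f},\mathfrak{g},\mathfrak{h}\text{ products}\big),
\]
which, solved for $f_0 h_0^{-1} g_0^{-1}$, gives \eqref{fgh0}. The factor $f_M = f_0\prod_{k=0}^{M-1}\mathfrak{f}(k)$ and its analogues in Lemmas \ref{lem:f}, \ref{lem:g}, \ref{lem:h} are what produce the three $(q^2 r_0^2 b^2;q^2)_{2s}$-type numerator products and the three $(-qr_0 bb^\diamond/b^*;q^2)_{2s}$-type denominators once one substitutes \eqref{bcdiamzeta} and the conventions $r_0^{-2}=\tb\tc=\tb^*\tc^*=\tb^\diamond\tc^\diamond$; the sign $(-1)^{2s}$ and the power $q^{2s(2s-1)}$ emerge from the $q^{4k}$-prefactors in $\mathfrak{g}(k)$ and $\mathfrak{h}(k)$ (giving $\prod_{k=0}^{2s-1}q^{4k}=q^{4s(2s-1)}$, partially cancelling against the $k_N$ powers of $\tb^a/\tc^a$) together with the minus signs in the denominators of $k_N^{\{a,b,c\}}$.

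An equivalent and perhaps more transparent bookkeeping is to take determinants: $\det(P^{\{a,b\}})\,\det(g\text{-dressed }P^{\{b,c\}})\,\det(f^{-1}h\text{-dressed }P^{\{c,a\}}) = 1$ since the composite is the identity. The $g$-dressing multiplies the determinant by $\prod_{M=0}^{2s} g_M$, and similarly for $f,h$; meanwhile $\det P^{\{a,b\}}$ etc.\ are known closed forms (products of the $\nu_0$ and $k_N$ data). This turns the problem into extracting $\prod_{M}(g_M/f_M)\cdot\prod_M h_M$ — but one still needs the $M$-dependent pieces to cancel, which is automatic because $\det P^{\{a,b\}}\det P^{\{b,c\}}\det P^{\{c,a\}}$ is itself the reciprocal of the product of the pure scalars. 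I expect the main obstacle to be purely organizational: correctly matching the parameter dictionaries \eqref{conv1}--\eqref{conv2}, \eqref{bcdiamzeta}, \eqref{omegatrip}, and the $\tb^a\leftrightarrow\tb^b$, $\tb^b q^{4s}\leftrightarrow \tc^c$ substitutions buried in $k^{\{b,a,c\}}_M$ and $A^{(a,b)}_{M,N}$, so that the three families of $q$-shifted factorials appear in the $\mathbb{Z}_3$-symmetric arrangement displayed in \eqref{fgh0}. Once the cyclic symmetry $(a,b,c)\to(b,c,a)\to(c,a,b)$ of the setup is used to argue that the answer must be invariant under the simultaneous cyclic permutation of $(b,b^*,b^\diamond)$, the final assembly of signs and powers of $q$ is a short finite computation.
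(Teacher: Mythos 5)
Your overall strategy --- imposing consistency of the triangle of transition relations \eqref{transeqtrip}, \eqref{ctob}, \eqref{ctoa} and reading off $f_0h_0^{-1}g_0^{-1}$ from a well-chosen entry --- is exactly the paper's starting point. The paper substitutes $|\theta^b_M\rangle$ from \eqref{transeqtrip} into \eqref{ctob}, compares with \eqref{ctoa}, and, using $R_0=1$, $R_M(\theta^b_0)=1$, $k_0=1$, specializes the resulting entrywise identity to $M'=N=0$. However, at that point the sums do \emph{not} collapse to a product of $q$-shifted factorials: one is left with
\beqa
f_0h_0^{-1}g_0^{-1}= \nu_0^{\{c,a,b\}}\sum_{M=0}^{2s}k_M^{\{a,b,c\}}\prod_{k=0}^{M-1}\mathfrak{g}(k)\ ,\nonumber
\eeqa
and the crux of the proof is the closed-form evaluation of this residual sum. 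The paper proves it equals the right-hand side of \eqref{fgh0} via the identity \eqref{relorth}, which is recognized as a limiting case ($\beta\to\infty$, $n=m=0$, $q\to q^2$, plus a parameter dictionary) of the orthogonality relation for $q$-Racah polynomials \cite[eq.\ (3.2.2)]{KS96}. Your proposal asserts that after the $R_0$-type simplifications the identity is ``a product/quotient of $q$-shifted factorials'' and that the remaining work is ``purely organizational''; this is the genuine gap. Indeed your specific bookkeeping (expanding $|\theta^a_0\rangle$ around the triangle and collecting its coefficient) is even less favorable, since it leaves a \emph{double} sum over the two intermediate basis indices, with genuine $q$-Racah polynomial dependence through $({P^{\{b,c\}}}^{-1})_{LN}$ and $({P^{\{c,a\}}}^{-1})_{0L}$; no amount of sign/power-of-$q$ assembly from $\mathfrak{f},\mathfrak{g},\mathfrak{h}$ alone will close it without a summation theorem of basic hypergeometric type.

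The determinant alternative does not repair this. Taking determinants of the composite only yields $(f_0^{-1}g_0h_0)^{\pm(2s+1)}$ times products over all $M$ of the $\mathfrak{f},\mathfrak{g},\mathfrak{h}$ factors, so it determines $f_0h_0^{-1}g_0^{-1}$ at best up to a $(2s+1)$-th root of unity; moreover the closed forms of $\det P^{\{a,b\}}$ etc.\ are not ``known'' within the paper and would themselves require the same kind of $q$-Racah orthogonality/norm input you are trying to avoid. In short: the architecture of your argument is the right one, but the essential missing ingredient is the explicit summation identity \eqref{relorth} (a specialization/limit of the $q$-Racah orthogonality relation), without which the final formula \eqref{fgh0} is not reached.
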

\begin{proof}
	Use $|\theta_M^b\rangle$ given by  (\ref{transeqtrip}) in (\ref{ctob}), and compare it to (\ref{ctoa}). One finds: 	\beqa
		f_0h_0^{-1}g_0^{-1}=\frac{\prod_{k=0}^{N-1} \mathfrak{h}(k)\sum_{M=0}^{2s}P_{M'M}^{\{a,b\}}P_{MN}^{\{b,c\}}\prod_{k=0}^{M-1} \mathfrak{g}(k)
		}{({P^{\{c,a\}}}^{-1})_{M'N}\prod_{k=0}^{M'-1} \mathfrak{f}(k)}
		\eeqa
		for all $M'=0,\dots,2s$ and $N=0,\dots,2s$. The l.h.s. being independent of $M',N$, we set $M'=N=0$ in the r.h.s. From (\ref{PMNtrip}), one has $P^{\{a,b\}}_{0M}=k_M^{\{a,b,c\}}$, $P^{\{b,c\}}_{M0}= 1$ and $ ({P^{\{c,a\}}}^{-1})_{00}= {\nu_0^{\{c,a,b\}}}^{-1}$. It follows:
		\beqa
		f_0h_0^{-1}g_0^{-1}= \nu_0^{\{c,a,b\}}\sum_{M=0}^{2s}k_M^{\{a,b,c\}}\prod_{k=0}^{M-1}\mathfrak{g}(k) \ .\label{prod0}
		\eeqa
		Then, using the relation 
		\beqa
		(a;q^{-2})_M = (-a)^{M}q^{-M(M-1)}(a^{-1};q^2)_M\ 
		\eeqa
		and $r_0^{-2}=\tb^a\tc^a =\tb^b\tc^b=\tb^c\tc^c$,
		we obtain: 
		\beqa
		\prod_{k=0}^{M-1}\mathfrak{g}(k) = \frac{(-\frac{qr_0\tb^b\tb^c}{\tb^a};q^2)_M}{(-\frac{qr_0\tb^a\tc^c}{\tc^b};q^2)_M}q^{M(M-1)}\left(\frac{q^{-1}r_0\tb^c\tc^a}{\tb^b}\right)^{-M}\ .\label{prodgk}
		\eeqa
		Moreover, 
		$\nu_0^{\{c,a,b\}}$ simplifies to:
		\beqa
		\nu_0^{\{c,a,b\}} =  \frac{\left(q^2 r_0^2 (\tb^a)^2, q^2 r_0^2 (\tb^c)^2   ;q^2\right){}_{2 s} }{  \left(- \frac{qr_0\tb^a\tb^b}{\tb^c}, - \frac{qr_0\tb^b\tb^c}{\tb^a} ;q^2\right){}_{2 s}  }\frac{1}{\left( - \frac{qr_0\tb^a\tb^c}{\tb^b}\right)^{2s}}\ .
		\eeqa
		Using $k_M^{\{a,b,c\}}$ below \eqref{qracahpoly} and \eqref{prodgk}, one gets:
		\beqa
		\sum_{M=0}^{2s}k_M^{\{a,b,c\}}\prod_{k=0}^{M-1}\mathfrak{g}(k) =   \sum_{M=0}^{2s}\frac{ (-\frac{qr_0\tb^a\tb^c}{\tc^b}q^{4s},\frac{\tb^b}{\tc^b},q^{-4s};q^2)_M }{ (   -\frac{\tb^b}{qr_0\tb^a\tb^c}q^{-4s+2}, \frac{\tb^b}{\tc^b}q^{4s+2} , q^2;q^2)_M} \frac{(1-\frac{\tb^b}{\tc^b}q^{4M})}{(\frac{q^{-1}r_0\tb^a\tb^c}{\tb^b})^{M} (1-\frac{\tb^b}{\tc^b})} q^{M(M-1)} \ .
		\eeqa
		In order to show \eqref{fgh0}, it remains to prove  that the r.h.s. of \eqref{fgh0} equals the r.h.s. of \eqref{prod0}. This is equivalent to the equality
		\beqa
		\sum_{M=0}^{2s}\frac{ (-\frac{qr_0\tb^a\tb^c}{\tc^b}q^{4s},\frac{\tb^b}{\tc^b},q^{-4s};q^2)_M }{ (   -\frac{\tb^b}{qr_0\tb^a\tb^c}q^{-4s+2}, \frac{\tb^b}{\tc^b}q^{4s+2} , q^2;q^2)_M} \frac{(1-\frac{\tb^b}{\tc^b}q^{4M})}{ (1-\frac{\tb^b}{\tc^b})(\frac{q^{-1}r_0\tb^a\tb^c}{\tb^b})^{M}} q^{M(M-1)} =
		\frac{\left(q^2 r_0^2 (\tb^b)^2;q^2\right){}_{2 s} }{\left(-\frac{ q^{1-4s} \tb^b}{r_0\tb^a\tb^c};q^2\right){}_{2 s}} \ .\label{relorth}
		\eeqa
		This latter relation is actually a limiting case and specialization of the orthogonality relation satisfied by the $q$-Racah polynomials, that is obtained as follows. Set $n=m=0$ and substitute $q\rightarrow q^2$ in \cite[eq. (3.2.2)]{KS96} and consider its limit $\beta \rightarrow \infty$. Then, substitute
		\beqa
		N\rightarrow 2s\ ,\quad x\rightarrow M\ ,\quad \alpha\rightarrow q^{-4s-2} \ ,\quad \gamma\rightarrow - \frac{\tb^a\tb^c}{\tc^b}r_0q^{4s-1} \ ,\quad \delta\rightarrow - \frac{\tb^b}{r_0\tb^a\tb^c}q^{-4s-1}\ ,
		\eeqa
		in the resulting equation, we find \eqref{relorth}.	
\end{proof}
In Appendix \ref{apB}, an explicit example of a Leonard triple is given for $s=1/2$. 

\subsection{Computation of scalar products}\label{sec3}
The representation theory of Leonard triples of $q$-Racah type  is now applied to the computation of scalar products of any eigenvector of $A$ with certain off-shell Bethe states. 
Recall the definition of the off-shell Bethe states \eqref{PsiA}, \eqref{PsiAp} and the relation between the parameter $\zeta$ and the parameters $\tb^\diamond,\tc^\diamond$ given by (\ref{bcdiamzeta}). For convenience, let us introduce the notation:
\beqa
U_i&=& \frac{qu_i^2+ q^{-1}u_i^{-2}}{q+q^{-1}}\label{Uisym} \ .
\eeqa 

\begin{lem} For $\beta=0$, one has:
	\beqa
\pi( 	B^{\epsilon}(\bar u,m,M))&=& {\cal G}_M^{\epsilon}(\bar{u})  \prod_{i=1}^M \left( U_i - \frac{r_0}{q+q^{-1}}A^\diamond \right) \label{stringB}
	\eeqa
	with
	\beqa
	{\cal G}_M^{\epsilon}(\bar{u}) = \left(\epsilon(q+q^{-1})q^{-\epsilon(M+1)} \tb^{\frac{1-\epsilon}{2}}{\tc^*}^{\frac{1+\epsilon}{2}}\right)^M  \left( \prod_{i=1}^M b(u_i^2)u_i^{-\epsilon}\label{G} \right)
	\eeqa
\end{lem}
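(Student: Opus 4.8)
The plan is to prove the factorization \eqref{stringB} by induction on $M$, using the explicit form of $A^\diamond$ in \eqref{Adiamond} together with the expressions \eqref{Bm}, \eqref{Bm}$^-$ for the single dynamical operators $\mathscr{B}^{\pm}(u,m)$ specialized to $\beta=0$. First I would check the base case $M=1$: setting $\beta=0$ in the formula for $\mathscr{B}^{\epsilon}(u,m_0)$ kills the $\alpha q^{2m+2}-\beta$ denominator's $\alpha$-part and, after using the reference-state conditions \eqref{ab} to eliminate $\alpha$ in favor of $\zeta$ (via \eqref{bcdiamzeta}), one is left with a linear combination of $\mathcal{I}$, $\tA$, $\tA^*$ and $[\tA^*,\tA]_q$. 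Comparing with the definition \eqref{Adiamond} of $A^\diamond$ as an affine combination of $[A^*,A]_q$ and $\mathbb{I}$, and with the tridiagonal relations \eqref{aw1}–\eqref{aw2} expressing $\tA,\tA^*$ in terms of the same commutator structure, one identifies $\pi(\mathscr{B}^{\epsilon}(u,m_0)) = \mathcal{G}_1^{\epsilon}(u)\,(U_1 - \tfrac{r_0}{q+q^{-1}}A^\diamond)$, which fixes the scalar prefactor $\mathcal{G}_1^\epsilon$ in \eqref{G}.

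Second, for the inductive step I would use the definition \eqref{SB} of the string $B^\epsilon(\bar u,m,M) = \mathscr{B}^\epsilon(u_1,m+2(M-1))\,B^\epsilon(\{u_2,\dots,u_M\},m,M-1)$. The crucial point is that, although each individual $\mathscr{B}^\epsilon(u,m)$ carries an explicit $m$-dependence through the gauge factors $\chi q^{\pm m}$, the combination appearing after applying the reference-state constraint collapses to $U - \tfrac{r_0}{q+q^{-1}}A^\diamond$ with an $m$-dependent \emph{scalar} out front. More precisely I expect that, modulo the overall $m$-dependent scalar, the operator $\pi(\mathscr{B}^\epsilon(u,m))$ acting to the right on anything built from $(U_j-\tfrac{r_0}{q+q^{-1}}A^\diamond)$ factors and commutes appropriately because all these operators are polynomials in the single operator $A^\diamond$ — they pairwise commute. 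Collecting the $m$-dependent scalars from the $M$ factors $\mathscr{B}^\epsilon(u_1,m+2(M-1)),\dots,\mathscr{B}^\epsilon(u_M,m)$ produces the telescoping product of $q^{\mp\epsilon}$ powers that combine into $q^{-\epsilon M(M+1)/2}$ times the $M$-independent pieces, reassembling $\mathcal{G}_M^\epsilon(\bar u)$ as in \eqref{G}. Alternatively — and perhaps more cleanly — one can argue directly: the exchange relation \eqref{comBdBd}, $\mathscr{B}^\epsilon(u,m+2)\mathscr{B}^\epsilon(v,m)=\mathscr{B}^\epsilon(v,m+2)\mathscr{B}^\epsilon(u,m)$, shows the string is symmetric in $\bar u$, hence expressible in the elementary symmetric functions of the $U_i$; combined with the base case and a degree count this forces \eqref{stringB}.

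Third, I would verify that the identification of $\pi(\mathscr{B}^\epsilon(u,m))$ with a multiple of $U - \tfrac{r_0}{q+q^{-1}}A^\diamond$ is \emph{exactly} consistent with the structure constants \eqref{structc} and the relation \eqref{bcdiamzeta} between $\zeta$ and $\tb^\diamond,\tc^\diamond$. This amounts to matching, term by term, the coefficients of $\mathcal{I}$, $\tA$, $\tA^*$, $[\tA^*,\tA]_q$ in the $\beta=0$ specialization of \eqref{Bm} (with $\alpha$ eliminated via \eqref{ab}) against $U_i\,\mathcal{I} - \tfrac{r_0}{q+q^{-1}}A^\diamond$, where $A^\diamond$ is rewritten using \eqref{Adiamond}; the eigenvalue structure \eqref{Lap1}, \eqref{Lap2} and $r_0^{-2}=\tb\tc=\tb^*\tc^*=\tb^\diamond\tc^\diamond$ enter here.

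\textbf{Main obstacle.} The hard part will be the bookkeeping in the second step: tracking precisely how the explicit $q^{\pm m}$ gauge dependence in $\mathscr{B}^\epsilon(u,m)$ conspires, once $\alpha$ is eliminated via \eqref{ab}, to leave behind only a clean $m$-dependent scalar times the $m$-independent operator $U-\tfrac{r_0}{q+q^{-1}}A^\diamond$, and then checking that the product of these scalars over the $M$ shifted arguments $m+2(M-1),\dots,m$ reproduces exactly the power of $q$ and the factor $\bigl(\epsilon(q+q^{-1})\tb^{(1-\epsilon)/2}{\tc^*}^{(1+\epsilon)/2}\bigr)^M$ in $\mathcal{G}_M^\epsilon(\bar u)$. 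The symmetry argument via \eqref{comBdBd} bypasses most of the operator algebra but still requires the careful scalar computation for at least one ordering; I would present the $M=1$ computation in full and then invoke symmetry plus induction to avoid grinding through the general case.
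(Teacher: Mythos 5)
Your plan follows essentially the same route as the paper's proof: specialize \eqref{Bm} at $\beta=0$ so that $\pi(\mathscr{B}^{\epsilon}(u,m))$ becomes a scalar prefactor (carrying all the $m$- and $\alpha$-dependence) times the $m$-independent operator $A^\diamond-\frac{(q+q^{-1})}{r_0}U$, eliminate $\alpha$ via the reference-state constraint (the paper does this by choosing $\chi=-(q^2-q^{-2})/r_0$, giving $\alpha=q^{m}/r_0\tb$ or $\alpha=-q^{-m}/r_0\tc^*$), and then collect the trivially commuting factors of the string into ${\cal G}_M^{\epsilon}(\bar u)$, so the inductive/symmetry dressing adds nothing essential. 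One small correction: at $\beta=0$ the $\tA$ and $\tA^*$ terms of \eqref{Bm} vanish outright (they lack a compensating $1/\beta$ against the overall factor $\beta$), so only $[\tA^*,\tA]_q$ and the identity survive and the identification with \eqref{Adiamond} is immediate, without invoking the Askey-Wilson relations \eqref{aw1}--\eqref{aw2}.
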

\begin{proof} For $\beta=0$, (\ref{Bm}) simplifies to
	\beqa
	&&\pi( \mathscr{B}^{\epsilon}(u,m))=b(u^2)u^{-\epsilon}\frac{q^{-\epsilon(m+2)}}{\alpha}\Big( A^\diamond - \frac{(q+q^{-1})}{r_0}U\Big)\ 
	\eeqa
	with the identification (\ref{Adiamond}). Then, the image of the string of dynamical operators $\pi( B^{\epsilon}(\bar u,m,M))$ with \eqref{SB} is reduced, and depends on $\alpha$ and $m$. Now, consider the constraint on $\alpha$ according to the definition of the reference state. Choosing  $\chi=-(q^2-q^{-2})/r_0$ in  (\ref{PsiA}) (resp. (\ref{PsiAp}))  one has $\alpha=q^m/r_0\tb$ for $\epsilon=-$  (resp. $\alpha=-q^{-m}/r_0\tc^*$ for $\epsilon=+$). In each case $\epsilon=\pm$, one replaces the corresponding value of $\alpha$ in the reduced expression of $\pi( 	B^{\epsilon}(\bar u,m,M))$. After simplifications, one gets \eqref{stringB}
\end{proof}
The off-shell Bethe states \eqref{PsiA}, \eqref{PsiAp} are now expressed in the eigenbasis of $A$.
\begin{lem}
	\beqa
	|\Psi_{-}^{M''}( \bar  u,m )\rangle  &=&  {\cal G}_{M''}^{-}(\bar{u})
	\sum_{N=0}^{2s} \sum_{M'=0}^{2s}\frac{f_N}{f_0} ({P^{\{\diamond,.\}}}^{-1})_{NM'} \prod_{i=1}^{M''} \left( U_i - \frac{r_0}{q+q^{-1}}\theta_{M'}^\diamond \right) |\theta_{N}\rangle
	\label{BmoinsOmn} \\
	|\Psi_{+}^{M''}( \bar  u,m )\rangle &=&  {\cal G}_{M''}^{+}(\bar{u})
	\sum_{N=0}^{2s} \sum_{M'=0}^{2s} \frac{ f_{N}}{{\nu_0^{\{*,\diamond, . \}}}g_0h_{M'}} ({P^{\{\diamond,. \}}}^{-1})_{NM'}
	\prod_{i=1}^{M''} \left( U_i - \frac{r_0}{q+q^{-1}}\theta_{M'}^\diamond \right) 
	|\theta_{N}\rangle
	\ .\label{BplusOmp} 
	\eeqa
\end{lem}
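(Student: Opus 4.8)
The plan is to start from the closed expression for the string of $\mathscr{B}$-operators given in the previous Lemma, namely
\beqa
\pi(B^{\epsilon}(\bar u,m,M''))= {\cal G}_{M''}^{\epsilon}(\bar{u})\prod_{i=1}^{M''}\left(U_i-\frac{r_0}{q+q^{-1}}A^\diamond\right)\,,\nonumber
\eeqa
and apply both sides to the appropriate reference vector, using the identifications $|\Omega^-\rangle=|\theta_0\rangle$ and $|\Omega^+\rangle=|\theta^*_0\rangle$ from \eqref{defrefs}. Since the operator $\prod_i\bigl(U_i-\tfrac{r_0}{q+q^{-1}}A^\diamond\bigr)$ is a polynomial in $A^\diamond$ alone, the natural move is to expand the reference vector in the eigenbasis $\{|\theta^\diamond_{M'}\rangle\}$ of $A^\diamond$, act diagonally (replacing $A^\diamond$ by its eigenvalue $\theta^\diamond_{M'}$), and then re-expand the result in the eigenbasis $\{|\theta_N\rangle\}$ of $A$. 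Concretely, I would insert the two transition matrices: first write $|\theta_0\rangle$ (resp.\ $|\theta^*_0\rangle$) in terms of $|\theta^\diamond_{M'}\rangle$, then write each $|\theta^\diamond_{M'}\rangle$ in terms of $|\theta_N\rangle$.

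\textbf{Case $\epsilon=-$.} Here the relevant pair is $(a,c)=(.,\diamond)$, so I would use \eqref{ctoa} with $(a,b,c)\to(.,*,\diamond)$. From \eqref{ctoa}, $|\theta^\diamond_{M'}\rangle=\sum_{N} f_N h_{M'}^{-1}({P^{\{\diamond,.\}}}^{-1})_{NM'}|\theta_N\rangle$, and the decomposition of $|\theta_0\rangle=|\Omega^-\rangle$ in the $A^\diamond$-eigenbasis is read off from the inverse relation in \eqref{ctoa}, giving the coefficient $f_0 h_{M'}^{-1}(P^{\{\diamond,.\}})^{\,?}$ — but in fact it is cleaner to note that the column $N=0$ of these matrices simplifies drastically (as used in the proof of \eqref{fgh0}: $P^{\{a,b\}}_{0M}=k_M$, etc.), so that $|\theta_0\rangle$ expands with coefficients proportional to $1/f_0$ times a single $q$-shifted-factorial factor that is exactly absorbed into $f_{M'}/f_0$ after using $h_{M'}$ and the definition \eqref{fM}–\eqref{hM}. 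Collecting the $A^\diamond$-polynomial eigenvalue $\prod_i\bigl(U_i-\tfrac{r_0}{q+q^{-1}}\theta^\diamond_{M'}\bigr)$ and the prefactor ${\cal G}^-_{M''}(\bar u)$ then yields \eqref{BmoinsOmn}.

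\textbf{Case $\epsilon=+$.} The computation is parallel but now the reference vector is $|\theta^*_0\rangle=|\Omega^+\rangle$, so one first passes from the $A^*$-eigenbasis to the $A^\diamond$-eigenbasis and then to the $A$-eigenbasis; this is the composition of the transition matrices in \eqref{ctob} and \eqref{ctoa} (equivalently, one uses \eqref{transeqtrip} for the pair $(a,b)=(*,\diamond)$ followed by \eqref{ctoa}). The extra scalar factors $g_0$, $h_{M'}$ and $\nu_0^{\{*,\diamond,.\}}$ appearing in \eqref{BplusOmp} are precisely the normalization constants that arise when composing these two changes of basis, as dictated by \eqref{ctob} and the definitions \eqref{gM}, \eqref{hM}. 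I expect the main obstacle to be purely bookkeeping: correctly tracking the three scalar gauges $f_0,g_0,h_0$ and the $\nu_0$-type normalizations through the composition of transition matrices, and checking that the $N=0$ column simplifications combine with $\mathfrak f,\mathfrak g,\mathfrak h$ to leave exactly the stated coefficients; the relation \eqref{fgh0} (and the intermediate identities in its proof) is what guarantees consistency, so no genuinely new input is needed beyond careful substitution.
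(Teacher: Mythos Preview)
Your approach is essentially the paper's: expand the reference state in the $A^\diamond$-eigenbasis via \eqref{ctoa} (for $\epsilon=-$) or \eqref{ctob} (for $\epsilon=+$), act diagonally with the polynomial in $A^\diamond$, and re-expand in the $A$-eigenbasis via the second relation in \eqref{ctoa}. The only correction is that your bookkeeping is slightly overcomplicated: the paper's key observations are simply $P^{\{\diamond,.\}}_{M'0}=1$ and $({P^{\{*,\diamond\}}}^{-1})_{M'0}=(\nu_0^{\{*,\diamond,.\}})^{-1}$, after which the $h_{M'}$ factors cancel directly in the $\epsilon=-$ case (leaving $f_N/f_0$, not ``$f_{M'}/f_0$'') and survive in the $\epsilon=+$ case; no appeal to \eqref{fgh0} or the $\mathfrak f,\mathfrak g,\mathfrak h$ products is needed.
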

\begin{proof} To show (\ref{BmoinsOmn}), we proceed as follows. Consider the l.h.s. of (\ref{BmoinsOmn}) defined by (\ref{PsiA}) with (\ref{defrefs}). Using the first relation in (\ref{ctoa}) for $N=0,M\rightarrow M'$ and $(c,a,b)\rightarrow (\diamond,.,*)$, the reference state $|\theta_0\rangle$ decomposes into the  eigenvectors $|\theta^\diamond_{M'}\rangle$ of $A^\diamond$. Note that $P^{\{\diamond,. \}}_{M'0}=1$. Apply (\ref{stringB}) to this combination. Then, rewrite the resulting expression in terms of the eigenvectors of $A$ using the second relation in (\ref{ctoa}). It reduces to the r.h.s. of (\ref{BmoinsOmn}).
	 The proof of (\ref{BplusOmp}) is done similarly, observing that $({P^{\{*,\diamond \}}}^{-1})_{M'0}={\nu_0^{\{*,\diamond,.  \}}}^{-1}$ is independent of $M'$.
\end{proof}

We are now ready to give the main result of this paper, the normalized scalar products of on-shell versus off-shell Bethe states, using the properties of the Leonard triple and related eigenbases.  Recall \eqref{G}, \eqref{fgh0}, \eqref{fM}, \eqref{hM}, \eqref{thetatrip}, \eqref{PMNtrip}.
\begin{thm}\label{prop:scalprod}
	\beqa
	\frac{\langle \theta_M	|\Psi_{-}^{M''}( \bar u,m)\rangle}{\langle \theta_M	| \theta_M \rangle} &=&
	{\cal G}_{M''}^{-}(\bar{u})\frac{f_M}{f_0} 
	\sum_{M'=0}^{2s}({P^{\{\diamond,. \}}}^{-1})_{MM'} \prod_{i=1}^{M''} \left( U_i - \frac{r_0}{q+q^{-1}}\theta_{M'}^\diamond \right)
	\ ,\label{scal1} \\
	\frac{	\langle \theta_M	|\Psi_{+}^{M''}( \bar  u,m)\rangle}{\langle \theta_M	| \theta_M \rangle} &=& 
	{\cal G}_{M''}^{+}(\bar{u}){\frac{f_{M}}{g_0\nu_0^{\{*,\diamond,.\}}}}
	\sum_{M'=0}^{2s} \frac{1}{h_{M'}} ({P^{\{\diamond,.\}}}^{-1})_{MM'}
	\prod_{i=1}^{M''} \left( U_i - \frac{r_0}{q+q^{-1}}\theta_{M'}^\diamond \right) 
	\ .\label{scal2}
	\eeqa
\end{thm}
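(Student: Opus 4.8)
The plan is to combine the two preceding lemmas, which already express $|\Psi^{M''}_{\pm}(\bar u,m)\rangle$ in the eigenbasis $\{|\theta_N\rangle\}_{N=0}^{2s}$ of $A$, with the orthogonality relation \eqref{orthcond} for the covectors $\{\langle\theta_M|\}_{M=0}^{2s}$. Concretely, for \eqref{scal1} I would start from \eqref{BmoinsOmn}, which reads
\beqa
|\Psi_{-}^{M''}( \bar  u,m )\rangle  =  {\cal G}_{M''}^{-}(\bar{u})
\sum_{N=0}^{2s} \sum_{M'=0}^{2s}\frac{f_N}{f_0} ({P^{\{\diamond,.\}}}^{-1})_{NM'} \prod_{i=1}^{M''} \left( U_i - \frac{r_0}{q+q^{-1}}\theta_{M'}^\diamond \right) |\theta_{N}\rangle \,,\nonumber
\eeqa
and pair it with $\langle\theta_M|$. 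The key step is that $\langle\theta_M|\theta_N\rangle = \delta_{MN}\xi_M = \delta_{MN}\langle\theta_M|\theta_M\rangle$, so only the $N=M$ term of the outer sum survives; dividing through by $\langle\theta_M|\theta_M\rangle$ then kills the $\xi_M$ factor and leaves exactly the claimed right-hand side with $f_M/f_0$ in front and the single sum over $M'$. The argument for \eqref{scal2} is identical, starting instead from \eqref{BplusOmp}: pairing with $\langle\theta_M|$ and using orthogonality collapses the $N$-sum to $N=M$, producing the prefactor $f_M/(g_0\nu_0^{\{*,\diamond,.\}})$ and the sum over $M'$ weighted by $1/h_{M'}$.

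The steps, in order: first invoke the two lemmas giving the eigenbasis expansions \eqref{BmoinsOmn}, \eqref{BplusOmp}; second, act with the covector $\langle\theta_M|$ on both sides, using linearity; third, apply the orthogonality relation \eqref{orthcond} to reduce each double sum to a single sum over $M'$; fourth, divide by $\langle\theta_M|\theta_M\rangle=\xi_M$ to obtain the normalized quantities; fifth, read off that the surviving coefficients are precisely those displayed in \eqref{scal1}, \eqref{scal2}. No nontrivial computation is required once the preceding lemmas are in hand.

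Honestly, there is no real obstacle here: the theorem is essentially a corollary of the two lemmas on the eigenbasis decomposition of the off-shell Bethe states, the substance of the work having been done in establishing those decompositions (which in turn rest on the transition-matrix identities \eqref{ctoa} and the reduction \eqref{stringB} of the string of dynamical operators). The only point requiring a modicum of care is bookkeeping of the scalar prefactors $f_0,g_0,h_{M'},\nu_0^{\{*,\diamond,.\}}$ coming from the normalizations of the three eigenbases of the Leonard triple, and checking that the normalization $\langle\theta_M|\theta_M\rangle$ cancels cleanly; both are immediate from the stated formulas. I would therefore present the proof in two or three lines, simply citing \eqref{BmoinsOmn}, \eqref{BplusOmp} and \eqref{orthcond}.
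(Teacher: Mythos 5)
Your proof is correct and coincides with the paper's own argument: the authors likewise obtain \eqref{scal1} by applying $\langle\theta_M|$ to \eqref{BmoinsOmn} and invoking the orthogonality relation \eqref{orthcond}, with \eqref{scal2} handled in the same way from \eqref{BplusOmp}. The collapse of the $N$-sum to $N=M$, the cancellation of $\xi_M$ upon normalization, and the bookkeeping of $f_M/f_0$, $g_0$, $\nu_0^{\{*,\diamond,.\}}$ and $1/h_{M'}$ are exactly as you describe.
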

\begin{proof} To show (\ref{scal1}), we apply $\langle \theta_M	|$ to (\ref{BmoinsOmn}) and use (\ref{orthcond}).  We show  (\ref{scal2}) similarly.
\end{proof}

In view of the next section, let us   display the results for one of  the simplest case, $s=1/2$ and  $M''(=2s)=1$. For the two examples below, we denote:
\beqa
X^{\pm,off}_{M}(u_1) = 	\frac{\langle \theta_M	|\Psi_{\pm}^{1}( u_1 ,m)\rangle}{\langle \theta_M	| \theta_M \rangle} \ , \quad M=0,1\ .
\eeqa
Recall the notation (\ref{b}) and the identification (\ref{bcdiamzeta}).
\begin{example}\label{ex1}
	Consider $s=1/2$ and $\epsilon=-$. From   (\ref{scal1}) we obtain:
	\beqa
	{X}_{0}^{-,off}(u_1)&=& \frac{ q^3 r_0^2 \tb^2}{b(q)\left(q^2 r_0^2 \tb^2;q^2\right){}_{1}}
	u_1b(u_1^2)Y_-(u_1|u_1)\,\ ,\label{X0m}\\
	{X}_{1}^{-,off}(u_1)&=&- 
	\frac{ q b(q) \tb\left(-\frac{q r_0 \tb \tb^{\diamond }}{\tb^*};q^2\right){}_{1}
		\left(-\frac{q r_0 \tb^{\diamond } \tb^*}{\tb};q^2\right){}_{1}}
	{r_0 \tb^{\diamond } \left(q^2 r_0^2 \tb^2;q^2\right){}_{1}}u_1b(u_1^2) \,,\label{X1m}
	\eeqa
	with 
	\beqa
	Y_-(x|x)=\frac{b(q)}{q^3r_0^3\tb \tb^*\tb^\diamond}
	\Bigg(
	\frac{q r_0 \tb^{\diamond } \tb^* \left( q^2 r_0^2\tb^2-1\right)}{x^2}+q^3 x^2 r_0 \tb^{\diamond } \tb^* \left( q^2 r_0^2 \tb^2-1\right)+\tb^* \left(1- q^4 r_0^2\tb^2\right)
	\\  - \left(q^2-1\right) q r_0\tb \tb^{\diamond }
	\left(q^2 r_0^2 \left(\tb^*\right)^2+1\right)+q^2 r_0^2 \left(\tb^{\diamond }\right)^2 \tb^* \left(1- q^4 r_0^2 \tb^2\right) \Bigg)\ .\nonumber
	\eeqa
\end{example}

\begin{example}\label{ex2}
	Consider $s=1/2$ and $\epsilon=+$. From (\ref{scal2}) we obtain:
	\beqa
	{X}_{0}^{+,off}(u_1)&=&\frac{1}{q^4 r_0^3 \tb^*\tb^\diamond}
	u_1^{-1}b(u_1^2)Z_1(u_1)
	\,\ ,\label{X0p}\\
	{X}_{1}^{+,off}(u_1)&=& \frac{1}{q^4r_0^3\tb \tb^*\tb^\diamond}
	u_1^{-1}b(u_1^2)Z_2(u_1)\ ,\label{X1p}
	\,
	\eeqa
	where $Z_i(x)$ are Laurent polynomials in $x$ given by,
	\beqa
	Z_1(x)=q^3  r_0 \tb^{\diamond }x^2+\frac{q r_0 \tb^{\diamond }}{x^2}-1+q^2 r_0^2 \tb^{\diamond } \left( q \left(q^2-1\right) r_0 \tb \tb^*-\tb^{\diamond }\right)\,,\label{Z1}
	\eeqa
	\beqa
	Z_2(x)= q^3  r_0 \tb \tb^{\diamond }x^2+\frac{ q r_0 \tb \tb^{\diamond }}{x^2}+
	q \left(q^2-1\right) r_0 \tb^{\diamond } \tb^*-\tb \left(q^2 r_0^2 \left(\tb^{\diamond }\right)^2+1\right).	\label{Z2}
	\eeqa
\end{example}

	\begin{rem}
		As an example of application of Theorem \ref{prop:scalprod} to an elementary integrable system,  on one hand consider the Hamiltonian $H(\lambda)$ obtained from specializing \cite[eq. (5.2)]{BP19} for $\kappa_\pm=0$ and identifying $\lambda=\frac{\kappa^*}{\kappa}$:
	\beqa
	H(\lambda) &=& S_1^xS_2^x + S_1^yS_2^y +\Delta S_1^zS_2^z + \lambda(S_2^xS_3^x + S_2^yS_3^y +\Delta S_2^zS_3^z) \nonumber\\ &&\qquad +\frac{1}{q-q^{-1}}( S_1^z + (\lambda-1)S_2^z - \lambda S_3^z)\ ,\qquad \Delta=\frac{q+q^{-1}}{2}\ ,\nonumber
	\eeqa
where 
\beqa\label{Pauli}
S^x = \frac{1}{2}\left(
\begin{array}{cc}
	0    & 1 \\
	1 & 0 
\end{array} \right)\ ,\quad S^y=\frac{1}{2}\left(
\begin{array}{cc}
	0    & -i \\
	i & 0 
\end{array} \right)  \ , \quad S^z = \frac{1}{2}\left(
\begin{array}{cc}
	1    & 0 \\
	0 & -1 
\end{array} \right)\ .
\eeqa	
This Hamiltonian is the image in ${\rm End}({\mathbb C}^2\otimes {\mathbb C}^2 \otimes {\mathbb C}^2)$ of $A + \lambda A^*$ via the specialization of the map \cite[eq. (2.14), (2.15)]{BP19}.
We refer the reader to \cite[Subsection 5.2.1]{BP19} where the solution (spectrum and eigenstates) is discussed in details. Using the framework of Leonard pairs, the Bethe eigenstates of $H(\lambda) $ are given by \eqref{PsiA} (or \eqref{PsiAp}) for $M=2s$  where $\bar u$ satisfy the inhomogeneous Bethe equations \cite[eq. (3.75)]{BP19} for $\epsilon=-$ (or $\epsilon=+$). On the other hand, consider the Hamiltonian $H'= H(0)$. 
	The eigenstates are given by \eqref{norm1}.
	Thus, the ratios in Theorem \ref{prop:scalprod} for $\bar u$ satisfying \cite[eq. (3.75)]{BP19} determine the overlap between the eigenstates of $H(\lambda) $ and $H'$. In the context of quantum phase transitions where $\lambda$ is viewed as the driving parameter (see e.g. \cite[Section III]{Gu08}), those ratios determine the fidelity  when considering the ground states of each Hamiltonian. 
	\end{rem}

\vspace{1mm}

\subsection{Relating inhomogeneous and homogeneous Bethe roots}
In Subsection \ref{sec:onoffBs}, we have seen that the eigenvector $|\theta^*_N \rangle$ of $A^*$  can be either written as the on-shell Bethe state (\ref{norm2}) associated with the Bethe equations of homogeneous type $E_+^{N}(u_j,\bar u_j)=0$ with (\ref{Bfunc}), $j=0,...,N$, or as the on-shell Bethe state (\ref{normi2}) associated with the Bethe equations of inhomogeneous type $E_-(u_j,\bar u_j)=0$, $j=0,...,2s$ with (\ref{Bfuncinhom}), (\ref{eigd2}). In this section,
a set of equations relating the inhomogenous and homogeneous Bethe roots associated with the eigenvalue $\theta^*_N$, $N=0,....,2s$, is obtained. 

Let us denote the symmetric Bethe roots
 for the homogeneous case and inhomogeneous case as follows:
\beqa
U_j^{(i)} 
\quad \mbox{for} \quad \bar u =  S^{*N(i)}_- \ ,\quad
U_j^{(h)}
\quad \mbox{for} \quad \bar u =  S^{*N(h)}_+ \ \quad \mbox{with (\ref{Uisym})} .\label{symBR}
\eeqa 
We also introduce the notation
$\bar U_j = \{U_1, U_2, ..., U_{j-1}, U_{i+1}, ..., U_N\}$.
From \cite[Proposition 3.5]{BP19}, each system of Bethe equations is equivalent to a system of polynomial equations:
\beqa
E_+^{N}(u_j,\bar u_j)=0 \quad \Leftrightarrow \quad P^{N}_+(U^{(h)}_j,\bar U^{(h)}_j) &=& 0 \ \quad \mbox{for} \quad j=1,...,N\ ,\label{polyh}\\
E_-(u_j,\bar u_j)=0 \quad \Leftrightarrow \quad P^{2s}_-(U^{(i)}_j,\bar U^{(i)}_j) &=& 0 \ \quad \mbox{for} \quad j=1,...,2s\ .\label{polyi}
\eeqa
respectively. Here, $P^{N}_+(U_j,\bar U_j)$ (resp.  $P^{2s}_-(U_j,\bar U_j)$) has maximal degree $N$ (resp. degree $2s+1$) in the variables $U_1,...,U_N$ (resp. $U_1,...,U_{2s}$). For more details, see \cite[Subsection 3.5.1]{BP19}. 
Following \cite[Conjectures 2,3]{BP19} based on numerical supporting evidences:
\begin{hyp}\label{hyp1} The system of polynomial equations (\ref{polyh}) admits a unique admissible solution \\$\bar U^{(h)}= \{U^{(h)}_1, U^{(h)}_2, ..., U^{(h)}_N \}$ up to permutation.
\end{hyp} 
\begin{hyp}\label{hyp2} The system of polynomial equations (\ref{polyi}) admits $2s+1$ distinct admissible solution \\
	$\bar U^{(i)}= \{U^{(i)}_1, U^{(i)}_2, ..., U^{(i)}_{2s} \}$ up to permutation.
\end{hyp} 

In order to relate the two systems of Bethe roots, we specialize Theorem \ref{prop:scalprod}. The following lemmas give two relations between $q$-Racah polynomials associated with different parameter sequences, that involve either homogeneous Bethe roots or  inhomogenous Bethe roots. 
\begin{lem} Assume Hypothesis \ref{hyp1}. The $q$-Racah polynomial  $R^{\{.,\diamond\}}_M(\theta^*_{N})$ admits the following decomposition of homogeneous type in terms of $R^{\{\diamond ,*\}}_{M'}(\theta_{M})$:
	\beqa
	R^{\{.,\diamond\}}_M(\theta^*_{N})&=&\frac{f_M}{f_0}
	\frac{ \sum_{M'=0}^{2s} \prod_{j=1}^{N} \left( U_j^{(h)} - \frac{r_0}{q+q^{-1}}\theta_{M'}^\diamond \right)\frac{1}{h_{M'}}  k_{M'}^{\{. ,\diamond,*\}} R^{\{\diamond ,*\}}_{M'}(\theta_{M})}{\sum_{M'=0}^{2s}
		\prod_{j=1}^{N} \left( U_j^{(h)} - \frac{r_0}{q+q^{-1}}\theta_{M'}^\diamond \right)\frac{1}{h_{M'}} k_{M'}^{\{.,\diamond,*\}} }
	\quad \mbox{with} \quad {\bar u}=  S^{*N(h)}_+
	\ .\label{Racid2}
	\eeqa
\end{lem}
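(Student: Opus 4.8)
The idea is to evaluate the normalized scalar product of Theorem~\ref{prop:scalprod} at an on-shell off-shell parameter, namely one of the homogeneous Bethe roots associated with the eigenvalue $\theta^*_N$, and to interpret both sides in terms of $q$-Racah polynomials using the transition matrices of the Leonard triple. Concretely, I would start from \eqref{scal2} and specialize $\bar u = S^{*N(h)}_+$, i.e. choose the off-shell Bethe state $|\Psi_{+}^{N}(\bar u,m)\rangle$ to be on-shell with respect to the homogeneous Bethe equations $E_+^N(u_j,\bar u_j)=0$, $j=1,\dots,N$. By \eqref{norm2}, this on-shell state is proportional to the eigenvector $|\theta^*_N\rangle$, with normalization ${\cal N}_N^*(\bar u)$ given by \eqref{Ncoeff}. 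Hence the left-hand side becomes ${\cal N}_N^*(\bar u)^{-1}\langle\theta_M|\theta^*_N\rangle/\langle\theta_M|\theta_M\rangle$, and by the orthogonality relations \eqref{orthcond} together with the transition-matrix expansion \eqref{transeq}--\eqref{PMN} this is a scalar multiple of $({P^{\{.,*\}}}^{-1})_{NM}$, which by \eqref{PMN} is proportional to $k^*_M R_M(\theta^*_N)$, i.e. essentially the $q$-Racah polynomial $R^{\{.,\diamond\}}_M(\theta^*_N)$ in the notation of \eqref{PMNtrip}--\eqref{qracahpoly} (the superscript $\diamond$ enters because the relevant Leonard pair is $(A,A^*)$ viewed inside the triple $(A,A^*,A^\diamond)$).

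\textbf{Right-hand side.} On the right-hand side of \eqref{scal2}, after specializing $\bar u = S^{*N(h)}_+$, the product $\prod_{i=1}^{N}\bigl(U_i - \tfrac{r_0}{q+q^{-1}}\theta^\diamond_{M'}\bigr)$ becomes $\prod_{j=1}^{N}\bigl(U^{(h)}_j - \tfrac{r_0}{q+q^{-1}}\theta^\diamond_{M'}\bigr)$ with the symmetric homogeneous Bethe roots of \eqref{symBR}. The remaining sum over $M'$ is $\sum_{M'=0}^{2s} h_{M'}^{-1} ({P^{\{\diamond,.\}}}^{-1})_{MM'}\prod_j(\cdots)$, and using \eqref{PMNtrip} the inverse transition matrix entry $({P^{\{\diamond,.\}}}^{-1})_{MM'}$ is ${\nu_0^{\{\diamond,.,*\}}}^{-1}k^{\{.,\diamond,*\}}_{M'}R^{\{\diamond,*\}}_{M'}(\theta_M)$ up to identifying the parameter sequences; this is exactly the numerator appearing in \eqref{Racid2}. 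The overall prefactor ${\cal G}^+_{N}(\bar u)$, the normalization ${\cal N}_N^*(\bar u)$, and the various $f_0,g_0,\nu_0$ factors are scalars independent of $M$. To eliminate all of them, I would divide \eqref{scal2} at $\bar u=S^{*N(h)}_+$ by the same expression evaluated at $M=0$: since $R^{\{.,\diamond\}}_0(\theta^*_N)=1$ and $f_0/f_0=1$, the $M=0$ case reduces precisely to $\sum_{M'}h_{M'}^{-1}k^{\{.,\diamond,*\}}_{M'}\prod_j(\cdots)$ (the denominator in \eqref{Racid2}), because $R^{\{\diamond,*\}}_{M'}(\theta_0)=1$ for all $M'$. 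The ratio is then manifestly \eqref{Racid2}, with the leftover factor $f_M/f_0$ carried along from \eqref{scal2}.

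\textbf{Obstacle and bookkeeping.} The main obstacle is not conceptual but notational: one must carefully track which of the three Leonard pairs inside the triple $(A,A^*,A^\diamond)$ is being used at each step, and correctly match the parameter sequences $(\tb^a,\tc^a,\dots)$ in \eqref{PMNtrip}--\eqref{qracahpoly} against the concrete data $(\tb,\tc,\tb^*,\tc^*,\tb^\diamond,\tc^\diamond)$ via \eqref{bcdiamzeta}. In particular one must verify that the identification $({P^{\{\diamond,.\}}}^{-1})_{MM'}\propto k^{\{.,\diamond,*\}}_{M'}R^{\{\diamond,*\}}_{M'}(\theta_M)$ and the identification of $({P^{\{.,*\}}}^{-1})_{NM}$ with (a scalar times) $R^{\{.,\diamond\}}_M(\theta^*_N)$ are consistent with the duality $R^{\{a,c\}}_M(\theta^b_N)$ of \eqref{qracahpoly} — i.e. that the symmetry of the $q$-Racah polynomial under $M\leftrightarrow N$ with the appropriate parameter swap is being invoked correctly. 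Once this dictionary is fixed, the proof is a direct substitution: specialize Theorem~\ref{prop:scalprod}, use \eqref{norm2} and \eqref{orthcond} on the left, \eqref{PMNtrip} on the right, and divide by the $M=0$ instance to cancel the $M$-independent prefactors.

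\textbf{Proof.} Specialize \eqref{scal2} to the case $M''=N$ and $\bar u = S^{*N(h)}_+$. By \eqref{norm2}, $|\Psi_{+}^{N}(\bar u,m)\rangle = {\cal N}_N^*(\bar u)^{-1}|\theta^*_N\rangle$, so the left-hand side of \eqref{scal2} equals ${\cal N}_N^*(\bar u)^{-1}\langle\theta_M|\theta^*_N\rangle/\langle\theta_M|\theta_M\rangle$. Expanding $|\theta^*_N\rangle$ in the eigenbasis of $A$ via \eqref{transeq} and using \eqref{orthcond} gives $\langle\theta_M|\theta^*_N\rangle = P^{\{.,*\}}_{MN}\langle\theta_M|\theta_M\rangle$, whence the left-hand side is ${\cal N}_N^*(\bar u)^{-1}P^{\{.,*\}}_{MN}$. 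In the triple notation, the Leonard pair $(A,A^*)=(A^a,A^b)$ with $(a,b,c)=(.,*,\diamond)$, so by \eqref{PMNtrip} one has $P^{\{.,*\}}_{MN}=k^{\{.,*,\diamond\}}_N R^{\{.,\diamond\}}_M(\theta^*_N)$. On the right-hand side of \eqref{scal2}, the product $\prod_{i=1}^{N}\bigl(U_i-\tfrac{r_0}{q+q^{-1}}\theta^\diamond_{M'}\bigr)$ becomes $\prod_{j=1}^{N}\bigl(U^{(h)}_j-\tfrac{r_0}{q+q^{-1}}\theta^\diamond_{M'}\bigr)$ by \eqref{symBR}, and $({P^{\{\diamond,.\}}}^{-1})_{MM'}={\nu_0^{\{\diamond,.,*\}}}^{-1}k^{\{.,\diamond,*\}}_{M'}R^{\{\diamond,*\}}_{M'}(\theta_M)$ by \eqref{PMNtrip} applied with $(a,b,c)=(\diamond,.,*)$. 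Collecting the $M$-independent scalars into a single constant $C={\cal G}^+_N(\bar u)\,{\cal N}_N^*(\bar u)\,\bigl(g_0\nu_0^{\{*,\diamond,.\}}\bigr)^{-1}{\nu_0^{\{\diamond,.,*\}}}^{-1}\bigl(k^{\{.,*,\diamond\}}_N\bigr)^{-1}$, we obtain
\beqa
R^{\{.,\diamond\}}_M(\theta^*_N)=C\,\frac{f_M}{f_0}\sum_{M'=0}^{2s}\frac{1}{h_{M'}}\,k^{\{.,\diamond,*\}}_{M'}R^{\{\diamond,*\}}_{M'}(\theta_M)\prod_{j=1}^{N}\left(U^{(h)}_j-\frac{r_0}{q+q^{-1}}\theta^\diamond_{M'}\right).\nonumber
\eeqa
Evaluating this identity at $M=0$ and using $R^{\{.,\diamond\}}_0(\theta^*_N)=1$, $f_0/f_0=1$ and $R^{\{\diamond,*\}}_{M'}(\theta_0)=1$ (from \eqref{qracahpoly} with $M=0$, resp.\ the evaluation at the bottom eigenvalue), we get
\beqa
1=C\sum_{M'=0}^{2s}\frac{1}{h_{M'}}\,k^{\{.,\diamond,*\}}_{M'}\prod_{j=1}^{N}\left(U^{(h)}_j-\frac{r_0}{q+q^{-1}}\theta^\diamond_{M'}\right),\nonumber
\eeqa
which determines $C$. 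Dividing the first displayed equation by this one yields precisely \eqref{Racid2}, completing the proof.
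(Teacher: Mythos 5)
Your proof is correct and follows essentially the same route as the paper: specialize \eqref{scal2} at $\bar u = S^{*N(h)}_+$, use \eqref{norm2} to turn the off-shell state into $|\theta^*_N\rangle$, rewrite both sides via the transition matrices \eqref{PMNtrip}, and eliminate the $M$-independent constants by comparing with the $M=0$ case. The paper does the last step by invoking the ratio formula \eqref{Rac1} for $q$-Racah polynomials, which, since $R^{\{.,\diamond\}}_0(\theta^*_N)=1$, is exactly your division of the identity by its $M=0$ instance, so the two arguments coincide up to bookkeeping of the constant $C$.
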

\begin{proof} Firstly, we compute the normalized scalar product $\langle \theta_M	|\theta_N^* \rangle / \langle \theta_M	|\theta_M \rangle$. To do that, we multiply both sides of the equality in (\ref{scal2}) by 	${\cal N}^{*}_N(\bar u)$ with (\ref{norm2}) and specialize the result to ${\bar u}=  S^{*N(h)}_+\ $. After simplifications using (\ref{PMNtrip}), it follows:
	\beqa
	\frac{\langle \theta_M	|\theta_N^* \rangle }{ \langle \theta_M	|\theta_M \rangle} = {\cal N}^{*}_N(\bar u)
	{\cal G}_{N}^{+}(\bar{u}) {\nu^{\{*,\diamond,. \}}_0}^{-1}\frac{f_M}{g_0} \sum_{M'=0}^{2s} \prod_{j=1}^{N} \left( U_j^{(h)} - \frac{r_0}{q+q^{-1}}\theta_{M'}^\diamond \right)\frac{1}{h_{M'}}
	k_{M'}^{\{.,\diamond,*\}} R^{\{\diamond ,*\}}_{M'}(\theta_{M}) \quad \mbox{for} \quad {\bar u}=  S^{*N(h)}_+
	\ .\nonumber
	\eeqa
	Combining above expression with the fact that the $q$-Racah polynomials are given by (see \cite[Theorem 14.6 and 15.6]{T03}):
	\beqa
	R^{\{.,\diamond\}}_M(\theta^*_N) &=& \frac{\langle \theta_M |\theta_N^* \rangle}{\langle \theta_0 |\theta^*_N \rangle}\frac{\langle \theta_0 |\theta_0 \rangle}{\langle \theta_M |\theta_M \rangle}\ ,\label{Rac1}
	\eeqa
	we get	(\ref{Racid2}). 
\end{proof}
Similar arguments are used to derive the following lemma, thus we skip the proof.
\begin{lem} Assume Hypothesis \ref{hyp2}. The $q$-Racah polynomial  $R^{\{.,\diamond\}}_M(\theta^*_{N})$ admits the following decomposition of inhomogeneous type in terms of $R^{\{\diamond ,*\}}_{M'}(\theta_{M})$:
	\beqa
	R^{\{. ,\diamond\}}_M(\theta^*_{N})&=&\frac{f_M}{f_0}
	\frac{ \sum_{M'=0}^{2s} \prod_{j=1}^{2s} \left( U_j^{(i)} - \frac{r_0}{q+q^{-1}}\theta_{M'}^\diamond \right)  k_{M'}^{\{.,\diamond,*\}} R^{\{\diamond ,*\}}_{M'}(\theta_{M})}{\sum_{M'=0}^{2s} \prod_{j=1}^{2s} \left( U_j^{(i)} - \frac{r_0}{q+q^{-1}}\theta_{M'}^\diamond \right) k_{M'}^{\{.,\diamond,*\}} }
	\quad \mbox{with} \quad {\bar u}=  S^{*N(i)}_-
	\ .\label{Racid1}
	\eeqa
\end{lem}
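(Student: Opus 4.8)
The proof of this lemma follows the same structure as the preceding one, with the homogeneous Bethe roots replaced by the inhomogeneous ones. The plan is as follows.

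\textbf{Step 1: Normalized scalar product from the inhomogeneous representation.}
First I would compute the normalized scalar product $\langle \theta_M | \theta_N^* \rangle / \langle \theta_M | \theta_M \rangle$ starting from the inhomogeneous on-shell expression for $|\theta_N^*\rangle$. Recall from \eqref{normi2} that
\[
|\theta_N^{*}\rangle = {\cal N}^{*(i)}_N(\bar w') |\Psi_{-}^{2s}(\bar w',m_0)\rangle \qquad \mbox{for}\quad \bar w' = S^{*N(i)}_-\ ,
\]
so I would apply $\langle \theta_M |$ to this and invoke \eqref{scal1} (the $\epsilon = -$ case of Theorem \ref{prop:scalprod}) with $M'' = 2s$ and $\bar u = S^{*N(i)}_-$. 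After substituting $({P^{\{\diamond,.\}}}^{-1})_{MM'} = {\nu^{\{\diamond,.,*\}}_0}^{-1} k^{\{.,\diamond,*\}}_{M'} R^{\{\diamond,*\}}_{M'}(\theta_M)$ from \eqref{PMNtrip}, and collecting all the prefactors (the overall scalar ${\cal N}^{*(i)}_N(\bar w')$, the Bethe-string coefficient ${\cal G}_{2s}^{-}(\bar w')$, and the $f_M/f_0$ factor), this yields an expression of the form $\langle \theta_M | \theta_N^* \rangle / \langle \theta_M | \theta_M \rangle = C_N^{(i)}\, (f_M/f_0) \sum_{M'=0}^{2s} \prod_{j=1}^{2s}\bigl( U_j^{(i)} - \tfrac{r_0}{q+q^{-1}}\theta_{M'}^\diamond \bigr) k_{M'}^{\{.,\diamond,*\}} R^{\{\diamond,*\}}_{M'}(\theta_M)$, where $C_N^{(i)}$ is a scalar depending on $N$ (through the Bethe roots and the normalizations) but not on $M$.

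\textbf{Step 2: Eliminate the $M$-independent prefactor via the $q$-Racah ratio formula.}
Next I would use the identity $R^{\{.,\diamond\}}_M(\theta^*_N) = \dfrac{\langle \theta_M |\theta_N^* \rangle}{\langle \theta_0 |\theta^*_N \rangle}\dfrac{\langle \theta_0 |\theta_0 \rangle}{\langle \theta_M |\theta_M \rangle}$ from \eqref{Rac1}. Writing this ratio as $\bigl(\langle \theta_M |\theta_N^*\rangle/\langle\theta_M|\theta_M\rangle\bigr)\big/\bigl(\langle\theta_0|\theta_N^*\rangle/\langle\theta_0|\theta_0\rangle\bigr)$ makes the $M$-independent factor $C_N^{(i)}$ cancel between numerator and denominator. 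Since $f_0/f_0 = 1$ in the denominator (i.e. the $M=0$ specialization replaces $f_M$ by $f_0$), I obtain exactly \eqref{Racid1}: the numerator is $(f_M/f_0)\sum_{M'}\prod_j(\dots)k_{M'}^{\{.,\diamond,*\}}R^{\{\diamond,*\}}_{M'}(\theta_M)$ and the denominator is $\sum_{M'}\prod_j(\dots)k_{M'}^{\{.,\diamond,*\}}$ (using $R^{\{\diamond,*\}}_{M'}(\theta_0) = 1$, which holds since $q$-Racah polynomials are normalized to $1$ at the lowest eigenvalue). Hypothesis \ref{hyp2} is invoked here merely to guarantee that at least one admissible inhomogeneous solution $\bar U^{(i)}$ exists (in fact $2s+1$ of them, one per eigenvalue $\theta^*_N$), so that the specialization $\bar u = S^{*N(i)}_-$ is meaningful.

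\textbf{Main obstacle.}
The only genuinely delicate point — and the reason the authors say "similar arguments are used" rather than reproducing everything — is bookkeeping the various scalar prefactors: one must check that the combination of ${\cal N}^{*(i)}_N$ (which itself involves $P^{\{.,*\}}_{2s,N}$ and the product ${\cal N}_{2s}$), ${\cal G}_{2s}^{-}$, $\nu_0$'s, $g_0$ and the $h_{M'}$ that appeared in \eqref{scal2} versus the absence of $h_{M'}$ here (since we now use \eqref{scal1}, the $\epsilon=-$ formula, which has $f_M/f_0$ but no $1/h_{M'}$), all assemble into a single $M$-independent constant. Once that is granted, the cancellation in Step 2 is automatic and identical in form to the proof of the homogeneous lemma, so I would simply state that the computation parallels the previous one and point to \eqref{scal1}, \eqref{PMNtrip}, \eqref{normi2} and \eqref{Rac1} as the ingredients.
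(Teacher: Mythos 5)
Your proposal is correct and follows exactly the route the paper intends: the paper skips this proof, stating it parallels the homogeneous lemma, and that parallel argument is precisely what you carry out — multiply (\ref{scal1}) at $M''=2s$ by ${\cal N}^{*(i)}_N$ from (\ref{normi2}), specialize $\bar u = S^{*N(i)}_-$, rewrite $({P^{\{\diamond,.\}}}^{-1})_{MM'}$ via (\ref{PMNtrip}), and cancel the $M$-independent prefactors through the ratio (\ref{Rac1}) with $R^{\{\diamond,*\}}_{M'}(\theta_0)=1$. Your bookkeeping of the $\epsilon=-$ versus $\epsilon=+$ formulas (hence the absence of the $1/h_{M'}$ factors) and of the role of Hypothesis \ref{hyp2} matches the paper's construction.
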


Combining above lemmas, it follows:
\begin{prop}\label{relinhhom}  Assume Hypothesis 1 and  2. Let $U_j^{(h)},j=1,...,N$ be the symmetric Bethe roots associated with the eigenvalue $\theta_N^*$ given in (\ref{st}) and homogeneous Bethe equations $E_+^N(u_j,\bar u_j)=0$ with (\ref{Bfunc}). The inhomogenous symmetric Bethe roots $U_j^{(i)},j=0,...,2s$ associated with $\theta_N^*$ satisfy the relations:
	\beqa
	\frac{ \prod_{j=1}^{2s} \left( U_j^{(i)} - \frac{r_0}{q+q^{-1}}\theta_{M'}^\diamond \right) }{\sum_{M'=0}^{2s} \prod_{j=1}^{2s} \left( U_j^{(i)} - \frac{r_0}{q+q^{-1}}\theta_{M'}^\diamond \right) k_{M'}^{\{.,\diamond,*\}} } = \frac{\frac{1}{h_{M'}} \prod_{j=1}^{N} \left( U_j^{(h)} - \frac{r_0}{q+q^{-1}}\theta_{M'}^\diamond \right) }{\sum_{M'=0}^{2s} \frac{1}{h_{M'}}\prod_{j=1}^{N} \left( U_j^{(h)} - \frac{r_0}{q+q^{-1}}\theta_{M'}^\diamond \right) k_{M'}^{\{.,\diamond,*\}} }\label{BAih}\  
	\ ,\eeqa	
 for $M'=0,1,...,2s$.\end{prop}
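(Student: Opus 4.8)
The plan is to combine the two preceding lemmas in the obvious way. Both decomposition formulas \eqref{Racid2} and \eqref{Racid1} express the \emph{same} object, namely the $q$-Racah polynomial $R^{\{.,\diamond\}}_M(\theta^*_N)$, as a ratio whose numerator and denominator are linear combinations of the quantities $k_{M'}^{\{.,\diamond,*\}} R^{\{\diamond,*\}}_{M'}(\theta_M)$ over $M'=0,\dots,2s$. The coefficients in these linear combinations differ: in the homogeneous case the $M'$-th coefficient is $\tfrac{1}{h_{M'}}\prod_{j=1}^{N}\bigl(U_j^{(h)}-\tfrac{r_0}{q+q^{-1}}\theta_{M'}^\diamond\bigr)$, while in the inhomogeneous case it is $\prod_{j=1}^{2s}\bigl(U_j^{(i)}-\tfrac{r_0}{q+q^{-1}}\theta_{M'}^\diamond\bigr)$. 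The overall prefactor $\tfrac{f_M}{f_0}$ is the same in both. Setting the right-hand sides of \eqref{Racid2} and \eqref{Racid1} equal and cancelling $\tfrac{f_M}{f_0}$, I obtain an equality between two ratios of linear forms in the vector $\bigl(k_{M'}^{\{.,\diamond,*\}}R^{\{\diamond,*\}}_{M'}(\theta_M)\bigr)_{M'=0}^{2s}$, valid for every $M=0,\dots,2s$.

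The key step is then a linear-independence argument. As $M$ ranges over $0,\dots,2s$, the vectors $v_M := \bigl(R^{\{\diamond,*\}}_{M'}(\theta_M)\bigr)_{M'=0}^{2s}$ are linearly independent in $\mathbb{C}^{2s+1}$: indeed, by \eqref{PMNtrip} these are (up to the nonzero scalars $k^{\{\diamond,*,.\}}_{M'}$, $\nu_0$) the columns of the invertible transition matrix $P^{\{\diamond,*\}}$ between the eigenbases associated with the Leonard pair $(A^\diamond,A^*)$; invertibility holds because the transition matrix of a Leonard pair is always nonsingular. Hence the $2s+1$ vectors $\{v_M\}$ span $\mathbb{C}^{2s+1}$. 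An equality of the form $\tfrac{\langle a,v_M\rangle}{\langle b,v_M\rangle}=\tfrac{\langle c,v_M\rangle}{\langle d,v_M\rangle}$ that holds for all $M$, i.e. $\langle a,v_M\rangle\langle d,v_M\rangle=\langle b,v_M\rangle\langle c,v_M\rangle$, combined with the fact that the denominators $\langle b,v_M\rangle$, $\langle d,v_M\rangle$ are the (nonzero) normalization scalars appearing implicitly in \eqref{Rac1} — they are $\langle\theta_M|\theta^*_N\rangle/(\text{nonzero})$ up to the vanishing at special points, which admissibility of the Bethe roots excludes — forces the coefficient vectors to agree componentwise up to a common scalar. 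Concretely, because $\langle\theta_0|\theta^*_N\rangle\neq 0$ (the transition matrix has no zero entry, the representation being irreducible), evaluating at $M=0$ pins down the common scalar, and then spanning gives $a_{M'}d_{M'} = b_{M'}c_{M'}$ componentwise, which is exactly \eqref{BAih} after dividing by the appropriate denominators.

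A cleaner route to the same conclusion, which I would actually present: rather than manipulating ratios, go back one level and equate the two explicit expressions obtained for the normalized scalar product $\langle\theta_M|\theta^*_N\rangle/\langle\theta_M|\theta_M\rangle$ — one coming from \eqref{scal1} specialized to $\bar u = S^{*N(h)}_+$ with normalization \eqref{norm2}, the other from \eqref{scal2} specialized to $\bar u = S^{*N(i)}_-$ with normalization \eqref{normi2} and \eqref{Ncoeffi}. Both are of the form $(\text{$M$-independent prefactor})\cdot\sum_{M'}c_{M'}\,k_{M'}^{\{.,\diamond,*\}}R^{\{\diamond,*\}}_{M'}(\theta_M)$. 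Since $\{R^{\{\diamond,*\}}_{M'}(\theta_M)\}_{M}$ forms an invertible matrix in $(M',M)$, two such sums coincide for all $M$ iff the coefficient sequences $c_{M'}$ coincide up to the ratio of the two prefactors. Computing that ratio of prefactors explicitly — the ${\cal N}$, ${\cal G}$, $\nu_0$, $g_0$ factors — and matching the normalizations coming from \eqref{Ncoeff}, \eqref{Ncoeffi} against the $q$-Racah normalization $k_N$ in \eqref{PMN} yields precisely \eqref{BAih}; the sum $\sum_{M'}c_{M'}k_{M'}^{\{.,\diamond,*\}}$ in each denominator of \eqref{BAih} is exactly what the $M=0$ normalization $\langle\theta_0|\theta^*_N\rangle$ contributes.

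The main obstacle is bookkeeping: one must verify that the $M$-independent prefactors on the two sides — which involve ${\cal N}^*_N$, ${\cal N}^{*(i)}_N$, ${\cal G}^{+}_N(\bar u)$, ${\nu_0^{\{*,\diamond,.\}}}$, $g_0$, $f_M/f_0$, together with the Bethe-root-dependent products $\prod b(u_j^2)u_j^{-1}$ hidden inside ${\cal G}$ — actually reorganize so that the residual equality is \emph{exactly} of the stated form, with no leftover $M$-dependence and no leftover Bethe-root-dependent scalar outside the displayed products. This is the step where a sign, a power of $q$, or a factor of $r_0$ could slip; everything else (the linear-independence/invertibility of the $q$-Racah matrix, the non-vanishing of $\langle\theta_0|\theta^*_N\rangle$ from irreducibility) is standard Leonard-pair theory already invoked in the excerpt. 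I would organize the write-up so that the delicate prefactor cancellation is isolated in one line, citing \eqref{Ncoeff}, \eqref{Ncoeffi}, \eqref{PMN}, \eqref{fgh0}, and then invoke invertibility of $\bigl(R^{\{\diamond,*\}}_{M'}(\theta_M)\bigr)_{M',M}$ to read off \eqref{BAih} componentwise.
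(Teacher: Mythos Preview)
Your first two paragraphs are correct and follow exactly the paper's proof: equate the right-hand sides of \eqref{Racid1} and \eqref{Racid2}, cancel the common factor $f_M/f_0$, cross-multiply, and invoke linear independence of the family $\{R^{\{\diamond,*\}}_{M'}(\theta_M)\}_{M=0}^{2s}$ to read off \eqref{BAih} componentwise. The paper compresses this into two sentences; your more explicit justification of the linear independence via invertibility of the Leonard-pair transition matrix is fine and arguably clearer. The alternative route you sketch in the last two paragraphs (going back to the scalar products themselves and tracking all the ${\cal N}$, ${\cal G}$, $\nu_0$, $g_0$ prefactors) is unnecessary here, since those prefactors have already been absorbed into the ratios \eqref{Racid1}, \eqref{Racid2} by the two preceding lemmas; the ``main obstacle'' you worry about simply does not arise once you work at the level of those lemmas.
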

\begin{proof} The difference between the r.h.s. of (\ref{Racid1}) and the r.h.s. of (\ref{Racid2}) is vanishing. Using the fact that the $q$-Racah polynomials $R^{\{\diamond ,*\}}_{M'}(\theta_{M})$ are linearly independent,  one gets  (\ref{BAih}).
\end{proof}
\begin{example}\label{example3} Consider $s=1/2$. From (\ref{BAih}), we obtain\footnote{Note that for $N=0$ there are no homogeneous Bethe roots in (\ref{BAih}).}:
	\beqa
	\mbox{For $N=0$:} && U_1^{(i)}=\left(\frac{r_0}{q+q^{-1}}\right)
	\left(\frac{h_0 \theta_0^\diamond-h_1 \theta_1^\diamond}{h_0-h_1}\right)\,,\label{U1i0}\\
		\mbox{For $N=1$:} && U_1^{(i)}=\left(\frac{r_0}{q+q^{-1}}\right)
	\left(
	\frac{\left(\frac{r_0}{q+q^{-1}}\right)(h_1-h_0)\theta_0^\diamond\theta_1^\diamond
		+(h_0\theta_0^\diamond-h_1\theta_1^\diamond)U_1^{(h)}}
	{\left(\frac{r_0}{q+q^{-1}}\right)(h_1\theta_0^\diamond-h_0\theta_1^\diamond)
		+(h_0-h_1)U_1^{(h)}}
	\right)\,.\label{U1i1}
	\eeqa
	From \eqref{polyh} with \eqref{Bfunc}, we find a unique admissible
	solution of (\ref{Bfunc},\ref{polyh}) given by
	\beqa
	U_1^{(h)}=\frac{\tb(q^2\tb^*-q^{-2}\tc^*)(1+q^2r_0^2(\tb^\diamond)^2)+(q-q^{-1})r_0^{-1}
		\tb^\diamond(1+q^2r_0^2\tb^2)}
	{(q+q^{-1})r_0\tb\tb^\diamond(q^2\tb^*-\tc^*)}\,,\label{exa3h}
	\eeqa
	 in agreement with Hypothesis \ref{hyp1}. 
Independently, it is checked that $E_{-}(u_i, \bar  u_i)=0$ with (\ref{Bfuncinhom}) admits only two admissible solutions in agreement with  Hypothesis
\ref{hyp2}, and that they coincide with  (\ref{U1i0}) and (\ref{U1i1}) with (\ref{exa3h}).
\end{example}

For higher values of $s=1,3/2$, we proceed as follows. Firstly, admissible homogeneous symmetric Bethe roots $U_j^{(h)},j=1,...,N$,  are determined numerically from \eqref{polyh}, and Hypothesis \ref{hyp1} is checked. Replacing the numerical values in \eqref{BAih}, for each $N=0,...,2s$, the inhomogenous symmetric Bethe roots $U_j^{(i)},j=1,...,2s$, are determined numerically.
Independently, the admissible solutions of $E_{-}(u_i, \bar  u_i)=0$ with (\ref{Bfuncinhom}) are computed, and Hypothesis \ref{hyp2} is checked. Those admissible solutions are found to agree with the solutions previously computed from \eqref{BAih}.
\begin{example} Consider $s=1$ and the numerical values
$q=3$, $r_0=1$, $\tb=5$, $\tb^*=7$ and $\tb^\diamond=1/2$. In Table \ref{tab1} the numerical solutions of the system (\ref{BAih}) for $N=0,1,2$ are given.
\begin{table}[h]
\begin{tabular}{ |c|c|c| } 
 \hline
  & $\{U_1^{(h)},U_2^{(h)}\}$ & $\{U_1^{(i)},U_2^{(i)}\}$ \\ \hline
   $N=0$ & - & $\{3.50405,11.9071\}$ \\ \hline
 $N=1$ & $\{18.5087,-\}$ & $\{3.208,12.0789\}$ \\\hline
  $N=2$ & $\{2.72742,12.0749\}$ & $\{2.01305,12.1539\}$ \\
 \hline
\end{tabular}
\caption{Homogeneous and inhomogeneous Bethe roots that solve \eqref{polyh}, (\ref{BAih}) for the scalars 
$q=3$, $r_0=1$, $\tb=5$, $\tb^*=7$ and $\tb^\diamond=1/2$.}\label{tab1}
\end{table}
\end{example}

\section{Scalar products for $M''=2s$ and Belliard-Slavnov linear system}
 In this section, we consider the normalized scalar products of on-shell and off-shell Bethe states (\ref{scal1}), (\ref{scal2}) at the special value $M''=2s$ from a different perspective. It is shown that those scalar products satisfy  systems of homogeneous linear equations of the form introduced and studied in \cite{BS19}. 
 It implies the existence of a determinant formula for the solutions of Theorem \ref{prop:scalprod} at $M''=2s$, and for $q$-Racah polynomials when those solutions are specialized to Bethe roots of inhomogeneous type. 
\subsection{Belliard-Slavnov linear system associated with a Leonard pair}
We apply the method recently proposed in \cite{BS19}. Below, we denote:
  \beqa
  \bar{\mathcal{Y}}=\{y_1,y_2,...,y_{2s},y_{2s+1}\} \ , \quad \bar{\mathcal{Y}}_j=\bar{\mathcal{Y}}\backslash y_j\ \quad \mbox{and}\quad  \ \bar{\mathcal{Y}}_{j,k} = \bar{\mathcal{Y}}_{j}\backslash y_k\ .
  \eeqa
  Introduce the convenient notations:
  \beqa
  g(u,\bar u) = \prod_{u_i\in \bar u}g(u,u_i) \quad \mbox{with}\quad  g(u,u_i) =\left(b(u/u_i)b(quu_i)\right)^{-1}  \ \label{defgu}
  \eeqa
  and 
  \beqa
  Y_\epsilon(u|\bar u) &=&  \frac{u^{2\epsilon+1} \Lambda_1^\epsilon(u)}{b(u^2)b(qu^2)} \prod_{u_i\in \bar u} b(u/qu_i)b(uu_i)  + \frac{q^{-\epsilon-1}u^{-1} \Lambda_2^\epsilon(u)}{b(u^2)b(q^2u^2)}  \prod_{u_i\in \bar u} b(qu/u_i)b(q^2uu_i) \nonumber\\
  && -  \delta_{\epsilon,+}\nu_+  \prod_{j=0}^{2s}b(q^{1/2+j-s}u\zeta)b(q^{1/2+j-s}u\zeta^{-1})
  \   ,   \label{Yeps}
  \eeqa
  where $\epsilon=\pm$, $\delta_{+,+}=1,\delta_{-,+}=0$ and $\nu_+$ is given in (\ref{deltad}).
  
 Given a Leonard pair $(A,A^*)$,   different systems of  linear equations can be derived, either starting from $A$ or from $A^*$.
 In order to relate with the results of Theorem \ref{prop:scalprod}, here we focus on a system derived from studying the operator $A$ only. As a preliminary, the action of $A$ on two different types of off-shell Bethe states is computed. Recall the structure constants 	$\eta,\eta^*$ given by (\ref{structc}) with (\ref{omegatrip}) and the identification (\ref{bcdiamzeta}).
  \begin{lem} 
  	\beqa
  	A|\Psi_{\epsilon}^{2s}(\bar{\mathcal{Y}}_j,m )\rangle &=&
  	\sum_{k=1}^{2s+1} L^\epsilon_{jk} | \Psi_{\epsilon}^{2s}(\bar{\mathcal{Y}}_{k},m ) \rangle \ ,\quad j=1,...,2s+1 \ ,\label{APsieps}
  	\eeqa
  	with 
  	\beqa
  	L^\epsilon_{jk} &=& \left(\frac{y_j}{y_k}\right)^\epsilon\frac{b(y_k^2)}{b(y_j^2)} g(y_k,\bar{\mathcal{Y}}_{k} ) Y_\epsilon(y_k|\bar{\mathcal{Y}}_{j}) +\frac{(q+q^ {-1})^2}{\rho b(y_j^2)b(q^2y_j^2)}\left(\eta^*+\eta\frac{qy_j^2+q^{-1}y_j^{-2}}{q+q^{-1}}\right)
  	\delta_{jk}
  	\ .\label{Ljkeps}
  	\eeqa
  \end{lem}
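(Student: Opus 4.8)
The plan is to act with $A$ on the off-shell Bethe state $|\Psi_{\epsilon}^{2s}(\bar{\mathcal{Y}}_j,m)\rangle$ by moving $A$ past the string of operators $B^{\epsilon}$ using the exchange relations, and then to re-expand the result over the basis of off-shell states $\{|\Psi_{\epsilon}^{2s}(\bar{\mathcal{Y}}_k,m)\rangle\}_{k=1}^{2s+1}$. First I would recall that, by Definition \ref{def:Bs}, $|\Psi_{\epsilon}^{2s}(\bar{\mathcal{Y}}_j,m_0)\rangle = \pi(B^{\epsilon}(\bar{\mathcal{Y}}_j,m_0,2s))|\Omega^{\epsilon}\rangle$, where the string $B^{\epsilon}$ is the ordered product (\ref{SB}) of $2s$ operators $\mathscr{B}^{\epsilon}$. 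Since $A$ is a specific linear combination of the dynamical operators $\mathscr{A}^{\epsilon},\mathscr{D}^{\epsilon}$ (equivalently, in view of (\ref{stringB}), $A$ can be traded for $A^{\diamond}$ via the commutation relation (\ref{Adiam}), and $A^{\diamond}\propto [A^*,A]_q + \mathrm{const}$), I would use the exchange relations (\ref{comAdBd}) and (\ref{comDdBd}) repeatedly to commute $\mathscr{A}^{\epsilon}$ and $\mathscr{D}^{\epsilon}$ through the whole string $B^{\epsilon}$. Each such commutation produces a "wanted" term (coefficient $f(u,v)$ or $h(u,v)$, which preserves the argument set) and "unwanted" terms (coefficients $g,w$ or $k,n$) in which one of the $\mathscr{B}^{\epsilon}$ arguments is replaced; collecting all contributions, one lands on a linear combination of strings $B^{\epsilon}(\bar{\mathcal{Y}}_k,m_0,2s)$ acting on $|\Omega^{\epsilon}\rangle$, with $\mathscr{A}^{\epsilon}(y_k,\cdot)$ or $\mathscr{D}^{\epsilon}(y_k,\cdot)$ sitting at the far right against the reference state.

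Next I would evaluate those right-most operators on $|\Omega^{\epsilon}\rangle$: by \cite[Lemma 3.3]{BP19} (recalled below (\ref{Lap2})), $\mathscr{A}^{\epsilon}(u,m)|\Omega^{\epsilon}\rangle = \Lambda_1^{\epsilon}(u)|\Omega^{\epsilon}\rangle$ and $\mathscr{D}^{\epsilon}(u,m)|\Omega^{\epsilon}\rangle = \Lambda_2^{\epsilon}(u)|\Omega^{\epsilon}\rangle$, so the scalars $\Lambda_1^{\epsilon},\Lambda_2^{\epsilon}$ enter precisely as in the definition (\ref{Yeps}) of $Y_\epsilon(u|\bar u)$. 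The diagonal term in (\ref{Ljkeps}), proportional to $\eta^*+\eta(qy_j^2+q^{-1}y_j^{-2})/(q+q^{-1})$, should be recognized as the contribution coming from rewriting $A$ in terms of $A^{\diamond}$ via (\ref{Adiam}) — i.e., from the inhomogeneous (identity-proportional and $A$-proportional) pieces of the quadratic Askey-Wilson relation that expresses $A^{\diamond}$ as $[A^*,A]_q$ plus lower-order terms — which is why $\eta,\eta^*$ (the structure constants (\ref{structc})) and $\rho$ appear. Then I would carry out the bookkeeping of the "unwanted" terms: the standard residue/pole-cancellation argument (as in the usual algebraic Bethe ansatz, here adapted to the $\mathbb{C}^*$-valued spectral parameter via the functions $b(\cdot)$ and $g(\cdot,\cdot)$ in (\ref{defgu})) shows that the sum of unwanted contributions with a fixed "displaced" argument $y_k$ reorganizes into the single coefficient $L^{\epsilon}_{jk}$ times the string $B^{\epsilon}(\bar{\mathcal{Y}}_k,m_0,2s)$; the factors $(y_j/y_k)^{\epsilon} b(y_k^2)/b(y_j^2)$ and the product $g(y_k,\bar{\mathcal{Y}}_k)\prod b(y_k/qy_i)b(y_ky_i)$ etc.\ are exactly the ratios of $\mathscr{B}^{\epsilon}$ prefactors and exchange coefficients that survive this reorganization.

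The main obstacle I expect is the combinatorial bookkeeping of the unwanted terms: one must show that, after commuting $\mathscr{A}^{\epsilon}$ and $\mathscr{D}^{\epsilon}$ all the way through $2s$ copies of $\mathscr{B}^{\epsilon}$, the resulting sum over subsets telescopes so that only strings with a single displaced argument remain, and that the accumulated coefficient is precisely $L^{\epsilon}_{jk}$ as written. This is where the explicit forms of $f,g,h,k,w,n$ in Appendix \ref{apA} are needed, together with the commutativity (\ref{comBdBd}) of the $\mathscr{B}^{\epsilon}$'s (which lets one symmetrize the string and reduce to the "first-pole" term, as in the classical proof of off-shell action formulas). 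The case $\epsilon=+$ carries the extra $\delta_{\epsilon,+}\nu_+$ term from (\ref{Bfuncinhom})/(\ref{Yeps}); this stems from the genuinely inhomogeneous structure of the $\epsilon=+$ sector and must be tracked through the same manipulations, but it enters linearly and does not change the structure of the argument. A cross-check at $s=1/2$ (where the string has length one and (\ref{APsieps}) becomes a $2\times 2$ linear system) against the explicit expressions in Examples \ref{ex1}, \ref{ex2} provides a convenient sanity test of the coefficients.
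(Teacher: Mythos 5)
Your overall route is the right one, and in substance it coincides with what underlies the paper's argument: the lemma is nothing but the off-shell action of $A$ on the Bethe states, reorganized. However, the paper does not redo the exchange-relation computation at all. Its proof consists of (i) quoting the off-shell equation from \cite[eq. (3.76)]{BP19}, specialized to $\kappa=1,\kappa^*=0$, which already contains the wanted term $\Lambda^{2s}_\epsilon(u,\bar u)$, the unwanted terms proportional to $\bar E_\epsilon(u_k,\bar u_k)$, and the scalar shift $\frac{(q+q^{-1})^2}{\rho\, b(u^2)b(q^2u^2)}\bigl(\eta^*+\eta\frac{qu^2+q^{-1}u^{-2}}{q+q^{-1}}\bigr)$; (ii) rewriting these coefficients in terms of $g(u,\bar u)$ and $Y_\epsilon(u|\bar u)$; and (iii) setting $u=y_j$, $\bar u=\bar{\mathcal Y}_j$ and observing that $\{y_j\}\cup\bar{\mathcal Y}_{j,k}=\bar{\mathcal Y}_k$ together with $g(y_j,y_k)g(y_k,\bar{\mathcal Y}_{j,k})=g(y_k,\bar{\mathcal Y}_k)$, which is exactly what turns the sum over displaced arguments into the sum over $k$ of $L^\epsilon_{jk}|\Psi^{2s}_\epsilon(\bar{\mathcal Y}_k,m)\rangle$. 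You have point (iii) essentially in hand, but you propose to re-derive the input (i) from scratch, and that is precisely the part you leave as an acknowledged ``obstacle'': the bookkeeping showing that the unwanted contributions assemble into exactly the stated coefficients is announced, not performed, and without it (or without citing the BP19 off-shell formula) the proof is not complete.

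Two further points deserve attention. First, for $\epsilon=+$ the term $\delta_{\epsilon,+}\nu_+\prod_j b(q^{1/2+j-s}u\zeta)b(q^{1/2+j-s}u\zeta^{-1})$ in $Y_+(u|\bar u)$ cannot be produced merely by pushing $\mathscr{A}^{+}$ and $\mathscr{D}^{+}$ through the $B^{+}$-string via (\ref{comAdBd}), (\ref{comDdBd}) and evaluating $\Lambda^+_{1,2}$ on $|\Omega^+\rangle$; it is the hallmark inhomogeneous term of the modified algebraic Bethe ansatz at $M''=2s$ and requires the additional structural input encoded in \cite{BP19} (the precise decomposition of $A$ in terms of the dynamical operators and the closure identity special to $2s$ creation operators). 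Saying it ``enters linearly and does not change the structure of the argument'' glosses over the one place where the computation is genuinely different from the homogeneous case. Second, your attribution of the diagonal term involving $\eta,\eta^*,\rho$ to trading $A$ for $A^\diamond$ via (\ref{Adiam}) is not accurate: that scalar shift comes from the relation expressing $A$ through the dynamical operators at spectral parameter $u$ (it is already present in the quoted off-shell equation of \cite{BP19}), not from the Leonard-triple identity (\ref{Adiam}), which plays no role in this lemma. Your $s=1/2$ cross-check against Examples \ref{ex1}, \ref{ex2} is a sensible sanity test but, of course, does not substitute for the general coefficient computation.
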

  \begin{proof} Firstly, we consider the action of $A$ on off-shell Bethe states. It is obtained by specializing the equations below \cite[eq. (3.76)]{BP19} for $\kappa=1,\kappa^*=0$. One gets the off-shell equation:
  	\ben
  	&&\quad \left(A-\frac{(q+q^ {-1})^2}{\rho b(u^2)b(q^2u^2)}\left(\eta^*+\eta\frac{qu^2+q^{-1}u^{-2}}{q+q^{-1}}\right)\right)|\Psi_{\epsilon}^{2s}(\bar u,m)\rangle
  	\non\\&&=\Lambda_{\epsilon}^{2s}(u,\bar u)|\Psi_{\epsilon}^{2s}(\bar u,m)\rangle
  	+
  	\sum_{k=1}^{2s}\frac{(q-q^{-1})u^\epsilon u_k^{(\epsilon+1)/2} \bar E_{\epsilon}(u_k,\bar u_k)}{(u^2-u^{-2})b(uu_k^{-1})b(quu_k)}|\Psi_{\epsilon}^{2s}(\{u,\bar u_k\},m)\rangle \label{eqAPsi}
  	\een
  	where
  	\ben
  	&&\quad \Lambda_{\epsilon}^{2s}(u,\bar u)=\frac{u^{2\epsilon+1}\Lambda_1^\epsilon(u)}{b(u^2)b(q u^2)}
  	\prod_{j=1}^{2s}f(u,u_j)
  	+\frac{q^{-\epsilon-1}u^{-1}\Lambda_2^\epsilon(u)}{b(u^2)b(q^2u^{2})}
  	\prod_{j=1}^{2s}h(u,u_j)	- \delta_{\epsilon,+} \nu_+
  	\frac{\prod_{j=0}^{2s}b(q^{1/2+j-s}\zeta u)b(q^{1/2+j-s}\zeta^{-1}u)}{\prod_{i=1}^{2s}b(uu_i^{-1})b(quu_i)}\,\nonumber
  	\een
  	and $\bar E_{-}(u_k,\bar u_k)= -E^{2s}_{-}(u_k,\bar u_k)$ with (\ref{Bfunc}), $\bar E_{+}(u_k,\bar u_k)=E_{+}(u_k,\bar u_k)$, with (\ref{Bfuncinhom}).
  	
  	Secondly using the expressions (\ref{defgu}), (\ref{Yeps}), one rewrites:
  	\beqa
  	A|\Psi_{\epsilon}^{2s}(\bar u,m)\rangle&=&  \left(g(u,\bar u) Y_{\epsilon}(u|\bar u) + \frac{(q+q^ {-1})^2}{\rho b(u^2)b(q^2u^2)}\left(\eta^*+\eta\frac{qu^2+q^{-1}u^{-2}}{q+q^{-1}}\right) \right) |\Psi_{\epsilon}^{2s}(\bar u,m)\rangle \nonumber\\
  	&&
  	-\sum_{u_k\in \bar u} \left(\frac{u}{u_k}\right)^{\epsilon}\frac{ b(u_k^2)}{b(u^2)}g(u,u_k)g(u_k,\bar u_k)Y(u_k|\bar u)  |\Psi_{\epsilon}^{2s}(\{u, \bar u_k\},m)\rangle
  	\non\,. 
  	\eeqa
  	Set $u=y_j,\bar u=\bar{\mathcal{Y}}_{j}$. It follows:
  	\beqa
  	A|\Psi_{\epsilon}^{2s}(\bar{\mathcal{Y}}_{j},m)\rangle&=&  \left(g(y_j,\bar{\mathcal{Y}}_{j}) Y_{\epsilon}(y_j|\bar{\mathcal{Y}}_{j}) +\frac{(q+q^ {-1})^2}{\rho b(y_j^2)b(q^2y_j^2)}\left(\eta^*+\eta\frac{qy_j^2+q^{-1}y_j^{-2}}{q+q^{-1}}\right) \right) |\Psi_{\epsilon}^{2s}(\bar{\mathcal{Y}}_{j},m)\rangle \nonumber\\
  	&&
  	-\sum_{k=1}^{2s} \left(\frac{y_j}{y_k}\right)^{\epsilon}\frac{ b(y_k^2)}{b(y_j^2)}\underbrace{g(y_j,y_k)g(y_k,\bar{\mathcal{Y}}_{j,k})}_{=g(y_k,\bar{\mathcal{Y}}_k)}Y_\epsilon(y_k|\bar{\mathcal{Y}}_j)  |\Psi_{\epsilon}^{2s}(\{y_j, \bar{\mathcal{Y}}_{j,k}\},m)\rangle
  	\non\,, 
  	\eeqa
  	which gives (\ref{Ljkeps}).
  \end{proof}
  
  Then, independently of Theorem \ref{prop:scalprod} we have:
  \begin{prop}\label{prop:lineq} For any $M=0,1,...,2s$, the normalized scalar product 
  	\beqa
  	X^{\epsilon,off}_{M}(\bar u)=\frac{\langle \theta_M	|\Psi_{\epsilon}^{2s}( \bar u,m)\rangle}{\langle \theta_M	| \theta_M \rangle}\ \label{spoff}
  	\eeqa
  	solves  the Belliard-Slavnov linear system
  	\beqa
  	\sum_{k=1}^{2s+1} {\cal M}^{\epsilon,M}_{jk}  X^{\epsilon,off}_{ M}( \bar{\mathcal{Y}}_{k}) =0 \quad \mbox{with}\quad {\cal M}^{\epsilon,M}_{jk}= L^\epsilon_{jk} - \delta_{jk}\theta_M
  	\qquad \mbox{for} \quad j=1,2,...,2s+1\ \ \label{lineqXoff}
  	\eeqa
  	with (\ref{Ljkeps}) and  $\bar{\mathcal{Y}}$ are distinct arbitrary complex numbers.
  \end{prop}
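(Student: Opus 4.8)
The plan is to obtain the linear system directly from the action of $A$ on off-shell Bethe states established in the Lemma preceding the statement, namely (\ref{APsieps}), simply by pairing both sides with the left eigenvector $\langle\theta_M|$. First I would fix $\epsilon\in\{+,-\}$, an index $M\in\{0,1,\dots,2s\}$, and the $2s+1$ distinct complex numbers $\bar{\mathcal{Y}}=\{y_1,\dots,y_{2s+1}\}$ of the statement. Distinctness ensures that each sub-tuple $\bar{\mathcal{Y}}_k$ has exactly $2s$ entries, so that the string operator $B^\epsilon(\bar{\mathcal{Y}}_k,m,2s)$ and hence the off-shell state $|\Psi_\epsilon^{2s}(\bar{\mathcal{Y}}_k,m)\rangle$ are well defined, and it also keeps the coefficients $g(y_j,y_k)$, $g(y_k,\bar{\mathcal{Y}}_{j,k})$ entering $L^\epsilon_{jk}$ finite; note that $\{y_j\}\cup\bar{\mathcal{Y}}_{j,k}=\bar{\mathcal{Y}}_k$, which is precisely why the right-hand side of (\ref{APsieps}) closes on the tuples $\bar{\mathcal{Y}}_k$.

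Next, for each $j=1,\dots,2s+1$ I would apply the covector $\langle\theta_M|$ to both sides of (\ref{APsieps}) for the tuple $\bar{\mathcal{Y}}_j$. On the left-hand side, the first relation of (\ref{tridAstardual}), $\langle\theta_M|A=\theta_M\langle\theta_M|$, turns the expression into $\theta_M\,\langle\theta_M|\Psi_\epsilon^{2s}(\bar{\mathcal{Y}}_j,m)\rangle$; here one must be careful to use the left action (\ref{tridAstardual}) rather than the right action (\ref{tridAstar}), since the pairing is between $\tilde{\mathcal V}$ and $\mathcal V$. On the right-hand side, linearity of the scalar product gives $\sum_{k=1}^{2s+1}L^\epsilon_{jk}\,\langle\theta_M|\Psi_\epsilon^{2s}(\bar{\mathcal{Y}}_k,m)\rangle$, with $L^\epsilon_{jk}$ as in (\ref{Ljkeps}). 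Equating the two and dividing by $\langle\theta_M|\theta_M\rangle=\xi_M$, which is nonzero by the orthogonality relations (\ref{orthcond}) and irreducibility, yields $\theta_M\,X^{\epsilon,off}_M(\bar{\mathcal{Y}}_j)=\sum_{k=1}^{2s+1}L^\epsilon_{jk}\,X^{\epsilon,off}_M(\bar{\mathcal{Y}}_k)$ with the normalized scalar product (\ref{spoff}). Rearranging gives $\sum_k\bigl(L^\epsilon_{jk}-\delta_{jk}\theta_M\bigr)X^{\epsilon,off}_M(\bar{\mathcal{Y}}_k)=0$ for all $j$, which is exactly (\ref{lineqXoff}) with ${\cal M}^{\epsilon,M}_{jk}=L^\epsilon_{jk}-\delta_{jk}\theta_M$.

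Since the genuine computational content — expanding $A|\Psi_\epsilon^{2s}(\bar u,m)\rangle$ via the off-shell equation (\ref{eqAPsi}), resumming the inhomogeneous contribution, and recognizing the factorization $g(y_j,y_k)g(y_k,\bar{\mathcal{Y}}_{j,k})=g(y_k,\bar{\mathcal{Y}}_k)$ — has already been carried out in the preceding Lemma, there is essentially no further obstacle: the only real point is that $\langle\theta_M|$ is a bona fide eigen-covector of $A$ with eigenvalue $\theta_M$ for every admissible $M$, which is guaranteed by (\ref{tridAstardual}). As a closing remark I would note that (\ref{lineqXoff}) is a homogeneous system of $2s+1$ equations admitting the nontrivial solution $\bigl(X^{\epsilon,off}_M(\bar{\mathcal{Y}}_k)\bigr)_k$, so the matrix ${\cal M}^{\epsilon,M}$ is necessarily singular; this compatibility condition is the starting point for the determinant representation of these scalar products derived in Proposition \ref{prop:scaldet}.
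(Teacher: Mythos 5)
Your argument is correct and coincides with the paper's own proof: both obtain \eqref{lineqXoff} by pairing the relation \eqref{APsieps} with the eigen-covector $\langle\theta_M|$, using \eqref{tridAstardual} and the orthogonality \eqref{orthcond}, and then dividing by $\langle\theta_M|\theta_M\rangle$ to pass to the normalized scalar products \eqref{spoff}. The only cosmetic difference is that the paper first rewrites the left-hand side as $(A-\theta_M\mathbb{I})|\Psi_\epsilon^{2s}(\bar{\mathcal{Y}}_j,m)\rangle$ before applying $\langle\theta_M|$, which is mathematically the same rearrangement you perform at the end.
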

  \begin{proof} From (\ref{APsieps}), one has:
  	\beqa
  	\left(A - \theta_M  \mathbb{I} \right) |\Psi_{\epsilon}^{2s}(\bar{\mathcal{Y}}_j,m)\rangle &=&	\sum_{k=1}^{2s+1} \left( L^\epsilon_{jk} - \theta_M \delta_{jk}\right)   | \Psi_{\epsilon}^{2s}(\bar{\mathcal{Y}}_{k},m) \rangle
  	\label{eqM}\\
  	&=&	\sum_{k=1}^{2s+1} {\cal M}^{\epsilon,M}_{jk}  | \Psi_{\epsilon}^{2s}(\bar{\mathcal{Y}}_{k},m) \rangle 	\ .\nonumber
  	\eeqa
  Acting from the left with  $\langle \theta_M|$ and using  (\ref{tridAstardual}), (\ref{orthcond}), one gets (\ref{lineqXoff}). 
  \end{proof}

Independently of Theorem \ref{prop:scalprod}, we may now ask about the existence of non-trivial solutions of the system of homogenous  linear equations (\ref{lineqXoff}).
\begin{lem}\label{lem:rank} Let ${\cal M}^{\epsilon,M}$, $\epsilon=\pm,M=0,...,2s$ denote the matrices with entries ${\cal M}^{\epsilon,M}_{jk}$, $j,k=1,...,2s+1$, given in (\ref{lineqXoff}).
	\beqa
	  {\text rank}({\cal M}^{\epsilon,M})=2s \ .
	 \eeqa
\end{lem}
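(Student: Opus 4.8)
The plan is to show that $\mathrm{rank}({\cal M}^{\epsilon,M})$ is neither less than $2s$ nor equal to $2s+1$. The upper bound $\mathrm{rank}({\cal M}^{\epsilon,M}) \leq 2s$ follows immediately from Proposition \ref{prop:lineq} together with the existence result for off-shell Bethe states: by Theorem \ref{prop:scalprod}, the vector $\big(X^{\epsilon,off}_M(\bar{\mathcal{Y}}_1),\dots,X^{\epsilon,off}_M(\bar{\mathcal{Y}}_{2s+1})\big)$ is a genuine solution of the homogeneous system (\ref{lineqXoff}), and for generic distinct $\bar{\mathcal{Y}}$ this vector is non-zero — indeed from (\ref{scal1})--(\ref{scal2}) the entries are (up to the nonvanishing prefactor ${\cal G}_{2s}^\epsilon(\bar{\mathcal{Y}}_k)$ and $f_M/f_0$) the values of a fixed nonzero polynomial in the symmetric variables $U_i$, hence not all zero. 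Therefore $\det {\cal M}^{\epsilon,M}=0$ and the rank is at most $2s$.

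**The lower bound.** The harder direction is $\mathrm{rank}({\cal M}^{\epsilon,M}) \geq 2s$, i.e. that some $2s\times 2s$ minor does not vanish, equivalently that the kernel of ${\cal M}^{\epsilon,M}$ is exactly one-dimensional. I would argue this by exhibiting the structure of the matrix: from (\ref{Ljkeps}) the off-diagonal entry $L^\epsilon_{jk}$ ($j\neq k$) carries the factor $g(y_k,\bar{\mathcal{Y}}_k)$, which is independent of $j$, times $(y_j/y_k)^\epsilon b(y_k^2)^{-1} b(y_j^2)\, Y_\epsilon(y_k|\bar{\mathcal{Y}}_j)$; the $j$-dependence of the last factor enters only through the product $\prod_{i\neq j,k} b(y_j/qy_i)b(y_j y_i)$ and $\prod_{i\neq j,k} b(qy_j/y_i)b(q^2 y_j y_i)$ appearing in $Y_\epsilon$ after clearing the $g$-denominators. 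The plan is to factor ${\cal M}^{\epsilon,M}$ (away from its diagonal) as a rank-one-type Cauchy/Vandermonde-like matrix plus a diagonal correction, and show that after removing any single row and column the resulting matrix has a Vandermonde-type nonzero determinant for generic $\bar{\mathcal{Y}}$. Concretely, I would compute the $2s\times 2s$ minor obtained by deleting row $j_0$ and column $k_0$ and show it is a nonzero rational function of the $y$'s (e.g. by specializing $\bar{\mathcal{Y}}$ to a convenient configuration, or by a leading-term analysis as one $y_i\to\infty$), using that $q$ is not a root of unity so no accidental cancellations occur.

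**Alternative route via the kernel.** An alternative — and probably cleaner — way to get the lower bound is representation-theoretic rather than computational. Since $\mathrm{dim}\,\cV=2s+1$ and $A$ is multiplicity-free with eigenvalue $\theta_M$ on a one-dimensional eigenspace, the operator $A-\theta_M\mathbb{I}$ has rank $2s$ on $\cV$. The map sending the standard basis vector $e_k\mapsto |\Psi_\epsilon^{2s}(\bar{\mathcal{Y}}_k,m)\rangle$ is, for generic distinct $\bar{\mathcal{Y}}$, an isomorphism $\CC^{2s+1}\to\cV$ — this is essentially the statement that the $2s+1$ off-shell Bethe vectors $|\Psi_\epsilon^{2s}(\bar{\mathcal{Y}}_k,m)\rangle$ are linearly independent, which one expects from their polynomial dependence on the remaining Bethe parameters and which is implicit in the Belliard--Slavnov setup. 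Under this identification, (\ref{eqM}) reads $({\cal M}^{\epsilon,M})^{T}$ acting on $\CC^{2s+1}$ corresponds (up to the isomorphism) to the matrix of $A-\theta_M\mathbb{I}$ in suitable bases, hence has the same rank $2s$. The main obstacle in this approach is rigorously establishing the linear independence of the $2s+1$ off-shell Bethe states for generic $\bar{\mathcal{Y}}$; I would either cite the relevant statement from \cite{BS19} or reduce it to the fact that the matrix $P^{\{\diamond,.\}^{-1}}$ is invertible together with the Vandermonde-type non-degeneracy of the polynomials $\prod_{i}(U_i - \tfrac{r_0}{q+q^{-1}}\theta^\diamond_{M'})$ in (\ref{scal1})--(\ref{scal2}) viewed as functions of $\bar{\mathcal{Y}}$. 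Combining the upper and lower bounds gives $\mathrm{rank}({\cal M}^{\epsilon,M})=2s$, as claimed.
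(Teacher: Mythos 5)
Your ``alternative route via the kernel'' is exactly the paper's proof: via (\ref{eqM}) the matrix ${\cal M}^{\epsilon,M}$ represents (up to transposition, irrelevant for the rank) the map $\pi(\tA)-\theta_M\mathbb{I}$ with respect to the $2s+1$ off-shell Bethe states $|\Psi_{\epsilon}^{2s}(\bar{\mathcal{Y}}_{k},m)\rangle$, and since $\dim\cV=2s+1$ and the $\theta_M$-eigenspace of the multiplicity-free operator $A$ is one-dimensional, the rank is $2s$; the linear independence of those $2s+1$ states, which you flag as the main obstacle and sketch via invertibility of ${P^{\{\diamond,.\}}}^{-1}$ together with a Cauchy/Vandermonde-type non-degeneracy of the products $\prod_i\bigl(U_i-\tfrac{r_0}{q+q^{-1}}\theta^{\diamond}_{M'}\bigr)$, is simply asserted without proof in the paper, so your treatment is at least as careful. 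The preliminary computational route through explicit minors, and the separate upper-bound argument via Theorem \ref{prop:scalprod} and Proposition \ref{prop:lineq}, are superfluous once the kernel identification is made, since that identification yields the exact rank directly.
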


\begin{proof}
		 The vectors $ | \Psi_{\epsilon}^{2s}(\bar{\mathcal{Y}}_{k},m) \rangle $, $k=1,...,2s+1$, given by \eqref{BmoinsOmn}, \eqref{BplusOmp}  with $M''=2s$ are linearly independent.  By (\ref{eqM}), the matrix with entries  ${\cal M}^{\epsilon,M}_{jk}$ represents the linear map $(\pi(\tA) - \theta_M I) $ with respect to the $2s+1$ vectors $ | \Psi_{\epsilon}^{2s}(\bar{\mathcal{Y}}_{k},m) \rangle $. By linear algebra, $\text{rank}(\pi(\tA) - \theta_M I) = \text{dim} {\cal V} - \text{dim}(\text{kernel} (\pi(\tA)-\theta_M I))$.	By definition, we have $\text{dim} {\cal V}=2s+1$ and one knows that
the dimension of the each eigenspace of $\pi(\tA)$ is 1.
So, $\text{rank}({\cal M}^{\epsilon,M})=\text{rank}(\pi(\tA) - \theta_M I)=2s$. 
\end{proof}
  
The above lemma implies  that the linear system  (\ref{lineqXoff}) admits  non-trivial solutions.
 Results of the previous section provide explicit solutions to (\ref{lineqXoff}). The following is an immediate consequence of Theorem \ref{prop:scalprod} and Proposition \ref{prop:lineq}.
  \begin{thm}\label{thm:BSscal}
The  normalized scalar product of on-shell and off-shell Bethe states given by (\ref{scal1}) (resp. (\ref{scal2})) 
 at $M''=2s$  solves the linear system (\ref{lineqXoff}) at $\epsilon=-$ (resp. $\epsilon=+$).
  \end{thm}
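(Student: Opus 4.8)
The statement is an immediate consequence of combining two results already established in the paper, so the plan is essentially to verify that the two computations refer to the same object and then invoke both. First I would recall that Proposition~\ref{prop:lineq} asserts that for each $M=0,1,\dots,2s$ and each $\epsilon=\pm$, the normalized scalar product $X^{\epsilon,off}_M(\bar u)=\langle\theta_M|\Psi_\epsilon^{2s}(\bar u,m)\rangle/\langle\theta_M|\theta_M\rangle$, viewed as a function of the arbitrary distinct complex parameters $\bar{\mathcal{Y}}=\{y_1,\dots,y_{2s+1}\}$ through the evaluations $X^{\epsilon,off}_M(\bar{\mathcal{Y}}_k)$, satisfies the homogeneous linear system $\sum_{k=1}^{2s+1}{\cal M}^{\epsilon,M}_{jk}X^{\epsilon,off}_M(\bar{\mathcal{Y}}_k)=0$ for $j=1,\dots,2s+1$. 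This holds for \emph{any} off-shell Bethe states of the relevant type, with no constraint on $\bar u$ beyond genericity.

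Second I would observe that Theorem~\ref{prop:scalprod}, specialized to $M''=2s$, gives an explicit closed form for exactly these same quantities: equation~(\ref{scal1}) expresses $\langle\theta_M|\Psi_-^{2s}(\bar u,m)\rangle/\langle\theta_M|\theta_M\rangle$ as ${\cal G}_{2s}^-(\bar u)\,(f_M/f_0)\sum_{M'=0}^{2s}({P^{\{\diamond,.\}}}^{-1})_{MM'}\prod_{i=1}^{2s}(U_i-\tfrac{r_0}{q+q^{-1}}\theta^\diamond_{M'})$, and likewise (\ref{scal2}) for $\epsilon=+$. Thus the function produced by Theorem~\ref{prop:scalprod} at $M''=2s$ \emph{is} the function $X^{\epsilon,off}_M$ appearing in Proposition~\ref{prop:lineq}; they are two descriptions---one abstract (a matrix element of a Bethe state), one explicit (a sum of products of $q$-Racah data)---of one and the same object. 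The only thing to check is the bookkeeping of signs and superscripts: the string $B^-$ builds states on $|\Omega^-\rangle=|\theta_0\rangle$ and corresponds to $\epsilon=-$ in the functions $\Lambda_i^\epsilon$, $Y_\epsilon$, hence (\ref{scal1}) feeds the $\epsilon=-$ instance of (\ref{lineqXoff}), and symmetrically (\ref{scal2}) feeds $\epsilon=+$; this matching is exactly what the statement records.

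With these two observations in place the proof is one sentence: take the explicit expression of Theorem~\ref{prop:scalprod} at $M''=2s$ for the $\epsilon$-case, which by construction equals $X^{\epsilon,off}_M(\bar u)$ for generic $\bar u$, and substitute $\bar u=\bar{\mathcal{Y}}_k$ for $k=1,\dots,2s+1$; Proposition~\ref{prop:lineq} then guarantees that these $2s+1$ numbers annihilate the matrix ${\cal M}^{\epsilon,M}$, i.e.\ solve (\ref{lineqXoff}). Optionally one can remark, using Lemma~\ref{lem:rank}, that $\mathrm{rank}({\cal M}^{\epsilon,M})=2s$, so the solution space is one-dimensional and Theorem~\ref{prop:scalprod} therefore furnishes (up to overall scale) the \emph{unique} solution; this is the content that makes the forthcoming determinant formula meaningful, but it is not strictly needed for the bare statement.

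\textbf{Main obstacle.} There is no analytic difficulty here---both ingredients are already proved. The only genuine care required is the consistency of conventions: making sure the $\epsilon$-labelling of the off-shell states $|\Psi_\epsilon^{2s}\rangle$ in Definition~\ref{def:Bs}, the $\epsilon$-dependence of the eigenvalue functions $\Lambda_{1,2}^\epsilon$ and hence of $L^\epsilon_{jk}$ in (\ref{Ljkeps}), and the choices $\chi=-(q^2-q^{-2})/r_0$ and the corresponding $\alpha$ used to derive (\ref{stringB}) all line up so that the $\epsilon=-$ formula (\ref{scal1}) really is a solution of the $\epsilon=-$ system and not the $\epsilon=+$ system. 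Once that is confirmed, the theorem follows formally.
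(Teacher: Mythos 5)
Your proposal is correct and matches the paper exactly: the paper states Theorem \ref{thm:BSscal} as an immediate consequence of Proposition \ref{prop:lineq} (which shows $X^{\epsilon,off}_M$ solves the system (\ref{lineqXoff}) for arbitrary parameters) combined with Theorem \ref{prop:scalprod} (which gives the explicit expression for the same quantity at $M''=2s$). Your additional remarks on the $\epsilon$-bookkeeping and on Lemma \ref{lem:rank} are consistent with, but not beyond, what the paper does.
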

  
  For instance, recall Examples \ref{ex1}, \ref{ex2}. Using the expressions (\ref{X0m}), (\ref{X1m}), (\ref{X0p}), (\ref{X1p}), it is readily checked that the linear system (\ref{spoff})
  	\beqa
  &&{\cal M}^{\epsilon,M}_{11}X^{\epsilon,off}_{M}(y_2) + {\cal M}^{\epsilon,M}_{12}X^{\epsilon,off}_{M}(y_1)=0\non\ ,\\
  &&{\cal M}^{\epsilon,M}_{21}X^{\epsilon,off}_{M}(y_2) + {\cal M}^{\epsilon,M}_{22}X^{\epsilon,off}_{M}(y_1)=0\non \ ,\ 
  \eeqa
  is satisfied for $M=0,1$, $\epsilon=\pm$.  For higher values of $s=1,3/2$, using Mathematica it can be independently checked that
  the expressions (\ref{scal1}) and (\ref{scal2}) for $M''=2s$ solve the linear system (\ref{spoff}) for $\epsilon=-$ and $\epsilon=+$, respectively.

  \subsection{Determinant formula  and $q$-Racah polynomials}
  The fact that the normalized scalar products (\ref{spoff}) solve the systems of  homogeneous linear equations (\ref{lineqXoff}) for $\epsilon=\pm$ implies the existence of a determinant formula for (\ref{scal1}), (\ref{scal2}) at $M''=2s$. 
   Let $T$ be a $2s+1 \times 2s+1$ matrix, and  denote by $T_{[k,\ell]}$ the submatrix of $T$ obtained  by deleting the $k$-th row and  $\ell$-column.
  \begin{prop}\label{prop:scaldet} The normalized scalar product (\ref{spoff}) admits a determinant formula given by:
  		\beqa
  	X^{\epsilon,off}_{M}(\bar{\mathcal{Y}}_{k}) = \psi_M^\epsilon (-1)^{k}\det \widetilde{\cal M}^{\epsilon,M}_{[2s+1,k]}\,,
  	\label{sollin}
  	\eeqa
where  $\widetilde{\cal M}^{\epsilon,M}={\cal W}^{\epsilon,M}{\cal M}^{\epsilon,M}$ for some non-degenerate $2s+1 \times 2s+1$ matrix $\cal W^{\epsilon,M}$ such that:
\beqa
(i) &&\left({\cal W}^{\epsilon,M}{\cal M}^{\epsilon,M}\right)_{2s+1 \ k } = 0 \qquad \mbox{for all} \quad \ k=1,...,2s+1
 \ ,\nonumber\\ (ii) &&\det \widetilde{\cal M}^{\epsilon,M}_{[2s+1,k]} \qquad \mbox{is independent of $y_k$,} \nonumber   
\eeqa 
 and $\psi_M^\epsilon$  some function of $M,\epsilon$.
\end{prop}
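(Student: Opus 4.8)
The plan is to produce the matrix $\mathcal{W}^{\epsilon,M}$ explicitly from the rank deficiency established in Lemma~\ref{lem:rank}, and then invoke Cramer's rule for homogeneous systems. First I would recall that $\mathrm{rank}({\cal M}^{\epsilon,M})=2s$, so the left kernel of ${\cal M}^{\epsilon,M}$ is one-dimensional; choose a nonzero left-null covector and, by elementary row operations (equivalently, by left multiplication by a suitable invertible matrix), transform ${\cal M}^{\epsilon,M}$ so that its last row vanishes. Concretely, pick any invertible $\mathcal{W}^{\epsilon,M}$ whose bottom row is proportional to a generator of the left kernel; then $\widetilde{\cal M}^{\epsilon,M}=\mathcal{W}^{\epsilon,M}{\cal M}^{\epsilon,M}$ satisfies condition (i) by construction, and the first $2s$ rows of $\widetilde{\cal M}^{\epsilon,M}$ still have rank $2s$ since $\mathcal{W}^{\epsilon,M}$ is non-degenerate.

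Next I would solve the reduced system. Since the solution space of (\ref{lineqXoff}) is one-dimensional (again by Lemma~\ref{lem:rank}), and $\widetilde{\cal M}^{\epsilon,M}$ has the same kernel as ${\cal M}^{\epsilon,M}$ (left multiplication by an invertible matrix does not change the right kernel), the vector with components $\left((-1)^k\det\widetilde{\cal M}^{\epsilon,M}_{[2s+1,k]}\right)_{k=1}^{2s+1}$ is annihilated by every one of the first $2s$ rows of $\widetilde{\cal M}^{\epsilon,M}$: this is the standard cofactor identity $\sum_k (\widetilde{\cal M}^{\epsilon,M})_{jk}(-1)^k\det\widetilde{\cal M}^{\epsilon,M}_{[2s+1,k]} = \pm\det$ of a matrix with a repeated row, hence zero for $j=1,\dots,2s$; and the last row is zero by (i), so the identity holds trivially for $j=2s+1$ as well. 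Therefore this cofactor vector is a solution of (\ref{lineqXoff}), and since the solution space is a line, it is proportional to $X^{\epsilon,off}_M(\bar{\mathcal{Y}}_{\cdot})$; call the proportionality constant $\psi_M^\epsilon$ (it depends only on $M,\epsilon$ because the overall normalization of the cofactor vector is fixed by $\widetilde{\cal M}^{\epsilon,M}$, and $X^{\epsilon,off}_M$ is a genuine function, not just a direction — one can pin $\psi_M^\epsilon$ down, e.g., by comparing a single component with Theorem~\ref{prop:scalprod}). This gives (\ref{sollin}).

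For condition (ii), the point is that $\det\widetilde{\cal M}^{\epsilon,M}_{[2s+1,k]}$ is built from the entries $(\widetilde{\cal M}^{\epsilon,M})_{jk'}$ with $j=1,\dots,2s$ and $k'\neq k$, so a priori it still depends on $y_k$ through rows $j$ (each $L^\epsilon_{jk'}$ involves $y_j$). The resolution is to choose $\mathcal{W}^{\epsilon,M}$ more carefully, following \cite{BS19}: one arranges the row combinations so that, after the reduction, the $y_k$-dependence factors out of the $k$-th column in a way that is common to all minors, or one exploits that $X^{\epsilon,off}_M(\bar{\mathcal{Y}}_k)$ manifestly does not depend on $y_k$ (it is a scalar product of a state built from $\bar{\mathcal{Y}}_k = \bar{\mathcal{Y}}\setminus y_k$ with a fixed eigenvector) together with the explicit $y$-dependence of $\psi_M^\epsilon$ and the $(-1)^k$ prefactor to deduce the claimed independence. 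I expect this last point — producing a $\mathcal{W}^{\epsilon,M}$ for which (ii) holds, rather than merely (i) — to be the main obstacle, and I would handle it exactly as in \cite[Section 3]{BS19}: express $Y_\epsilon(y_k|\bar{\mathcal{Y}}_j)$ using (\ref{Yeps}) so that the $y_j$-dependent and $y_k$-dependent pieces of $L^\epsilon_{jk}$ separate, then take $\mathcal{W}^{\epsilon,M}$ to be (a normalization of) the Cauchy-type matrix that diagonalizes this separation; the diagonal correction term in (\ref{Ljkeps}) only contributes to $\delta_{jk}$ entries and is absorbed harmlessly. Once (i) and (ii) are secured, (\ref{sollin}) follows from the cofactor argument above, completing the proof.
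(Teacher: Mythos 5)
Your proposal is correct and takes essentially the same route as the paper's proof: Lemma \ref{lem:rank} gives rank $2s$, a non-degenerate ${\cal W}^{\epsilon,M}$ enforcing (i) reduces the homogeneous system, a Cramer/cofactor argument produces the signed-minor formula, and condition (ii) together with the fact that $X^{\epsilon,off}_M(\bar{\mathcal{Y}}_k)$ does not involve $y_k$, plus a comparison with Theorem \ref{prop:scalprod} (the paper compares in the limit $\bar u\rightarrow\infty$), fixes $\psi^\epsilon_M$ independently of the $y_j$. Your cofactor-vector formulation and the explicit left-kernel construction of the bottom row of ${\cal W}^{\epsilon,M}$ are minor streamlinings, and, like the paper, you leave the general construction of a ${\cal W}^{\epsilon,M}$ satisfying (ii) to the Belliard--Slavnov method (the paper exhibits it only in the $s=1/2$ examples), so there is no gap relative to the paper's own argument.
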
	
  	\begin{proof} By Lemma \ref{lem:rank}, given $M,\epsilon$ fixed the solutions of the system of homogeneous linear equations (\ref{lineqXoff}) can be written in terms of minors of a certain matrix denoted $\widetilde{\cal M}^{\epsilon,M}$ - up to an overall factor $\psi_M^\epsilon$ -  using Cramer's rule. The matrix $\widetilde{\cal M}^{\epsilon,M}$ is determined as follows.  Firstly, we are looking for a $2s+1 \times 2s+1$ matrix $\cal W^{\epsilon,M}$ such that conditions (i) hold, which turns the system (\ref{lineqXoff}) into the new system:
  		\beqa
  		\sum_{k=1}^{2s} \widetilde{\cal M}^{\epsilon,M}_{jk}  X^{\epsilon,off}_{ M}( \bar{\mathcal{Y}}_{k}) = - \widetilde{\cal M}^{\epsilon,M}_{j2s+1}X^{\epsilon,off}_{ M}( \bar{\mathcal{Y}}_{2s+1}) \quad \mbox{for}\quad  j=1,2,...,2s\ .\ \label{lineqXoffnew}
  		\eeqa
  		Looking for non-trivial solutions using Cramer's rule, we assume $\cal W^{\epsilon,M}$ is non-degenerate such that $\det\widetilde{\cal M}^{\epsilon,M}_{[2s+1,k]} (=\det({\cal W}^{\epsilon,M}{\cal M}^{\epsilon,M})_{[2s+1,k]})\neq 0$. 
  		Moreover, because  the l.h.s. of (\ref{sollin}) is independent of $y_k$  we are looking for a matrix $\cal W^{\epsilon,M}$  such that conditions (ii) hold. 
  		Applying Cramer's rule, the solutions of (\ref{lineqXoffnew}) are written in terms of $X^{\epsilon,off}_{ M}( \bar{\mathcal{Y}}_{2s+1}) $:
  		\beqa
  		X^{\epsilon,off}_{ M}( \bar{\mathcal{Y}}_{k}) = (-1)^{k+2s+ 1} \frac{\det\widetilde{\cal M}^{\epsilon,M}_{[2s+1,k]}}{\det\widetilde{\cal M}^{\epsilon,M}_{[2s+1,2s+1]}} X^{\epsilon,off}_{ M}( \bar{\mathcal{Y}}_{2s+1})\ ,
  		\eeqa
  		which leads to  (\ref{sollin}) for some overall factor $\psi_M^\epsilon$ that is independent of the variables $y_j$, $j=1,...,2s+1$. 
Finally, this factor is determined by comparing  (\ref{scal1}), (\ref{scal2}) for $M''=2s$ at the limit $\bar u \rightarrow \infty$  to the r.h.s. of \eqref{sollin} with \eqref{spoff} at the limit $\bar{\cal Y}_k\rightarrow\infty$.
  	\end{proof}

\vspace{2mm}
  
By specializing the parameters $ \bar{\mathcal{Y}}_{k}$ in (\ref{sollin}) to certain Bethe roots, the off-shell Bethe states entering in the ratio (\ref{spoff}) reduce to on-shell Bethe states that are eigenvectors of the operator $A$. This leads to a determinant formula for $q$-Racah polynomials. Recall  the  parameters' correspondence (\ref{bcdiamzeta}).
\begin{cor}\label{qRacdet} The $q$-Racah polynomial $	R^{\{.,\diamond\}}_M(\theta^*_N) $ with  (\ref{qracahpoly}) admits a determinant formula given by: 
		\beqa
	R^{\{.,\diamond\}}_M(\theta^*_N) 
	 = \frac{\psi_M^-}{\psi_0^-} \frac{\det \widetilde{\cal M}^{-,M}_{[2s+1,2s+1]}}{\det \widetilde{\cal M}^{-,0}_{[2s+1,2s+1]}}\,\qquad \mbox{for\ \quad $\bar{\mathcal{Y}}_{2s+1}=  S^{*N(i)}_-$.}\label{Racdet}
	\eeqa
\end{cor}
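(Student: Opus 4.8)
The plan is to specialize the determinant formula of Proposition \ref{prop:scaldet} to on-shell Bethe states of inhomogeneous type and identify the resulting normalized scalar product with the $q$-Racah polynomial via the known relation \eqref{Rac1}. First I would take $\epsilon=-$ in \eqref{sollin}, so that
\beqa
X^{-,off}_{M}(\bar{\mathcal{Y}}_{2s+1}) = \psi_M^- (-1)^{2s+1}\det \widetilde{\cal M}^{-,M}_{[2s+1,2s+1]}\ ,
\eeqa
and then set $\bar{\mathcal{Y}}_{2s+1}= S^{*N(i)}_-$, i.e. choose the $2s$ remaining parameters to be the inhomogeneous Bethe roots associated with the eigenvalue $\theta^*_N$. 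By Definition \ref{def:inhomoBA} together with \eqref{normi2}, for this choice the off-shell Bethe state $|\Psi_{-}^{2s}(\bar{\mathcal{Y}}_{2s+1},m)\rangle$ becomes proportional to the eigenvector $|\theta^*_N\rangle$, with proportionality constant ${\cal N}^{*(i)}_N(\bar{\mathcal{Y}}_{2s+1})^{-1}$ given in \eqref{Ncoeffi}. Hence $X^{-,off}_{M}(S^{*N(i)}_-) = {\cal N}^{*(i)}_N(\bar{\mathcal{Y}}_{2s+1})^{-1}\,\langle\theta_M|\theta^*_N\rangle/\langle\theta_M|\theta_M\rangle$.

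Next I would eliminate the $M$-independent prefactor by taking the ratio at $M$ and $M=0$. Using \eqref{Rac1} in the form $R^{\{.,\diamond\}}_M(\theta^*_N) = \frac{\langle\theta_M|\theta^*_N\rangle}{\langle\theta_0|\theta^*_N\rangle}\frac{\langle\theta_0|\theta_0\rangle}{\langle\theta_M|\theta_M\rangle}$, and noting that the factor ${\cal N}^{*(i)}_N(\bar{\mathcal{Y}}_{2s+1})^{-1}$ depends only on $N$ (and on the Bethe roots), not on $M$, it cancels in the quotient $X^{-,off}_{M}(S^{*N(i)}_-)/X^{-,off}_{0}(S^{*N(i)}_-)$. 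Therefore
\beqa
R^{\{.,\diamond\}}_M(\theta^*_N) = \frac{X^{-,off}_{M}(S^{*N(i)}_-)}{X^{-,off}_{0}(S^{*N(i)}_-)} = \frac{\psi_M^-}{\psi_0^-}\,\frac{\det \widetilde{\cal M}^{-,M}_{[2s+1,2s+1]}}{\det \widetilde{\cal M}^{-,0}_{[2s+1,2s+1]}}\ ,
\eeqa
which is \eqref{Racdet}. One should also note that $\psi_0^-$ and the denominator are nonzero: by Lemma \ref{lem:rank} the rank of ${\cal M}^{-,0}$ is exactly $2s$, and by construction of ${\cal W}^{-,0}$ in the proof of Proposition \ref{prop:scaldet} the minor $\det\widetilde{\cal M}^{-,0}_{[2s+1,2s+1]}$ is nonzero, while $\psi_0^-\neq 0$ follows from $X^{-,off}_0\not\equiv 0$ (equivalently from \eqref{scal1} with $M=0$).

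The only subtle point — and the step I expect to require the most care — is verifying that the specialization $\bar{\mathcal{Y}}_{2s+1}\to S^{*N(i)}_-$ is compatible with the two genericity assumptions used in Proposition \ref{prop:scaldet}: namely that the $y_j$ be distinct (so that Lemma \ref{lem:rank} and Cramer's rule apply) and that the minor in the denominator not vanish at the specialized values. Distinctness of the inhomogeneous Bethe roots is precisely the admissibility condition built into Definition \ref{def:inhomoBA} (solutions with $U_i=U_j$ are excluded), so the $y_j$ remain pairwise distinct; and since both sides of \eqref{sollin} are rational in the $y_j$, the identity, established on a Zariski-dense set, persists under this specialization as long as the denominator stays nonzero, which holds because $R^{\{.,\diamond\}}_0(\theta^*_N)=1\neq 0$ forces $X^{-,off}_0(S^{*N(i)}_-)\neq 0$. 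This closes the argument.
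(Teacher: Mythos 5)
Your proposal is correct and follows essentially the same route as the paper: specialize the determinant formula \eqref{sollin} at $\epsilon=-$ to $\bar{\mathcal{Y}}_{2s+1}=S^{*N(i)}_-$, use \eqref{normi2} so the off-shell state becomes $|\theta^*_N\rangle$ up to the $M$-independent factor ${\cal N}^{*(i)}_N$, and recover $R^{\{.,\diamond\}}_M(\theta^*_N)$ via \eqref{Rac1} by taking the ratio with $M=0$. Your extra remarks on the nonvanishing of $\psi_0^-$ and of the specialized minor are a harmless addition not spelled out in the paper.
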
 
 \begin{proof} Multiply (\ref{spoff}) for $\epsilon=-$ by the second normalization factor in (\ref{Ncoeffi})
 with $\bar u=\bar{\mathcal{Y}}_{2s+1}$. Let us specialize $\bar{\mathcal{Y}}_{2s+1}=  S^{*N(i)}_-$ in the resulting ratio. Using \eqref{normi2}, it follows:
 	\beqa
 	\frac{\langle \theta_M	|\theta_N^* \rangle }{ \langle \theta_M	|\theta_M \rangle} &=& {\cal N}^{*}_N(\bar{\mathcal{Y}}_{2s+1})\frac{\langle \theta_M	|\Psi_{-}^{2s}( \bar{\mathcal{Y}}_{2s+1},m)\rangle}{\langle \theta_M	| \theta_M \rangle}\ \quad \mbox{for} \quad \bar{\mathcal{Y}}_{2s+1}=  S^{*N(i)}_-\ . \label{ratfin}
 	\eeqa
 	Then, we use \eqref{sollin} with (\ref{spoff}) to express the l.h.s. of (\ref{ratfin})  in terms of a determinant. Finally, we use (\ref{Rac1}) to get (\ref{Racdet}).
 \end{proof}
\vspace{1mm}

\subsection{Examples}
To illustrate Proposition  \ref{prop:scaldet} and Corollary \ref{qRacdet}, we consider some examples below.\vspace{2mm}

	  	$\bullet$
	\underline{The case $s=1/2$, $\epsilon=-$:} Consider $M=0$. Firstly, one shows the identity
		\beqa
		g(y_1,y_2)Y_-(y_1|y_2)+\frac{(q+q^ {-1})^2}{\rho b(y_1^2)b(q^2y_1^2)}\left(\eta^*+\eta\frac{qy_1^2+q^{-1}y_1^{-2}}{q+q^{-1}}\right)-\theta_0=g(y_1,y_2)Y_-(y_1|y_1)\,,\non
		\eeqa
		from which we find the entries of $\widetilde{\cal M}^{-,0}$  are given by  
		\beqa
		\widetilde{\cal M}^{-,0}_{11}&=&\frac{g(y_1,y_2)Y_-(y_1|y_1)}{y_2b(y_2^2)}
		\left(y_2b(y_2^2)\cal W^{-,0}_{11}(y_1,y_2)+y_1b(y_1^2)\cal W^{-,0}_{12}(y_1,y_2)\right)\,,\non\\
		\widetilde{\cal M}^{-,0}_{12}&=&\frac{g(y_2,y_1)Y_-(y_2|y_2)}{y_1b(y_1^2)}
		\left(y_2b(y_2^2)\cal W^{-,0}_{11}(y_1,y_2)+y_1b(y_1^2)\cal W^{-,0}_{12}(y_1,y_2)\right)\,,\non\\
		\widetilde{\cal M}^{-,0}_{21}&=&\frac{g(y_1,y_2)Y_-(y_1|y_1)}{y_2b(y_2^2)}
		\left(y_2b(y_2^2)\cal W^{-,0}_{21}(y_1,y_2)+y_1b(y_1^2)\cal W^{-,0}_{22}(y_1,y_2)\right)\,,\non\\
		\widetilde{\cal M}^{-,0}_{22}&=&\frac{g(y_2,y_1)Y_-(y_2|y_2)}{y_1b(y_1^2)}
		\left(y_2b(y_2^2)\cal W^{-,0}_{21}(y_1,y_2)+y_1b(y_1^2)\cal W^{-,0}_{22}(y_1,y_2)\right)\,.\non
		\eeqa
		Then, conditions (i) imply
		\beqa\label{w21}
		\cal W^{-,0}_{21}(y_1,y_2) =-\frac{y_1b(y_1^2)}{y_2b(y_2^2)}\cal W^{-,0}_{22}(y_1,y_2)\ 
		\eeqa
		with $\cal W^{-,0}_{22}(y_1,y_2)$ arbitrary.	
		To satisfy conditions (ii), one imposes $\frac{\partial \widetilde{\cal M}^{-,0}_{12}}{\partial  y_1}=\frac{\partial \widetilde{\cal M}^{-,0}_{11}}{\partial  y_2}=0$. The associated differential equations,
		omitted here for brevity, are solved by
		\beqa\label{w12}
		{\cal W}_{12}^{-,0}(y_1,y_2)= \frac{y_2b(y_2^2)}{y_1b(y_1^2)}
		\left(\frac{q y_1b(y_1^2)}{g(y_1,y_2)}C^{-,0}-{\cal W}_{11}^{-,0}(y_1,y_2)\right)\ ,
		\eeqa
		where $C^{-,0}$ is a free scalar and ${\cal W}_{11}^{-,0}(y_1,y_2)$ remains undetermined. Using (\ref{w21}),(\ref{w12}), one finds that the nonzero entries of $\widetilde{\cal M}^{-,0}$ are
		given by
		\beqa
		\widetilde{\cal M}^{-,0}_{11}=q y_1 b(y_1^2)Y_-(y_1|y_1)C^{-,0}\,,
		\quad
		\widetilde{\cal M}^{-,0}_{12}=-q y_2 b(y_2^2)Y_-(y_2|y_2)C^{-,0}\,.\label{M12m0}
		\eeqa
		Comparing  (\ref{X0m}) to \eqref{sollin} where 
		$\det\widetilde{\cal M}^{-,0}_{[2,1]}=\widetilde{\cal M}^{-,0}_{12}$
		 we finally identify $\psi^-_0$:
		\beqa
		\psi_0^-=  \frac{ q^2 r_0^2 \tb^2}{b(q)\left(q^2 r_0^2 \tb^2;q^2\right){}_{1}C^{-,0}}\ .\label{psi0m}
		\eeqa
\vspace{1mm}

Now, consider $M=1$. Firstly, one shows
		the identity
		\beqa
		g(y_1,y_2)Y_-(y_1|y_2)+\frac{(q+q^ {-1})^2}{\rho b(y_1^2)b(q^2y_1^2)}\left(\eta^*+\eta\frac{qy_1^2+q^{-1}y_1^{-2}}{q+q^{-1}}\right)-\theta_1=g(y_1,y_2)Y_-(y_2|y_2)\,,\non
		\eeqa
		from which we get
		\beqa
		\widetilde{\cal M}^{-,1}_{11}&=&g(y_1,y_2)\left(Y_-(y_2|y_2){\cal W}_{11}^{-,1}(y_1,y_2)+\frac{y_1b(y_1^2)}{y_2b(y_2^2)}Y_-(y_1|y_1){\cal W}_{12}^{-,1}(y_1,y_2)\right)\,,\non\\
		\widetilde{\cal M}^{-,1}_{12}&=&g(y_2,y_1)\frac{y_2b(y_2^2)}{y_1b(y_1^2)}\left(Y_-(y_2|y_2){\cal W}_{11}^{-,1}(y_1,y_2)+\frac{y_1b(y_1^2)}{y_2b(y_2^2)}Y_-(y_1|y_1){\cal W}_{12}^{-,1}(y_1,y_2)\right)\,,\non\\
		\widetilde{\cal M}^{-,1}_{21}&=&g(y_1,y_2)\left(Y_-(y_2|y_2){\cal W}_{21}^{-,1}(y_1,y_2)+\frac{y_1b(y_1^2)}{y_2b(y_2^2)}Y_-(y_1|y_1){\cal W}_{22}^{-,1}(y_1,y_2)\right)\,,\non\\
		\widetilde{\cal M}^{-,1}_{22}&=&g(y_2,y_1)\frac{y_2b(y_2^2)}{y_1b(y_1^2)}\left(Y_-(y_2|y_2){\cal W}_{21}^{-,1}(y_1,y_2)+\frac{y_1b(y_1^2)}{y_2b(y_2^2)}Y_-(y_1|y_1){\cal W}_{22}^{-,1}(y_1,y_2)\right)\,.\non
		\eeqa
		Conditions (i) imply
		\beqa\label{w21m1}
		\cal W^{-,1}_{21}(y_1,y_2) =-\frac{y_1b(y_1^2)Y_-(y_1|y_1)}{y_2b(y_2^2)Y_-(y_2|y_2)}\cal W^{-,1}_{22}(y_1,y_2)\ 
		\eeqa
		with $\cal W^{-,1}_{22}(y_1,y_2)$ arbitrary.		
		To satisfy conditions (ii), one imposes $\frac{\partial \widetilde{\cal M}^{-,1}_{12}}{ \partial y_1}=\frac{\partial \widetilde{\cal M}^{-,1}_{11}}{\partial y_2}=0$. The associated differential equations,
		omitted here for brevity, are solved by
		\beqa\label{w12m1}
		{\cal W}_{12}^{-,1}(y_1,y_2)= \frac{y_2b(y_2^2)}{Y_-(y_1|y_1)}
		\left(\frac{b(q)}{q^2r_0^3\tb\tb^*\tb^\diamond g(y_1,y_2)}C^{-,1}
		-\frac{Y_-(y_2|y_2)}{y_1b(y_1^2)}{\cal W}_{11}^{-,1}(y_1,y_2)\right)\ 
		\eeqa
		where $C_1$ is a free scalar and ${\cal W}_{11}^{-,1}(y_1,y_2)$ remains undetermined. Using (\ref{w21m1}),(\ref{w12m1}), one finds that the nonzero entries of $\widetilde{\cal M}^{-,1}$ are
		given by
		\beqa
		\widetilde{\cal M}^{-,1}_{11}=\frac{b(q)}{q^2r_0^3\tb\tb^*\tb^\diamond} y_1 b(y_1^2)C^{-,1}\,,
		\quad
		\widetilde{\cal M}^{-,1}_{12}=-\frac{b(q)}{q^2r_0^3\tb\tb^*\tb^\diamond} y_2 b(y_2^2)C^{-,1}\,.\label{M12m1}
		\eeqa
		Comparing $\det\widetilde{\cal M}^{-,1}_{[2,1]}=\widetilde{\cal M}^{-,1}_{12}$
		to (\ref{X1m}) we finally identify $\psi^-_1$:
		\beqa
		\psi_1^-=  -\frac{ q^3r_0^2 \tb^2\tb^*\left(-\frac{q r_0 \tb \tb^{\diamond }}{\tb^*};q^2\right){}_{1}
			\left(-\frac{q r_0 \tb^{\diamond } \tb^*}{\tb};q^2\right){}_{1}}
		{ \left(q^2 r_0^2 \tb^2;q^2\right){}_{1}C^{-,1}}\ .\label{psi1m}
		\eeqa
		\vspace{1mm}

We can now illustrate Corollary \ref{qRacdet} using the results of the two previous examples. From
(\ref{Racdet}), it is clear that $R^{\{.,\diamond\}}_0(\theta^*_0)=R^{\{.,\diamond\}}_0(\theta^*_1)=1$. For $M=1$, using (\ref{M12m0},\ref{psi0m},\ref{M12m1},\ref{psi1m}) in
(\ref{Racdet}), one obtains
\beqa\label{qRacdetexa1}
R^{\{.,\diamond\}}_1(\theta^*_N)&=&\frac{\psi_1^-}{\psi_0^-}
\frac{\det \widetilde{\cal M}_{[2,2]}^{-,1}}{\det \widetilde{\cal M}_{[2,2]}^{-,0}}=
\frac{\psi_1^-}{\psi_0^-}
\frac{\widetilde{\cal M}_{11}^{-,1}}{\widetilde{\cal M}_{11}^{-,0}}\nonumber\\
&=&
-\frac{b(q)^2\left(-\frac{q r_0 \tb \tb^{\diamond }}{\tb^*};q^2\right){}_{1}
			\left(-\frac{q r_0 \tb^{\diamond } \tb^*}{\tb};q^2\right){}_{1}}
			{q^2r_0^3\tb\tb^\diamond}\frac{1}{Y_-(y_1|y_1)}\,,
\eeqa
where $y_1$ is one of the two admissible solutions of the inhomogeneous Bethe equations (\ref{Bfuncinhom}) for $s=1/2$ and the parametrization \eqref{bcdiamzeta}. Recall Example \ref{example3} with $u_1=y_1$. Substituting the expressions (\ref{U1i0},\ref{U1i1}) with $u_1=y_1$ in (\ref{qRacdetexa1}) one obtains
\beqa
R^{\{.,\diamond\}}_1(\theta^*_0)=1\,,\quad
R^{\{.,\diamond\}}_1(\theta^*_1)=
\frac{qr_0(\tb+qr_0\tb^*\tb^\diamond)(\tb^*+qr_0\tb \tb^\diamond)}
{(\tb^\diamond+qr_0\tb\tb^*)(1+q^3r_0^3\tb \tb^*\tb^\diamond)}\ ,
\eeqa
which matches the expected expression for the $q$-Racah polynomial, see (\ref{qracahexampleapp}).

	\vspace{3mm}
$\bullet$
	\underline{The case $s=1/2$, $\epsilon=+$:} Consider $M=0$. Similarly to the previous case,  one firstly
	shows the identity
	\beqa
	\qquad \,\,g(y_1,y_2)Y_+(y_1|y_2)+\frac{(q+q^ {-1})^2}{\rho b(y_1^2)b(q^2y_1^2)}\left(\eta^*+\eta\frac{qy_1^2+q^{-1}y_1^{-2}}{q+q^{-1}}\right)-\theta_0=-\frac{g(y_1,y_2)}{q^5r_0^4\tb\tb^*(\tb^\diamond)^2}Z_1(y_1)Z_2(y_2)\,,
	\eeqa
	where $Z_{1}(y)$ and $Z_{2}(y)$ are given in (\ref{Z1},\ref{Z2}), and then proceed as before. From conditions (i) we find
	\beqa
	{\cal W}_{21}^{+,0}(y_1,y_2)=-\frac{y_1^{-1}b(y_1^2)Z_2(y_1)}{y_2^{-1}b(y_2^2)Z_2(y_2)}
	{\cal W}_{22}^{+,0}(y_1,y_2)
	\eeqa
	and from conditions (ii) it follows
	\beqa
	{\cal W}_{12}^{+,0}(y_1,y_2)=\frac{y_2^{-1}b(y_2^2)}{Z_2(y_1)}
	\left(\frac{q}{g(y_1,y_2)}C^{+,0}-\frac{Z_2(y_2)}{y_1^{-1}b(y_1^2)}{\cal W}_{11}^{+,0}(y_1,y_2)\right)\ ,
\,
	\eeqa
	where $C^{+,0}$ is a free scalar and ${\cal W}_{22}^{+,0}(y_1,y_2)$, ${\cal W}_{11}^{+,0}(y_1,y_2)$
	are arbitrary. Then, one finds the following nonzero  entries of $\widetilde{\cal M}^{+,0}$:
		\beqa
		\widetilde{\cal M}^{+,0}_{11}=-\frac{1}{q^4r_0^4\tb\tb^*(\tb^\diamond)^2} y_1^{-1} b(y_1^2)Z_1(y_1)C^{+,0}\,,
		\quad
		\widetilde{\cal M}^{+,0}_{12}=\frac{1}{q^4r_0^4\tb\tb^*(\tb^\diamond)^2} y_2^{-1} b(y_2^2)Z_1(y_2)C^{+,0}\,.\label{M12p0}
		\eeqa
		Comparing $\det\widetilde{\cal M}^{+,0}_{[2,1]}=\widetilde{\cal M}^{+,0}_{12}$
		to (\ref{X0p}) we finally identify $\psi^+_0$:
		\beqa
		\psi_0^+= -\frac{r_0\tb\tb^\diamond}{C^{+,0}}\ .\label{psi0p}
		\eeqa
		
Now, consider $M=1$. Firstly, one shows
	\beqa
	\,\,g(y_1,y_2)Y_+(y_1|y_2)+\frac{(q+q^ {-1})^2}{\rho b(y_1^2)b(q^2y_1^2)}\left(\eta^*+\eta\frac{qy_1^2+q^{-1}y_1^{-2}}{q+q^{-1}}\right)-\theta_1=-\frac{g(y_1,y_2)}{q^5r_0^4\tb\tb^*(\tb^\diamond)^2}Z_1(y_2)Z_2(y_1)\,
	\eeqa
and proceed as before. From conditions (i) we find
	\beqa
	{\cal W}_{21}^{+,1}(y_1,y_2)=-\frac{y_1^{-1}b(y_1^2)Z_1(y_1)}{y_2^{-1}b(y_2^2)Z_1(y_2)}
	{\cal W}_{22}^{+,1}(y_1,y_2)
	\eeqa
	and from conditions (ii) it follows
	\beqa
	{\cal W}_{12}^{+,1}(y_1,y_2)=\frac{y_2^{-1}b(y_2^2)}{Z_1(y_1)}
	\left(\frac{q}{g(y_1,y_2)}C^{+,1}-\frac{Z_1(y_2)}{y_1^{-1}b(y_1^2)}{\cal W}_{11}^{+,1}(y_1,y_2)\right)
\,
	\eeqa
	where $C^{+,1}$ is a free scalar and ${\cal W}_{22}^{+,1}(y_1,y_2)$, ${\cal W}_{11}^{+,1}(y_1,y_2)$
	are arbitrary. Then, one finds the following nonzero entries of $\widetilde{\cal M}^{+,1}$:
		\beqa
		\widetilde{\cal M}^{+,1}_{11}=-\frac{1}{q^4r_0^4\tb\tb^*(\tb^\diamond)^2} y_1^{-1} b(y_1^2)Z_2(y_1)C^{+,1}\,,
		\quad
		\widetilde{\cal M}^{+,1}_{12}=\frac{1}{q^4r_0^4\tb\tb^*(\tb^\diamond)^2} y_2^{-1} b(y_2^2)Z_2(y_2)C^{+,1}\,.\label{M12p1}
		\eeqa
		Comparing $\det\widetilde{\cal M}^{+,1}_{[2,1]}=\widetilde{\cal M}^{+,1}_{12}$
		to (\ref{X1p}) we finally identify $\psi^+_1$:
		\beqa
		\psi_1^+= -\frac{r_0\tb^\diamond}{C^{+,1}}\ .\label{psi1p}
		\eeqa

\section{Concluding remarks}
 In this paper, the Leonard pair/algebraic Bethe ansatz correspondence  initiated in \cite{BP19,BP22} is further studied. The main results are explicit expressions for normalized scalar products of on-shell versus off-shell Bethe states in terms of $q$-Racah polynomials, see Theorem \ref{prop:scalprod}. This extends the results of \cite{BP22} obtained for the on-shell versus on-shell Bethe states. Upon  specializations,  the normalized scalar products here computed are shown to satisfy a Belliard-Slavnov system of linear equations, see Theorem \ref{thm:BSscal}. This implies  the existence of a determinant formula for $q$-Racah polynomials, which involves Bethe roots satisfying a Bethe equation of inhomogeneous type; See Corollary \ref{qRacdet}. Also, a set of relations that determine the solutions of Bethe equations of inhomogeneous type in terms of solutions of Bethe equations of homogeneous type is obtained, see Proposition \ref{relinhhom}. \vspace{1mm}

The results obtained show that  the finite dimensional representation theory of the Askey-Wilson algebra (i.e. the theory of Leonard pairs and triples) gives: (i) a way to compute normalized scalar products of on-shell and off-shell Bethe states that solve a class of Belliard-Slavnov systems of linear equations; (ii) a way to relate a class of inhomogenous Bethe equations to homogenous Bethe equations by specialization of scalar products. Reciprocally, the approach followed leads to (iii) a new interpretation of   discrete ($q$-Racah) orthogonal polynomials of the Askey-scheme as determinants of matrices associated with Bethe roots of inhomogeneous type.\vspace{1mm}

Some perspectives are now presented.
Firstly, it is clear that the Leonard pair/algebraic Bethe ansatz correspondence for the Racah case ($q=1$)  can be completed along the same lines  starting from the material  in \cite{Nico1,Nico2}.
Secondly, it is expected that the Leonard pair/algebraic Bethe ansatz correspondence generalizes to a tridiagonal pair/algebraic Bethe ansatz correspondence. Indeed, 
it  is known that the Askey-Wilson algebra with generators $\tA,\tA^*$ is a quotient of the $q$-Onsager algebra $O_q$ with generators $\tW_0,\tW_1$. From the point of view of the algebraic Bethe ansatz, it is possible to show that the exchange relations (\ref{comBdBd})-(\ref{comDdCd}) are solved by dynamical operators expressed in terms of  $\tW_0,\tW_1$. Now, recall that irreducible finite dimensional representations of $O_q$ are classified according to the theory of tridiagonal pairs of $q$-Racah type \cite{IT09} that generalizes the theory of Leonard pairs.  Similarly to the Leonard pair/algebraic Bethe ansaz correspondence, a tridiagonal pair/algebraic Bethe ansatz correspondence is thus expected. From the point of view of representation theory and special functions, this correspondence should lead to some interesting results. For instance, for a tridiagonal pair of $q$-Racah type $(\pi(\tW_0),\pi(\tW_1))$ with $\pi: O_q \rightarrow {\rm End}(\cV)$, a characterization of the entries of the transition matrices relating certain eigenbases of $\pi(\tW_0)$ to certain eigenbases of $\pi(\tW_1)$ in terms of special functions is an open problem - see however \cite{BVZ16} for some examples, and \cite{CGT} for the case of tridiagonal pairs of type II. Within a tridiagonal pair/algebraic Bethe ansatz correspondence,  we expect the special functions of interest - generalizing the $q$-Racah polynomials - can be computed in terms of normalized scalar products of on-shell Bethe states.
From the point of view of quantum integrable models, a tridiagonal pair/algebraic Bethe ansatz correspondence should also give new insights for the open XXZ spin chain with integrable boundary conditions or certain partition functions of the six-vertex model generalizing \cite{BPS24}. In particular, for the case of special right (or left) boundary conditions for the open XXZ spin chain, the spectral problem for the Hamiltonian should admit a solution either associated with Bethe equations of homogeneous type, or associated with Bethe equations of inhomogenous type. In both cases, the on-shell Bethe states diagonalizing the Hamiltonian should  find an interpretation in terms of eigenvectors of   $\pi(\tW_0)$ (or $\pi(\tW_1)$).  Related scalar products should also find an interpretation in terms of special functions generalizing the $q$-Racah polynomials. 

Some of these problems will be discussed elsewhere. 

\vspace{0.2cm}

\noindent{\bf Acknowledgments:} We thank S. Belliard for discussions and N. Crampé for comments on the manuscript.
P.B.  is supported by C.N.R.S. R.A.P. thanks the Institut Denis-Poisson for hospitality.

\begin{appendix}

\section{Coefficients of the exchange relations}\label{apA}
Let $\alpha,\beta \in {\mathbb C}^*$ or  $\alpha=0,\beta \in {\mathbb C}^*$ or  $\alpha \in {\mathbb C}^*,\beta=0$. The coefficients of the exchange relations (\ref{comBdBd})-(\ref{comDdCd})  are given by
\ben
\nonumber&&f(u,v)= \frac{b(qv/u)b(uv)}{b(v/u)b(quv)}\,,\quad h(u,v)= \frac{b(q^2uv)b(qu/v)}{b(quv)b(u/v)},\\
\nonumber&&g(u,v,m)=\frac{\gamma(u/v,m+1)}{\gamma(1,m+1)}\frac{b(q) b\left(v^2\right)}{b\left(q v^2\right) b\left(\frac{u}{v}\right)},
\quad w(u,v,m)=-\frac{\gamma(uv,m)}{\gamma(1,m+1)}\frac{b(q)}{b(q u v)},\\
\nonumber &&k(u,v,m)=\frac{ \gamma(v/u,m+1)}{\gamma(1,m+1)}\frac{b(q) b\left(q^2 u^2\right)}{b\left(q u^2\right) b\left(\frac{v}{u}\right)}, \quad
n(u,v,m)=\frac{\gamma(1/(uv),m+2)}{\gamma(1,m+1)} \frac{b(q) b\left(v^2\right) b\left(q^2 u^2\right)}{b\left(q u^2\right) b\left(q v^2\right)
	b(q u v)}\,\,,
\een
\ben
&&\qquad q(u,v,m)=\frac{ \gamma \left(u/v,m\right)b(q) b(u v)}{\gamma (1,m+1) b\left(u/v\right) b(q u v)}\,,\quad
r(u,v,m)=\frac{b(q) b\left(u^2\right) \gamma (1,m) \gamma \left(v/u,m+1\right)}{\gamma (1,m+1)^2 b\left(q
	u^2\right) b\left(v/u\right)}\,,\nonumber\\
\nonumber&&s(u,v,m)=\frac{b(q)^2 b\left(u^2\right) \gamma \left(v^{-2},m+1\right) \gamma \left(v/u,m+1\right)}{\gamma
	(1,m+1)^2 b\left(q u^2\right) b\left(q v^2\right) b\left(\frac{v}{u}\right)}\,,
\quad
x(u,v,m)=\frac{b(q) b\left(u^2\right) b\left(q u/v\right) \gamma \left(1/(uv),m+1\right)}{\gamma (1,m+1)
	b\left(q u^2\right) b\left(u/v\right) b(q u v)}\,,\\
\nonumber&&y(u,v,m)=-\frac{b(q)^2 \gamma \left(v^{-2},m+1\right) \gamma (u v,m)}{\gamma (1,m+1)^2 b\left(q v^2\right) b(q u v)}\,,
\quad
z(u,v,m)=-\frac{b(q) \gamma (1,m) \gamma (u v,m)}{\gamma (1,m+1)^2 b(q u v)}\,.\nonumber
\een
where
\ben
\gamma^\epsilon (u,m)= \alpha ^{\frac{1-\epsilon }{2}} \beta ^{\frac{\epsilon +1}{2}}
q^{-m} u -\alpha ^{\frac{\epsilon +1}{2}} \beta ^{\frac{1-\epsilon }{2}}
q^m u^{-1} \ .\label{gam}
\een

\section{Example of a Leonard triple and eigenbases for $s=1/2$}\label{apB}
\subsection{A Leonard triple from a spin-$s$ representation of $U_q(sl_2)$} 
Recall the quantum algebra $U_q(sl_2)$ with generators $S_\pm,s_3$ and defining relations:
\ben\label{uqsl2}
\left[s_3,S_{\pm}\right]=\pm S_{\pm}\,,\quad \left[S_+,S_-\right]=\frac{q^{2s_3}-q^{-2s_3}}{q-q^{-1}}\,.
\een
Now let $(\pi,{\cal V})$ denote the irreducible (spin-$s$) representation of $U_q(sl_2)$  of dimension $\dim({\cal V})=2s+1$  on which the generators $\{q^{\pm s_3},S_\pm\}$ act as:
\ben
\pi (q^{\pm s_3}) = \sum_{k=1}^{2s+1}q^{\pm ( s+1-k)}E_{kk}\,,\   \pi (S_-) = \sum_{k=1}^{2s}  \sqrt{[k]_q[2s+1-k]_q} E_{k+1k}\, , \  \pi (S_+) = \sum_{k=1}^{2s}  \sqrt{[k]_q[2s+1-k]_q} E_{kk+1}\, ,\nonumber
\een
where the matrix $E_{ij}$ has a unit at the intersection of the $i-$th row and the $j-$th column, and all other entries are zero. The following gives an example of Leonard triple $A,A^*,A^\diamond$ of $q$-Racah type satisfying the Askey-Wilson relations (\ref{AWtriple}):
\beqa
\quad A &=&  - (q-q^{-1})r_0^{-1/2}(\tb^\diamond)^{1/2} q^{s+1/2} \pi(S_+q^{s_3}) + (q-q^{-1})r_0^{-1/2}(\tc^\diamond)^{1/2} q^{-s-1/2} \pi(S_-q^{s_3}) + \theta_{s} \pi(q^{2s_3})\ ,\label{AtripleUqsl2}\nonumber\\
 \quad A^*&=& - (q-q^{-1})r_0^{-1/2}(\tc^\diamond)^{1/2} q^{-s-1/2} \pi(S_+q^{-s_3}) + (q-q^{-1})r_0^{-1/2}(\tb^\diamond)^{1/2} q^{s+1/2} \pi(S_-q^{-s_3}) + \theta^*_{s} \pi(q^{-2s_3})\label{AstartripleUqsl2}\ .\nonumber
\eeqa
The expression for $A^\diamond$ is computed using (\ref{Adiamond}). One gets\footnote{For simplicity,
we choose the branch cut $\sqrt{1/(r_0^2\tb^\diamond)}=1/(r_0\sqrt{\tb^\diamond})$.}:
\beqa
&&A^\diamond =  \left(q^{2s-1}\tb^\diamond-q^{1-2s}\tc^\diamond\right)\pi\left(\frac{q^{2s_3}-q^{-2s_3}}{q+q^{-1}}\right)+
r_0^{-1}(q-q^{-1})^2 \pi(S_-^2)
-
\frac{(q-q^{-1})^2}{q+q^{-1}}\theta_s^\diamond \pi( S_-S_+)
\nonumber\\&&\qquad
+(q-q^{-1})r_0^{1/2}
\pi(S_-)\left(q^{-s+1/2}(\tc^\diamond)^{1/2}\theta_s^*\pi(q^{-s_3})
+q^{s-1/2}(\tb^\diamond)^{1/2}\theta_s \pi(q^{s_3})\right)\nonumber\\&&\qquad
+
\frac{r_0}{q+ q^{-1}}\left(\theta_s\theta_s^*+\frac{\omega^{\{.,*,\diamond\}} }{(q-q^{-1})^2}
\right){\mathbb I} \ .\label{AdiamtripleUqsl2}\nonumber
\eeqa
\begin{rem}
With respect to the conventions chosen in \cite{BP19}, we have the following identifications:
\beqa
&&\tb \rightarrow \frac{1}{2}e^{-\mu}q^{\frac{1}{2}(\nu+\nu')-2s}\ , \quad
\tb^* \rightarrow \frac{1}{2}e^{\mu'}q^{\frac{1}{2}(\nu+\nu')-2s}\ ,
\quad \tb^\diamond  \rightarrow \frac{1}{2}v^2q^{\frac{1}{2}(\nu+\nu')-2s} \ , \nonumber\\
&&\tc \rightarrow \frac{1}{2}e^{\mu}q^{\frac{1}{2}(\nu+\nu')+2s}\ ,
\quad \tc^* \rightarrow \frac{1}{2}e^{-\mu'}q^{\frac{1}{2}(\nu+\nu')+2s}\ , \quad
\tc^\diamond  \rightarrow \frac{1}{2}v^{-2}q^{\frac{1}{2}(\nu+\nu')+2s}\ , \nonumber\\
&& r_0 \rightarrow 2q^{-\frac{1}{2}(\nu+\nu')}\ ,
\quad v \rightarrow \zeta^{-1}\ ,\quad \chi \rightarrow -\frac{1}{2}q^\nu(q^2-q^{-2})\ ,
\eeqa
where $\{\mu,\mu',\nu,\nu',v\}$ are complex parameters.
\end{rem}

\subsection{Eigenvectors and their duals for $s=1/2$.} In this
subsection, as a simple example, the eigenvectors of the Leonard triple $(A,A^*,A^\diamond)$ 
and respective normalizations for the case $s=1/2$ are given.

From (\ref{AtripleUqsl2}), the eigenvectors of $A$ are
\beqa
|\theta_0\rangle=\mathfrak{n}_0\left(
\begin{array}{c}
	\frac{q^{-1/2}r_0^{-1/2}(\tb^\diamond)^{1/2}}{\tb}\\
	1 \\
\end{array}
\right)\,,\quad
|\theta_1\rangle=\mathfrak{n}_1
\left(
\begin{array}{c}
	q^{3/2}r_0^{3/2}(\tb^\diamond)^{1/2}\tb \\
	1 \\
\end{array}
\right) \ ,
\eeqa 
where $\mathfrak{n}_{0,1}$ are some scalars.
 The action
(\ref{tridAstar}) fixes the ratio
\beqa
\frac{\mathfrak{n}_1}{\mathfrak{n}_0}=-\frac{\tb^\diamond+qr_0\tb\tb^*}
{qr_0\tb(\tb^*+qr_0\tb\tb^\diamond)}\ .
\eeqa
For the dual eigenvectors, one has
\beqa
\langle \theta_0|=\mathfrak{m}_0\left(
\begin{array}{cc}
	1\,, & -q^{3/2}r_0^{3/2}(\tb^\diamond)^{1/2}\tb\\
\end{array}
\right)\ ,
\quad
\langle \theta_1|=\mathfrak{m}_1
\left(
\begin{array}{cc}
	1\,, & -\frac{q^{-1/2}r_0^{-1/2}(\tb^\diamond)^{1/2}}{\tb}\\
\end{array}
\right)\ ,
\eeqa
where $\mathfrak{m}_{0,1}$ are some scalars. The dual action (\ref{tridAstardual}) fixes
\beqa
\frac{\mathfrak{m}_1}{\mathfrak{m}_0}=
\left(\frac{\xi_1}{\xi_0}\right)
\frac{qr_0\tb(\tb^*+qr_0\tb \tb^\diamond)}{(\tb^\diamond+qr_0\tb \tb^*)}\, .
\eeqa
Choosing
\beqa
\mathfrak{m}_0=\frac{q^{1/2}r_0^{1/2}\tb\xi_0}
{(\tb^\diamond)^{1/2}(1-q^2r_0^2\tb^2)\mathfrak{n}_0}\ ,
\eeqa
one has - see \eqref{orthcond}:
\beqa
\langle \theta_0|\theta_0\rangle=\xi_0\,,\quad \langle \theta_1|\theta_1\rangle=\xi_1\ .
\eeqa

From (\ref{AstartripleUqsl2}), the eigenvectors of $A^*$ are
\beqa
|\theta_0^*\rangle=\mathfrak{n}_0^*\left(
\begin{array}{c}
	-\frac{q^{1/2}r_0^{1/2}\tb^*}{(\tb^\diamond)^{1/2}}\\
	1 \\
\end{array}
\right)\,,\quad
|\theta_1^*\rangle=\mathfrak{n}_1^*
\left(
\begin{array}{c}
	-\frac{1}{q^{3/2}r_0^{3/2}(\tb^\diamond)^{1/2}\tb^*} \\
	1 \\
\end{array}
\right) \ ,
\eeqa 
where $\mathfrak{n}^*_{0,1}$ are some scalars. The action
(\ref{tridAstar2}) fixes
\beqa
\frac{\mathfrak{n}_1^*}{\mathfrak{n}_0^*}=
-\frac{qr_0\tb^*(\tb^\diamond+qr_0\tb \tb^*)}{(\tb+qr_0\tb^*\tb^\diamond)}
 \ .
\eeqa
For the dual vectors, one has
\beqa
\langle \theta_0^*|=\mathfrak{m}_0^*\left(
\begin{array}{cc}
	1\,, & \frac{1}{q^{3/2}r_0^{3/2}(\tb^\diamond)^{1/2}\tb^*}\\
\end{array}
\right)\ ,\quad
\langle \theta_1^*|=\mathfrak{m}_1^*
\left(
\begin{array}{cc}
	1\,, & \frac{q^{1/2}r_0^{1/2}\tb^*}{(\tb^\diamond)^{1/2}}\\
\end{array}
\right)\ ,
\eeqa
where $\mathfrak{m}_{0,1}^*$ are some scalars. The dual action (\ref{tridAstar2dual}) fixes
\beqa
\frac{\mathfrak{m}_1^*}{\mathfrak{m}_0^*}=
\left(\frac{\xi_1^*}{\xi_0^*}\right)
\frac{(\tb+qr_0\tb^* \tb^\diamond)}{qr_0\tb^*(\tb^\diamond+qr_0\tb \tb^*)}
\ .
\eeqa
Choosing
\beqa
\mathfrak{m}_0^*=\frac{q^{3/2}r_0^{3/2}(\tb^\diamond)^{1/2}\tb^*\xi_0^*}
{(1-q^2r_0^2(\tb^*)^2)\mathfrak{n}_0^*}\ ,
\eeqa
one has
\beqa
\langle \theta_0^*|\theta_0^*\rangle=\xi_0^*\,,\quad \langle \theta_1^*|\theta_1^*\rangle=\xi_1^*\ .
\eeqa

From (\ref{AdiamtripleUqsl2}), the eigenvectors of $A^\diamond$ are
\beqa
|\theta_0^\diamond\rangle=\mathfrak{n}_0^\diamond\left(
\begin{array}{c}
	0\\
	1 \\
\end{array}
\right)\,,\quad
|\theta_1^\diamond\rangle=\mathfrak{n}_1^\diamond
\left(
\begin{array}{c}
 \frac{q^{1/2} r_0^{1/2} \tb \tb^* \left(q^2 r_0^2 (\tb^{\diamond })^2-1\right)}
 {(\tb^{\diamond})^{1/2} \left(q^2 r_0^2 \tb \tb^* \left(\tb^*+q r_0 \tb \tb^{\diamond }\right)+\tb+q r_0 \tb^{\diamond } \tb^*\right)} \\
 1 \\
\end{array}
\right) \ ,
\eeqa 
where $\mathfrak{n}_{0,1}^\diamond$ are some scalars. The action
(\ref{tridAc}) for $(b,c)=(*,\diamond)$ fixes the ratio
\beqa
\frac{\mathfrak{n}_1^\diamond}{\mathfrak{n}_0^\diamond}= 
-\frac{q^2 r_0^2 \tb \tb^* \left(\tb^*+q r_0 \tb \tb^{\diamond }\right)+
\tb+q r_0 \tb^{\diamond } \tb^*}{q r_0 \left(\tb^*+q r_0 \tb \tb^{\diamond }\right)
\left(\tb^{\diamond }+q r_0 \tb \tb^*\right)}\ .
\eeqa

The transitions (\ref{transeqtrip}) and (\ref{ctob}) for $(a,b,c)=(.,*,\diamond)$
determine the ratios
\beqa
\frac{\mathfrak{n}_0^*}{\mathfrak{n}_0^\diamond}=\frac{1}{g_0(1-q^2r_0^2(\tb^*)^2)}\ ,
\quad \frac{\mathfrak{n}_0}{\mathfrak{n}_0^\diamond}=
-\frac{q r_0 \tb \left(\tb^*+q r_0 \tb \tb^{\diamond }\right)}{g_0 \tb^{\diamond } \left(q^2 r_0^2 \tb^2-1\right) (q^2 r_0^2 \left(\tb^*\right)^2-1)}
\eeqa
while (\ref{ctoa}) gives (\ref{fgh0}) for $s=1/2$:
\beqa
f_0h_0^{-1}g_0^{-1}=
-\frac{(1-q^2 r_0^2 \tb^2) (1-q^2 r_0^2 (\tb^{\diamond })^2) (1-q^2 r_0^2 (\tb^*)^2)}
{\left(\frac{q r_0 \tb \tb^{\diamond }}{\tb^*}+1\right) \left(\frac{q r_0 \tb \tb^*}{\tb^{\diamond }}+1\right) \left(\frac{q r_0 \tb^{\diamond } \tb^*}{\tb}+1\right)}\ .
\eeqa

\subsection{Examples of scalar products}
For the normalized scalar products 
\beqa
\frac{\langle \theta_M |\theta_N^* \rangle}{\langle \theta_M |\theta_M \rangle}\ ,\label{exa}
\eeqa
we obtain:
\beqa
\frac{\langle \theta_0|\theta_0^* \rangle}{\langle \theta_0 |\theta_0\rangle}=1,
\quad
\frac{\langle \theta_0|\theta_1^* \rangle}{\langle \theta_0 |\theta_0\rangle}=-
\frac{(\tb^\diamond+qr_0\tb \tb^*)(1+q^3r_0^3\tb \tb^*\tb^\diamond)}
{qr_0(\tb+qr_0\tb^*\tb^\diamond)(\tb^*+qr_0\tb \tb^\diamond)},\quad
\frac{\langle \theta_1|\theta_0^* \rangle}{\langle \theta_1 |\theta_1\rangle}=1, \quad
\frac{\langle \theta_1|\theta_1^* \rangle}{\langle \theta_1 |\theta_1\rangle}=-1\ .\label{ratex}
\eeqa

From those expressions, the $q$-Racah polynomials can be computed (see \cite[Theorem 14.6 and 15.6]{T03}):
\beqa
R^{\{.,\diamond\}}_M(\theta^*_N) &=& \frac{\langle \theta_M |\theta_N^* \rangle}{\langle \theta_0 |\theta^*_N \rangle}\frac{\langle \theta_0 |\theta_0 \rangle}{\langle \theta_M |\theta_M \rangle}=
\frac{\langle \theta_N^* |\theta_M \rangle}{\langle \theta_0^* |\theta_M \rangle}
\frac{\langle \theta_0^* |\theta_0^* \rangle}{\langle \theta_N^* |\theta_N^* \rangle}
\ .\label{Rac1bis}
\eeqa
 Using \eqref{ratex}, it follows $ R^{\{.,\diamond\}}_0(\theta^*_0)= R^{\{.,\diamond\}}_0(\theta^*_1)= R^{\{.,\diamond\}}_1(\theta^*_0)=1$ and
 \beqa\label{qracahexampleapp}
R^{\{.,\diamond\}}_1(\theta^*_1)&=&\frac{\langle \theta_1 |\theta_1^* \rangle}{\langle \theta_0 |\theta^*_1 \rangle}\frac{\langle \theta_0 |\theta_0 \rangle}{\langle \theta_1 |\theta_1 \rangle}=
\frac{\langle \theta_1^*|\theta_1\rangle \langle \theta_0^*|\theta_0^*\rangle}
{\langle \theta_0^*|\theta_1\rangle \langle \theta_1^*|\theta_1^*\rangle}=
\frac{qr_0(\tb+qr_0\tb^*\tb^\diamond)(\tb^*+qr_0\tb \tb^\diamond)}
{(\tb^\diamond+qr_0\tb\tb^*)(1+q^3r_0^3\tb \tb^*\tb^\diamond)}\ 
\non\\ 
&=&1-\frac{(q^{-2},\frac{\tb}{\tc}q^2;q^2)_1q^2}{(\frac{-\tb\tc^\diamond}{\tc^*}r_0q,-\frac{\tb^*\tb^\diamond}{\tc}r_0q^{3},q^{-2};q^2)_1(q^2;q^2)_1\tc^*}\left(\theta_1^* - \tb^* - \tc^*\right)\ ,
\eeqa
where the second line matches 
with (\ref{qracahpoly}) for $M=N=1$ and $s=1/2$.

\end{appendix}

\vspace{5mm}

\end{document}